\documentclass[11pt]{article}
\usepackage{graphicx,color,epsf}

\textwidth=6.0in \textheight=8.5in \evensidemargin=0in \oddsidemargin=0in \topmargin=0in \topskip=0pt \baselineskip=12pt
\parskip=8pt
\parindent=0em

\newcommand{\tr}[1]{ \begin{list}{}{\setlength{\leftmargin}{#1em}} \item}
\newcommand{\tn}{ \item}
\newcommand{\tl}{ \end{list}}
\newcommand{\bc}{\begin{center} \large \bf}
\newcommand{\ec}{\end{center}}

\newenvironment{proof}%
           {\par\noindent
           {\bf Proof\ \ }}%
           {\raisebox{.35em}{\fbox{\rule{0em}{0em}}}}

\newtheorem{Thm}{Theorem}[section]
\newtheorem{Lem}[Thm]{Lemma}
\newtheorem{Prop}[Thm]{Proposition}
\newtheorem{Cor}[Thm]{Corollary}
\newtheorem{Conj}[Thm]{Conjecture}
\newtheorem{Def}[Thm]{Definition}

\newcommand{\icom}[1]{{}}
\newcommand{\B}{{\cal B}}

\def\blackslug{\hbox{\kern1pt\vrule height6pt width4pt  depth1pt\kern1pt}}

\def\edp{\penalty 500\hbox{\quad\blackslug}\ifmmode\else\par
    \vskip4.5pt plus3pt minus2pt\fi}

\def\qed{\penalty 500\hbox{\quad\blackslug}\ifmmode\else\par
    \vskip4.5pt plus3pt minus2pt\fi}

\title{Minimum Opaque Covers for Polygonal Regions}
\author{J. Scott Provan \and Marcus Brazil \and Doreen A. Thomas \and Jia F. Weng}

\date{\today}

\begin{document}

\maketitle
{}\let\thefootnote\relax\footnote{
{}\\
J.S.~Provan\\
Department of Statistics and Operations Research, University of North Carolina, Chapel Hill, NC 27599-3260 USA\\
E-mail: Scott\_Provan@UNC.edu\\
{}\\
M.~Brazil (Corresponding Author)\\
Department of Electrical and Electronic Engineering, The University of Melbourne, VIC 3010, Australia\\
E-mail: brazil@unimelb.edu.au\\
{}\\
D.A.~Thomas, J.F.~Weng\\
Department of Mechanical Engineering, The University of Melbourne, VIC 3010, Australia
}

\centerline{\bf Abstract} {\small The {\em Opaque Cover Problem}\/ (OCP), also known as the {\em Beam Detector Problem}, is the problem of finding, for a set $S$ in Euclidean space, the minimum-length set $F$ which intersects every straight line passing through $S$.  In spite of its simplicity, the problem remains remarkably intractable.  The aim of this paper is to establish a framework and fundamental results for minimum opaque covers where $S$ is a polygonal region in two-dimensional space. We begin by giving some general results about opaque covers, and describe the close connection that the OCP has with the Point Goalie Problem.  We then consider properties of graphical solutions to the OCP when $S$ is a convex polygonal region in the plane.}

\section{Introduction}

Opaque covers, also known as beam detectors, were studied as far back as 1916 by Mazurkiewicz \cite{Mazurkiewicz}, where they were used by astronomers to describe properties of nebulae.  For a century they have continued to be a topic of interest and intrigue.  This paper considers the two-dimensional form of the problem.  Given a set $S$ in the plane, an {\em opaque cover (OC)} for $S$ is any set $F$ having the property that any line in the plane intersecting $S$  also intersects $F$.  The problem of finding an opaque cover of minimum length for any given planar set $S$ is known as the \emph{Opaque Cover Problem (OPC)}. Intuitively, an opaque cover forms a barrier that makes it impossible to see through $S$ from any vantage point; or equivalently, in the case of the term {\em beam detector}, it can detect any beam of light passing through $S$.  Various restrictions can be placed on $F$ regarding its form, connectivity, and relationship to $S$.  The problem of finding an opaque cover of minimum length appears to have been first studied by Croft in 1969 \cite{Croft}.  An excellent discussion of the problem, together with examples and observations, appears in \cite{Faber}, and an extensive bibliography of related papers can be found in \cite{Dumitrescu}. The problem has also been popularlised in a recent book by Stewart \cite{Stewart}.

In 1987, V.~Akman \cite{Akman} attempted to solve the problem by giving a characterization of an OCP solution along with an associated polynomial-time algorithm to find such a solution.  That paper, however, has been shown to be generally flawed.  In particular, Shermer \cite{Shermer} gave a counterexample to the general structure Akman purported to characterize OCP solutions, and subsequently conjectured his own characterization of OCP solutions.  As this paper shows, that conjecture is also false.  In spite of the fact that all conjectured OCP solutions are piecewise-linear, it is not known whether optimal opaque covers are even measurable, let alone piecewise-linear, and the problem has remained elusive and remarkably intractable.

The aim of this paper is to establish a fundamental framework  for research into OCPs, particularly for the case where the input set $S$ is a polygonal region.  This paper establishes a basic theory of minimum opaque covers, including a number of local optimality conditions.  We use these conditions to construct examples of OCP solutions with previously undiscovered structure that are currently the smallest length solutions for their regions.  At the same time, we identify some of the key open research questions that need to be addressed before further substantial progress can be made on this problem. It is our hope that this paper will provide a useful basis for future work in this area.

The organisation of this paper is as follows. In Section~2 we set up a formal mathematical framework for the study of OCs, and establish some fundamental properties. The section contains a key conjecture, The Graphical Conjecture, which states that a minimum OC for a polygonal region is composed of a finite number of straight-line segments; in light of this conjecture we assume throughout the remainder of the paper that all minimum OCs have this property. In Section~3 we describe the close connection that the OCP has with the Point Goalie Problem. Section~4 is a study of structural and geometric properties of minimum connected OCs. The results in this section are then applied to understanding properties of each connected component in solutions to the general OCP, studied in Section~5. Section~6 concludes the paper and describes some key open questions in this area.

\section{Basic properties of opaque covers}

Let $S$ be a compact connected set in $\Re^2$. An {\em opaque cover (OC)} for $S$ is defined to be a set $F$ in $\Re^2$ having the property that every line in the plane intersecting $S$  also intersects $F$. Informally, the Opaque Cover Problem (OCP) aims to find an opaque cover $F$ for $S$ such that $F$ has the minimum possible ``length''.  This term itself needs to be made precise, since we no nothing about the nature of the set $F$. A suitable choice of measure, proposed in \cite{Faber84} and elsewhere, is the following.

\begin{Def} For set $F\in \Re^2$, the {\em 1-dimensional Hausdorff measure}\/ of $S$ is defined by
\begin{equation}
\label{Hausdorff}
\lambda_1(F)= \lim_{\delta\rightarrow 0}\left( \inf \left\{\sum_{i=1}^\infty \mbox{diam}(E_i) \mid \bigcup_{i=1}^\infty E_i = F, \mbox{diam}(E_i) \leq \delta, i=1,2,\ldots  \right\} \right)
\end{equation}
where $\mbox{diam}(E)$ is the supremum of the distance between any two points of $E$.
\end{Def}

Note that this matches the normal definition of Euclidean length when it is defined, but exists for any set in $\Re^2$.

\begin{Def} The {\em Opaque Cover Problem }\/ (OCP).\\
\textbf{Given:} a compact connected set in $S$ in $\Re^2$.\\
\textbf{Find:} a set $F$ of minimum 1-dimensional Hausdorff measure, such that $F$ is an OC for $S$.
\end{Def}

There are a number of versions of the OCP in the literature that depend upon restrictions on the structure of $F$ as an OC:
\tr 4
[\bf interior OCs:]  $F$ is required to lie entirely inside $S$.
\tn
[\bf rectifiable OCs:] $F$ is required to be composed of a finite number of rectifiable curves.
\tn
[\bf graphical OCs:] $F$ is required to be composed of a finite number of straight-line segments.
\tn
[\bf connected OCs:]  $F$ is required to be graphical and connected.
\tn
[\bf single-path OCs:]  $F$ is required to be graphical and a single path.
\tl

Note that single-path $\Rightarrow$ connected $\Rightarrow$ graphical $\Rightarrow$ rectifiable, and that any of these four restrictions can be combined with the  interior restriction. For clarity, we sometimes refer to a type of OC as \emph{general} if it is not restricted to being interior.

In spite of the apparent generality of some of these restrictions, {\em none}\/ of these properties is known to hold for general unrestricted minimum OCs for {\em any}\/ set $S$ except in the following case, where the proof is obvious.

\begin{Lem}\label{lem1D}
If $S$ is a straight line segment, then the minimum OC for $S$ is the set $S$ itself.
\end{Lem}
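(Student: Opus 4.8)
The plan is to prove the matching lower bound $\lambda_1(F)\ge\lambda_1(S)$ for every opaque cover $F$ of $S$; since $S$ is itself an opaque cover for $S$ (every line meeting $S$ trivially meets $S$) and $\lambda_1(S)$ equals the ordinary length of the segment, this shows $S$ is a minimum OC.

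Let $\ell$ be the line carrying the segment $S$ and let $\pi\colon\Re^2\to\ell$ be orthogonal projection onto $\ell$. The first step is a one-line geometric observation: $S\subseteq\pi(F)$. Indeed, for each $p\in S$ the line $n_p$ through $p$ perpendicular to $\ell$ meets $S$ (namely at $p$), so opacity of $F$ forces $n_p$ to meet $F$ at some point $q$; since $q\in n_p$ we have $\pi(q)=p$, whence $p\in\pi(F)$.

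The second step transfers this set inclusion into a measure inequality. Because $\pi$ is $1$-Lipschitz, any cover $\{E_i\}$ of $F$ with $\bigcup_i E_i=F$ and $\mathrm{diam}(E_i)\le\delta$ yields the cover $\{\pi(E_i)\}$ of $\pi(F)$ with $\bigcup_i\pi(E_i)=\pi(F)$ and $\mathrm{diam}(\pi(E_i))\le\mathrm{diam}(E_i)\le\delta$; plugging this into \eref{Hausdorff} gives $\lambda_1(\pi(F))\le\lambda_1(F)$. Monotonicity of $\lambda_1$ — equally immediate from \eref{Hausdorff}, since intersecting any admissible cover of $\pi(F)$ with $S$ produces an admissible cover of $S$ of no larger total diameter — together with Step~1 then gives
\[
\lambda_1(S)\ \le\ \lambda_1(\pi(F))\ \le\ \lambda_1(F),
\]
which is the desired bound.

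I do not expect a genuine obstacle here; the author is right that the argument is essentially immediate. The only point that actually uses the Hausdorff-measure formulation rather than naive Euclidean length is the fact that a $1$-Lipschitz map cannot increase $\lambda_1$, and this drops straight out of \eref{Hausdorff}. If one wanted the stronger assertion that $S$ is the \emph{unique} minimum OC, one would examine the equality cases in the chain above (which force $F$ to meet every perpendicular $n_p$ and to collapse onto $\ell$ under $\pi$ with no wasted length), but the statement as given asks only for optimality, so I would stop at the inequality.
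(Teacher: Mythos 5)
Your proposal is correct. The paper itself gives no argument for this lemma (it is declared obvious), but your projection argument is essentially the same one the paper uses to prove the more general Lemma~\ref{lembounds}: projecting an arbitrary cover $\{E_i\}$ of $F$ onto the line through $S$ and using that projection does not increase diameters, from which this lemma follows at once since $\mathrm{diam}(S)=\lambda_1(S)$ when $S$ is a segment and $S$ is trivially an OC of itself.
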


There are three more results that hold for general sets $S$ that we mention here.

\medskip

\noindent \textbf{Notation.} For any set $S$ let $\bar S$ represent the convex hull of $S$.

\medskip

\begin{Lem}\label{convex}
A set $F$ is an OC for $S$ if and only if it is an OC for $\bar{S}$.
\end{Lem}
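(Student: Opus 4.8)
The plan is to reduce the statement to a single geometric observation: a line meets $S$ if and only if it meets $\bar S$. Once that is in hand, the two notions of opaque cover literally refer to the same family of lines, so the equivalence is immediate.

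First I would dispose of the easy direction. Since $S\subseteq\bar S$, every line that intersects $S$ also intersects $\bar S$; hence if $F$ intersects every line meeting $\bar S$, it certainly intersects every line meeting $S$. Thus any OC for $\bar S$ is an OC for $S$, with no use of the hypotheses on $S$.

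For the converse the key step is to show that every line $\ell$ that meets $\bar S$ also meets $S$. I would argue by contradiction: suppose $\ell\cap S=\emptyset$. The line $\ell$ partitions $\Re^2$ into two open half-planes $H^+$ and $H^-$, which are disjoint, each convex, and together cover $\Re^2\setminus\ell\supseteq S$. Since $S$ is connected, it lies entirely in one of them, say $S\subseteq H^+$. A convex set contains the convex hull of any of its subsets, so $\bar S\subseteq H^+$, and therefore $\ell\cap\bar S=\emptyset$, contradicting the choice of $\ell$. Hence every line meeting $\bar S$ meets $S$, so a set $F$ intersecting every line meeting $S$ also intersects every line meeting $\bar S$; that is, any OC for $S$ is an OC for $\bar S$. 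Combining the two directions gives the claimed equivalence.

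I do not anticipate a genuine obstacle here. The one place any hypothesis is really used is the connectedness of $S$ in the separating-half-plane argument — without it the claim fails (e.g.\ for a two-point set, whose hull is the segment between them) — while compactness is not needed for this particular lemma. The only subtlety worth flagging in the writeup is that a line can meet $\bar S$ without obviously meeting $S$, and the half-plane argument is precisely what resolves that.
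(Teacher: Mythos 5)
Your proof is correct and follows essentially the same route as the paper: the easy direction from $S\subseteq\bar S$, and for the converse the observation that a line meeting $\bar S$ must meet $S$ because the connected set $S$ would otherwise lie in one open half-plane, forcing $\bar S$ into that half-plane too. The paper compresses this into the single remark that any line through a point of $\bar S\setminus S$ meets $S$ ``since $S$ is connected''; your half-plane argument is just that remark written out in full.
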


\begin{proof}
An OC for $\bar{S}$ is an OC of $S$ since
$S\subseteq \bar{S}$. For the converse, observe that if $p\in
\bar{S} \setminus S$ then any line through $p$ also passes
through $S$, since $S$ is connected. Hence, every OC of $S$ is
an OC of $\bar{S}$.
\end{proof}

By Lemma~\ref{convex} we can assume without loss of generality for general OCs that $S$ is convex. It is a nontrivial restriction on $S$ for interior OCs.

The proof of the following lemma is obvious.

\begin{Lem}\label{SinFbar}
If $F$ is an OC for $S$ then $S \subseteq \bar{F}$.
\end{Lem}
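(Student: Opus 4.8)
The plan is to argue the contrapositive: assuming there is a point $p\in S$ that does not lie in $\bar F$, I will exhibit a line through $p$, hence a line meeting $S$, that does not meet $F$, so that $F$ cannot be an OC for $S$.

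First I would appeal to the separating hyperplane theorem. Since $\bar F$ is convex, its closure $\overline{\bar F}$ is a closed convex set; taking $p\notin\overline{\bar F}$ (see the remark at the end), there is a closed half-plane $H$ with $F\subseteq\bar F\subseteq\overline{\bar F}\subseteq H$ and $p\notin H$. Write $\ell_0=\partial H$ for the line bounding $H$, and let $\ell$ be the line through $p$ parallel to $\ell_0$. Because $\ell$ is parallel to $\ell_0$, the signed distance from a point of $\ell$ to $\ell_0$ is constant along $\ell$ and equal to its value at $p$, which is nonzero and has the sign placing $p$ on the open side of $\ell_0$ away from $H$. Hence $\ell$ lies entirely in that open half-plane, so $\ell\cap H=\emptyset$ and in particular $\ell\cap F=\emptyset$.

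To conclude, note that $p\in\ell\cap S$, so $\ell$ is a line of the plane that intersects $S$; since $\ell\cap F=\emptyset$, the set $F$ is not an OC for $S$. This contradiction shows that no such $p$ exists, i.e. $S\subseteq\bar F$.

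There is really no obstacle here — which is why the statement is flagged as obvious — but one small point is worth noting. Strict separation of a point from a convex set $\bar F$ requires the point to lie outside $\overline{\bar F}$, so the argument as written yields $S\subseteq\overline{\bar F}$. This is immaterial: one may read $\bar F$ throughout as the closed convex hull (the usual convention, and the one relevant in all later sections, where once minimum OCs are taken to be graphical the set $\bar F$ is automatically closed), or alternatively check directly that the convex hull of an OC is already closed. Using the parallel line $\ell$ in place of the separating line $\ell_0$ itself is precisely the device that lets a single use of separation suffice, with no need to consider whether points of $F$ might happen to lie on $\ell_0$.
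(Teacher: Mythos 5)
The paper offers no argument for this lemma beyond declaring it obvious, and your separation argument is precisely the intended one: a point of $S$ lying strictly outside the (closed) convex hull of $F$ admits a line through it parallel to a separating line, which meets $S$ but misses $F$, contradicting opacity. Your closure caveat is correctly resolved by reading $\bar F$ as the closed convex hull (or by noting that once OCs are taken to be graphical, $F$ is compact and its hull is automatically closed), but the alternative suggestion that the convex hull of an \emph{arbitrary} OC is already closed should be dropped --- it fails in general, e.g.\ for unbounded OCs such as a full line together with a few extra segments, whose hull can omit boundary points; so the first resolution is the one to keep.
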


\begin{Lem}\label{lembounds}
Any OC $F$ for $S$ has length $\lambda_1(F) \geq {\mbox diam}(S)$.
\end{Lem}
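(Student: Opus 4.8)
The plan is to bound $\lambda_1(F)$ from below by composing $F$ with a length‑non‑increasing map whose image contains a segment of length $\mbox{diam}(S)$, namely orthogonal projection onto a diameter of $S$.

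First I would use compactness of $S$ to pick points $p,q\in S$ with $\|p-q\|=\mbox{diam}(S)$, and then invoke Lemma~\ref{convex} to replace $S$ by $\bar S$; this is harmless since $F$ is an OC for $S$ if and only if it is an OC for $\bar S$, and it has the effect of putting the entire segment $[p,q]$ inside the set. Now for each point $x\in[p,q]$, the line $\ell_x$ through $x$ orthogonal to $p-q$ meets $\bar S$ (it contains $x$), hence meets $F$; choose some $f_x\in\ell_x\cap F$. Writing $\pi$ for orthogonal projection onto the line through $p$ and $q$, we have $\pi(f_x)=x$ for every $x\in[p,q]$, so $\pi(F)\supseteq[p,q]$.

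It then remains to show $\lambda_1(F)\ge\lambda_1(\pi(F))$ and $\lambda_1([p,q])=\|p-q\|$, which together give $\lambda_1(F)\ge\mbox{diam}(S)$. The second equality is immediate from the definition of $\lambda_1$. For the first inequality, observe that $\pi$ is $1$-Lipschitz, so $\mbox{diam}(\pi(E))\le\mbox{diam}(E)$ for every set $E$; hence any cover $\{E_i\}$ of $F$ by sets of diameter at most $\delta$ produces a cover $\{\pi(E_i)\}$ of $\pi(F)$ by sets of diameter at most $\delta$ with $\sum_i\mbox{diam}(\pi(E_i))\le\sum_i\mbox{diam}(E_i)$. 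Taking the infimum over covers and then letting $\delta\to0$ yields $\lambda_1(F)\ge\lambda_1(\pi(F))\ge\lambda_1([p,q])=\|p-q\|=\mbox{diam}(S)$.

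The only real (and minor) obstacle is the measure‑monotonicity step: one must check that a countable cover of the segment $[p,q]$ by arbitrary sets has total diameter at least $\|p-q\|$, i.e.\ that $\lambda_1$ restricted to a segment is ordinary length. This is exactly where the one‑dimensional structure is used; the rest is routine bookkeeping about projections, and no measurability of the selection $x\mapsto f_x$ is needed since only the set containment $\pi(F)\supseteq[p,q]$ matters.
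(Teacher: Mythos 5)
Your proposal is correct and follows essentially the same route as the paper: both arguments project onto the segment realizing $\mathrm{diam}(S)$, note that every perpendicular line through that segment meets $S$ and hence $F$, and use that diameters do not increase under the $1$-Lipschitz projection, so any cover of $F$ yields a cover of the segment of no larger total diameter. Your phrasing via $\lambda_1(F)\ge\lambda_1(\pi(F))$ and your use of Lemma~\ref{convex} (instead of the paper's direct appeal to connectedness of $S$) are only cosmetic differences.
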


\begin{proof}
Let $E_1, E_2,\ldots$ be any collection of sets where the length of the union, $F$, is measured as indicated by Equation (\ref{Hausdorff}). Let $u$ and $v$ be
two points defining a diameter of $S$, and consider sweeping a line $l(x)$
though $S$ that is perpendicular to $uv$ at each point $x$ on $uv$. Each $E_i$ must
intersect all $l(x)$ with $x \in T_i$, where $T_i$ is the projection of $E_i$ onto $uv$. Then
$\mbox{\rm diam}(T_i) \leq \mbox{\rm diam}(E_i)$, and so
$$\sum_{i=1}^\infty
\mbox{\rm diam}(E_i) \geq
\sum_{i=1}^\infty
\mbox{\rm diam}(T_i) \geq |uv| = \mbox{\rm diam}(S).$$
and the inequality follows.
\end{proof}

At present, this seems to be the extent of results about OCPs that can be made without placing further assumptions on the structure of the problem.

Our key assumption involves restricting the nature of the set $S$; we will assume throughout this paper that the boundary  of $S$, $\partial S$, is   a {\em convex polygon}, with vertices denoted by $V_S$.  In view of Lemma~\ref{convex} this also allows us to work with  general polygonally-bounded regions in the case where $F$ is not required to be an interior OC, by replacing $S$ with the (polygonally-bounded) set $\bar S$.  The assumption that $\partial S$ is   a {\em convex polygon} also makes the following important structural conjecture plausible.
\begin{Conj} {\bf (The Graphical Conjecture)} A minimum OC is always graphical.
\end{Conj}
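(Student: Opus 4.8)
\emph{Sketch of a possible approach.} The plan is to establish the conjecture in three stages: existence of a minimum OC confined to a bounded region; reduction of the minimiser to a \emph{rectifiable} OC in the sense of Section~2 (finitely many rectifiable arcs); and finally a local straightening argument turning those arcs into straight segments. I expect the last stage --- specifically, ruling out curved ``envelope arcs'' --- to be the principal obstacle, which is precisely why the statement is recorded here only as a conjecture.

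\emph{Existence.} Since every line meeting the convex region $S$ crosses its boundary, $\partial S$ is itself an OC, so it suffices to minimise over OCs of length at most $\lambda_1(\partial S)$, the perimeter of $S$. A preliminary lemma, provable by an exchange argument that bounds the opacity a set can provide per unit of its own length by a quantity decaying with distance from $\bar S$, confines a minimising sequence $\{F_n\}$ to a fixed ball $B$. Blaschke selection then yields a Hausdorff-convergent subsequence $F_n \to F \subseteq B$, and $F$ is again an OC: for any line $\ell$ meeting $S$ one may pick $q_n \in F_n \cap \ell$, pass to a limit $q_n \to q \in \ell$, and observe that $q \in F$ because $F$ is the Hausdorff limit of the $F_n$. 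The remaining point --- $\lambda_1(F) \le \liminf_n \lambda_1(F_n)$ --- is the lower semicontinuity of $1$-dimensional Hausdorff measure, which holds for connected sets (Go\l\k{a}b's theorem) and, more generally, for sets with a uniformly bounded number of connected components, but \emph{fails} for arbitrary closed sets; so a genuine sub-step is to arrange that the $F_n$ have boundedly many components, which I would attempt by coalescing nearby components and bounding their number via the length budget together with a minimum-separation estimate forced by opacity.

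\emph{Rectifiability.} Let $F$ be a minimum OC and decompose $F = F_r \cup F_u$ into its $1$-rectifiable and purely $1$-unrectifiable parts (Besicovitch). By the Besicovitch--Federer projection theorem, $F_u$ projects to a Lebesgue-null set in $\lambda_1$-almost every direction, so the set ${\cal N}$ of lines meeting $F_u$ is null in the space of all lines; the aim is to replace $F_u$ by a shorter set blocking every line of ${\cal N}$, contradicting minimality if $\lambda_1(F_u)>0$. This is the delicate point, because ${\cal N}$, though null in measure, need not be topologically small: a short set piercing all of ${\cal N}$ does not exist for free. I would try to cover the relevant transversals in $\lambda_1$-a.e.\ direction by short intervals whose length, integrated against the direction, is controlled by $\lambda_1(F_u)$, and then build the replacement by a one-dimensional covering/selection argument, with care taken about uniformity in the direction. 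A parallel argument, using the length bound and a lower bound on the length of any opacity-essential arc, should rule out pathological (infinitely many tiny) components and show that the rectifiable remainder is a \emph{finite} union of arcs --- that is, a rectifiable OC in the sense of Section~2.

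\emph{Straightening.} Now $F$ is a finite union of rectifiable arcs. If some arc is not a straight segment, replacing a small curved subarc by the chord through its endpoints strictly decreases $\lambda_1(F)$; this can destroy the OC property only if some line is blocked by $F$ solely along the interior of that subarc. So the conjecture comes down to showing that a minimum OC has no nontrivial \emph{envelope arc} --- no curved arc serving as the unique blocker of an essential one-parameter family of lines tangent to it. I would attack this by an exchange argument, replacing such an arc (tangent to a family $\{\ell_t\}$) by a short polygonal chain whose segments are tangent to finitely many of the $\ell_t$ --- their number governed by the curvature of the envelope --- and aim to show that segments perform at least as well as envelope arcs per unit of length. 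Making that comparison quantitative is, in my view, the true heart of the problem and the step most likely to demand a genuinely new idea rather than a routine first-variation computation.
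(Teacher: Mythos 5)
This statement is recorded in the paper as a conjecture precisely because no proof is known; the only partial justification the paper offers is the approximation result of Dumitrescu et al.\ (any rectifiable OC can be approximated arbitrarily closely by a graphical OC), which does not show that a minimiser itself is graphical. Your sketch is candid about this, but as it stands each of its three stages contains a genuine gap, not just the one you flag. In the existence stage, the Blaschke/Go\l\k{a}b route requires a uniform bound on the number of connected components along a minimising sequence, and your proposed mechanism --- a ``minimum-separation estimate forced by opacity'' --- does not exist: opacity imposes no lower bound on the distance between, or the size of, components of an OC, so nothing prevents a minimising sequence from fragmenting into ever more, ever smaller pieces. Coalescing nearby components changes the competitor class (you then minimise over sets with boundedly many components, which is not the problem posed), and lower semicontinuity of $\lambda_1$ genuinely fails without such a bound. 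Note the paper's remark that it is not even known whether optimal covers are measurable; any existence argument has to confront this, and yours does not.

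In the rectifiability stage, the Besicovitch--Federer argument correctly shows that the purely unrectifiable part $F_u$ meets only a null family $\mathcal N$ of lines, but --- as you yourself observe --- a null family of lines can still require strictly positive length to block (the tangent lines to a convex arc are a null family with positive blocking cost), so the proposed exchange does not yield a contradiction with $\lambda_1(F_u)>0$ without a quantitative comparison that is nowhere supplied. Likewise, ``finitely many arcs'' does not follow from any stated lemma: there is no known lower bound on the length of an ``opacity-essential'' piece, so infinitely many tiny essential components are not excluded. Finally, the straightening stage reduces to ruling out envelope arcs, which you correctly identify as the heart of the matter and leave open. So the proposal is a reasonable research programme --- and its skeleton (compactness, rectifiability, first-variation straightening) is the natural one --- but it is not a proof, and no step of it closes the conjecture that the paper deliberately leaves open.
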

A partial justification for this assumption is a result of Dumitrescu et al.~\cite{Dumitrescu}
\begin{Lem}(\cite{Dumitrescu}, Lemma 1)
Any rectifiable OC for $S$ can be arbitrarily closely approximated by a graphical OC.
\end{Lem}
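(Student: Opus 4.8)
\medskip
\textbf{Proof proposal.} The plan is to treat the $m$ rectifiable curves composing $F$ one at a time. Since any line meeting $S$ meets $F$, and hence meets at least one of the curves, it suffices to show that for a single rectifiable curve $\gamma$ of length $\ell$ and any $\epsilon'>0$ there is a graphical OC $G$ for $\gamma$ with $\lambda_1(G)\le\ell+\epsilon'$; applying this to each curve with $\epsilon'=\epsilon/m$ and taking the union gives the lemma. Fix such a $\gamma$, parametrised by arc length on $[0,\ell]$, and a small parameter $\eta>0$ to be chosen last. A rectifiable curve reparametrised by arc length is $1$-Lipschitz, hence differentiable almost everywhere, with difference quotients converging to $\gamma'$ a.e.; consequently the sets on which $|\gamma(s+t)-\gamma(s)-t\gamma'(s)|\le\eta|t|$ holds for all admissible increments $t$ with $|t|\le\delta_0$ increase to a full-measure set as $\delta_0\downarrow0$. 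Choose $\delta_0$ so that the complement of this set has measure $<\eta\ell/2$, and partition $[0,\ell]$ into finitely many intervals $I_i$ of length $L_i\le\delta_0$. Call $A_i=\gamma(I_i)$ \emph{straight} if $I_i$ meets the good set --- so that, for a suitable unit direction $d_i$, $A_i$ lies in a rectangle of length $\le L_i$ along $d_i$ and width $O(\eta L_i)$ across it, and the chord $[p_i,p_{i+1}]$ joining its endpoints has length $\ge(1-2\eta)L_i$ and makes angle $O(\eta)$ with $d_i$ --- and \emph{wild} otherwise; the wild intervals then have total length $<\eta\ell$.

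For a wild sub-arc, let the contribution to $G$ be the boundary of the axis-aligned bounding box of $A_i$: this is graphical, it is an OC for $A_i$ (the boundary of any convex set meets every line through the set), and it has length $O(L_i)$, so the wild sub-arcs contribute only $O(\eta\ell)$ in total. The real content is the \emph{straight} case, where one must spend essentially only the length $L_i$ of the chord. Let the contribution to $G$ be the chord $[p_i,p_{i+1}]$ extended to length $L_i+O(\sqrt{\eta}\,L_i)$, symmetrically about its midpoint $M_i$, together with one segment through $M_i$ perpendicular to $d_i$ of length $O(\sqrt{\eta}\,L_i)$. This is graphical with length $L_i(1+O(\sqrt{\eta}))$, and I claim it meets every line $\ell$ meeting $A_i$. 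Pick $q\in\ell\cap A_i$ and let $\psi$ be the angle between $\ell$ and $d_i$; then $q$ lies within $O(\eta L_i)$ of the chord line, and its $d_i$-coordinate lies, up to $O(\eta L_i)$, in the $d_i$-range of the chord. If $\psi\ge\sqrt{\eta}$, then travelling along $\ell$ from $q$ reaches the chord line after a $d_i$-distance $O(\eta L_i)/\sin\psi=O(\sqrt{\eta}\,L_i)$, so $\ell$ crosses the chord line within $O(\sqrt{\eta}\,L_i)$ of the chord's $d_i$-range, i.e.\ on the extended chord. If $\psi<\sqrt{\eta}$, then moving along $\ell$ from $q$ to the point sharing $M_i$'s $d_i$-coordinate (a $d_i$-distance $O(L_i)$) displaces $\ell$ only $O(L_i)\cdot\sqrt{\eta}=O(\sqrt{\eta}\,L_i)$ in the $d_i^{\perp}$ direction, so $\ell$ passes within $O(\sqrt{\eta}\,L_i)$ of $M_i$ and meets the perpendicular segment. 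Taking the hidden constants in the extension length and the segment length large enough makes both conclusions hold.

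Summing, the straight sub-arcs contribute at most $\sum_i L_i(1+O(\sqrt{\eta}))\le\ell(1+O(\sqrt{\eta}))$ and the wild ones $O(\eta\ell)$, so $G$ is graphical with $\lambda_1(G)\le\ell(1+O(\sqrt{\eta}))$; it is an OC for $\gamma$ because any line meeting $\gamma$ meets some $A_i$ and hence that sub-arc's contribution. Choosing $\eta$ small enough that $O(\sqrt{\eta})\cdot\ell<\epsilon'$ completes the construction for one curve; doing this for each of the $m$ curves and taking the union yields a graphical OC for $S$ of length at most $\lambda_1(F)+\epsilon$ (and, incidentally, lying within $O(\delta_0+\sqrt{\eta})$ of $F$ in Hausdorff distance), as required.

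The step I expect to be the main obstacle is the blocking estimate for a straight sub-arc: one must pick the threshold angle separating the two cases (here $\sqrt{\eta}$) and the lengths of the two added pieces (here $O(\sqrt{\eta}\,L_i)$ each) in the right relation to the straightness tolerance $\eta$, so that simultaneously every line meeting the sub-arc is caught and the per-sub-arc overhead is $o(L_i)$ rather than a constant multiple of $L_i$. A cruder argument does not suffice: blocking \emph{all} lines through even a very thin rectangle costs about twice its length, so replacing every short sub-arc by a graphical OC of its bounding box yields only a constant-factor approximation. The saving comes solely from exploiting that a straight sub-arc is a connected curve spanning the rectangle, which pins any line meeting it to within $O(\eta L_i)$ of the chord near the crossing point. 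The measure-theoretic separation into straight and wild sub-arcs, and the verification that the wild part is negligible, is routine, but should be arranged so as to keep the partition finite.
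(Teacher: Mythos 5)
The paper does not actually prove this statement -- it is imported verbatim as Lemma~1 of \cite{Dumitrescu} -- so there is no in-paper argument to compare yours against; what you have written is a self-contained proof, and in substance it is correct and close in spirit to the standard argument (fine subdivision of each curve, replacement of each nearly-straight piece by its chord plus short auxiliary segments, with a threshold-angle case analysis showing every line through the piece is still caught). The key step, the blocking estimate for a straight sub-arc with threshold $\sqrt{\eta}$ and auxiliary pieces of length $O(\sqrt{\eta}\,L_i)$, does work: a line at angle at least $\sqrt{\eta}$ to the chord direction crosses the chord line within $O(\eta L_i/\sqrt{\eta})$ of the sub-arc's range, and a flatter line drifts only $O(\sqrt{\eta}\,L_i)$ across the strip before reaching the midpoint's abscissa, so the extended chord and the perpendicular segment at $M_i$ between them catch everything, at per-piece overhead $o(L_i)$; the wild pieces and the bounding-box trick are handled correctly. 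Two small points deserve tightening. First, your ``good set'' should also require $|\gamma'(s)|=1$ (true a.e.\ for the arc-length parametrization); otherwise the claims that $d_i$ is a unit direction, that the chord has length at least $(1-2\eta)L_i$, and that the rectangle has $d_i$-extent about $L_i$ do not follow, and strictly the extent is $(1+O(\eta))L_i$ rather than $\le L_i$ (harmless). Second, your final bound is in terms of the sum of the \emph{parametric} lengths of the given curves, which coincides with $\lambda_1(F)$ only when the curves are essentially injective and pairwise non-overlapping; if a curve retraces itself, $\sum_j \ell_j$ can strictly exceed $\lambda_1(F)$, and upgrading the conclusion to ``length at most $\lambda_1(F)+\epsilon$'' in that degenerate case needs an additional reduction that your argument does not supply. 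This matches the formulation in \cite{Dumitrescu} (total length of the constituent curves), so for the lemma as intended your proof is fine, but since the present paper measures length by the Hausdorff measure $\lambda_1$, it is worth either adding the non-overlapping/injectivity normalization explicitly or flagging the discrepancy.
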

Graphical OCs have a couple of nice properties.  First, the Hausdorff measure becomes simply the sum of the Euclidean length of the line segments.  Second, we can consider graphical OCs as graphs, and so we can use standard graphical terminology such as vertices, edges, and trees to describe features of the OC.  At the present time there is no polygonally-bounded region $S$ for which a conjectured solution to OCP on $S$ has more than $|V_S|$ straight-line segments.  It is interesting to note that Bagemihl in \cite{Bagemihl} has constructed an instance of an OC for a rectangle that is {\em nowhere}\/ rectilinear, although it has infinite Hausdorff measure and is hence clearly not minimal.  Whether the Graphical Conjecture holds remains one of the major open questions for the OCP.

If we consider the graphical OCP, we can strengthen the bounds of Lemma~\ref{lembounds}.  For a finite set $X$ of points in  $\Re^2$, a {\em minimum Steiner tree (MST)}\/ for $X$ is the minimum length set (under the measure in Definition~2.1) connecting  all points in $X$. Such a set is a tree which may contain nodes other than the points in $X$. The fundamental properties of MSTs  are well understood (see, for example, \cite{Hwang}).  Let $\rho(S)$ denote the length of $\partial S$, and let st$(S)$ be the length of an MST on $V_S$. The following result  has appeared in a number of places \cite{Dumitrescu,Faber84,thurston}

\begin{Lem}\label{lembounds2}
Let $F$ be a minimum graphical OC for $S$ in $\Re^2$.  Then
$$\rho(S)/2\leq \lambda_1(F)\leq \mbox{\rm st}(V_S).$$
\end{Lem}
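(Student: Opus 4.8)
The plan is to prove the two inequalities separately; the upper bound is combinatorial and the lower bound is integral-geometric. For the upper bound I would first show that \emph{any} connected graphical set $T$ containing all the vertices $V_S$ is an OC for $S$. An MST on $V_S$ is such a set (it is a finite tree of straight segments, hence graphical), so once this is established the minimality of $F$ among graphical OCs gives $\lambda_1(F)\le\mathrm{st}(V_S)$ at once. To see that $T$ is an OC, take any line $\ell$ meeting $S$. If $\ell$ is a supporting line of the convex region $S$, then $\ell\cap S$ is a vertex or an edge of $\partial S$, so $\ell$ contains a point of $V_S\subseteq T$. Otherwise $\ell$ meets $\mathrm{int}(S)$, and since $S$ is the convex hull of $V_S$ there must be a vertex of $S$ strictly on each side of $\ell$; a path in $T$ joining two such vertices must cross $\ell$, so again $T\cap\ell\ne\emptyset$.

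For the lower bound I would invoke the Cauchy–Crofton formula. Normalize the rigid-motion invariant measure $\mu$ on the space $\mathcal L$ of lines in $\Re^2$ (in coordinates $(\theta,t)$, normal direction and signed offset, this is $d\mu=dt\,d\theta$) so that for every rectifiable set $C$ one has $\int_{\mathcal L}\#(\ell\cap C)\,d\mu(\ell)=2\lambda_1(C)$; equivalently, the measure of lines meeting a segment of length $a$ is $2a$. Applying this to $\partial S$ and using that almost every line meeting the convex region $S$ crosses $\partial S$ in exactly two points yields Cauchy's formula $\mu(\{\ell:\ell\cap S\ne\emptyset\})=\rho(S)$. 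Since $F$ is an OC for $S$, every line meeting $S$ meets $F$, so $\{\ell:\ell\cap S\ne\emptyset\}\subseteq\{\ell:\ell\cap F\ne\emptyset\}$. Combining,
$$2\lambda_1(F)=\int_{\mathcal L}\#(\ell\cap F)\,d\mu(\ell)\ \ge\ \mu(\{\ell:\ell\cap F\ne\emptyset\})\ \ge\ \mu(\{\ell:\ell\cap S\ne\emptyset\})=\rho(S),$$
which is exactly $\lambda_1(F)\ge\rho(S)/2$. Note this half of the argument uses neither minimality nor the graphical hypothesis, only that $F$ is rectifiable.

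The routine points to check are: (i) $\#(\ell\cap F)$ is measurable and the Crofton identity applies to $F$ with total length $\lambda_1(F)$ — both trivial when $F$ is graphical, since $\#(\ell\cap F)$ is piecewise constant in the $(\theta,t)$-coordinates and $F$ is a finite union of segments whose total length equals $\lambda_1(F)$; and (ii) supporting lines of $S$ and lines through a fixed vertex form a $\mu$-null set, so restricting to lines meeting $\mathrm{int}(S)$ (where the two-point crossing property is clean) changes nothing. The main obstacle, such as it is, is purely bookkeeping: fixing the normalization of $\mu$ consistently so that the factors of $2$ in the Crofton identity and in Cauchy's formula line up to give the stated half-perimeter bound. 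Neither step is conceptually deep — both bounds are essentially folklore and appear in the references cited above — so the write-up is mainly a matter of making the integral-geometric constants precise.
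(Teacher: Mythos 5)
Your proposal is correct, but note that the paper does not actually prove Lemma~\ref{lembounds2}: it is stated with citations to Dumitrescu et al., Faber--Mycielski--Petersen and Thurston, so there is no internal proof to match against. Your argument is essentially the standard one from those sources, and both halves check out. The upper bound is sound: since $S$ is assumed convex polygonal with vertex set $V_S$, any connected graphical set containing $V_S$ blocks every line meeting $S$ (a supporting line meets a vertex or edge of $\partial S$, hence a vertex; a line through the interior strictly separates two vertices, and a path between them in the connected set must cross it), so the MST on $V_S$ is a feasible graphical OC and minimality of $F$ gives $\lambda_1(F)\leq \mbox{\rm st}(V_S)$ --- this is the same blocking principle the paper itself uses in Lemma~\ref{steiner}. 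The lower bound via Cauchy--Crofton is also correct with your normalization ($d\mu = dt\,d\theta$, $\theta\in[0,\pi)$, so lines meeting a segment of length $a$ have measure $2a$, and Cauchy's formula gives $\mu(\{\ell:\ell\cap S\neq\emptyset\})=\rho(S)$ since almost every such line crosses $\partial S$ twice); the exceptional sets you flag (lines containing a segment of $F$, lines through a fixed vertex) are indeed $\mu$-null, and your observation that this half needs only rectifiability of $F$, not minimality or connectedness, is accurate. It is worth noting that your Crofton argument strictly strengthens the only length bound the paper does prove, Lemma~\ref{lembounds} ($\lambda_1(F)\geq\mbox{diam}(S)$), which is obtained there by a more elementary projection--sweeping argument applied directly to the covering sets $E_i$ in the Hausdorff-measure definition; the integral-geometric route buys the sharper constant $\rho(S)/2\geq\mbox{diam}(S)$ at the cost of invoking the Crofton machinery.
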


The final lemma of this section indicates the close relationship between OCP solutions and MSTs.
\begin{Lem}\label{steiner}
Let $F$ be a minimum graphical OC for $S$.  Then each component $C$ of $F$ is an MST on $V_{\bar C}$.
\end{Lem}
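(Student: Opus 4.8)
The plan is to replace each component $C$ of $F$ by a minimum Steiner tree on the vertex set $V_{\bar C}$ of its convex hull and show that this neither increases $\lambda_1$ nor destroys opacity, so that minimality of $F$ pins $C$ down to being such a tree itself. I would first record the structural fact that $V_{\bar C}\subseteq C$: the vertices of the convex polygon $\bar C$ are exactly the extreme points of $\bar C$, and every extreme point of the convex hull of a set belongs to that set. Hence $C$ is a connected graphical set containing $V_{\bar C}$, so $\lambda_1(C)\ge \mbox{st}(V_{\bar C})$; fix an MST $C'$ on $V_{\bar C}$ achieving this value (one exists by the standard theory of Steiner trees, and it is itself graphical).

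The crux is the geometric claim that $C'$ is an opaque cover for $\bar C$ --- equivalently, that any connected set meeting all of $V_{\bar C}$ meets every line through $\bar C$. Let $\ell$ be a line meeting $\bar C$. If $\ell$ passes through some vertex of $\bar C$, that vertex lies in $C'$ and we are done. Otherwise $\ell$ cannot be a supporting line of $\bar C$, since a supporting line meets $\bar C$ in a face --- a vertex or an edge --- and an edge has both endpoints (vertices) on $\ell$; so $\ell$ meets the interior of $\bar C$ (the cases where $\bar C$ degenerates to a segment or a point are handled directly in the same spirit). Then the vertices of $\bar C$ cannot all lie in a single open halfplane bounded by $\ell$, for otherwise $\bar C$, being the convex hull of $V_{\bar C}$, would miss the opposite open halfplane and so miss the interior point of $\bar C$ on $\ell$. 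Thus $C'$ contains points strictly on both sides of $\ell$, and being connected it must cross $\ell$.

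Given this, set $F'=(F\setminus C)\cup C'$. Every line meeting $S$ meets $F$; if it meets $F\setminus C$ we are done, and if it meets $C$ then it meets $\bar C$ and hence $C'$ by the previous paragraph, so $F'$ is again a graphical OC for $S$. Since $C$ is a component of the graphical set $F$, we have $\lambda_1(F)=\lambda_1(F\setminus C)+\lambda_1(C)$, while subadditivity gives $\lambda_1(F')\le\lambda_1(F\setminus C)+\lambda_1(C')$; minimality of $F$ among graphical OCs gives $\lambda_1(F')\ge\lambda_1(F)$, and these inequalities combine to force $\lambda_1(C')\ge\lambda_1(C)$, hence $\lambda_1(C)=\mbox{st}(V_{\bar C})$. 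Finally, a connected set of minimum possible length spanning a finite point set must be a tree spanning that set --- any cycle or dangling branch could be trimmed without disconnecting it --- so $C$ is an MST on $V_{\bar C}$, as claimed.

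I expect the only real obstacle to be making the geometric claim of the second paragraph fully rigorous: organising the case analysis (lines through a vertex, supporting lines meeting $\bar C$ in an edge, degenerate convex hulls) together with the elementary fact that a connected set meeting both open halfplanes of a line meets the line. The replacement-and-minimality steps are routine once that claim is in hand, and, as a byproduct, the argument also shows that each such component $C$ is a tree, in particular acyclic.
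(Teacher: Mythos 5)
Your proposal is correct and is essentially the paper's argument, written out in full: the paper's one-line proof also rests on the observation that any line blocked by $C$ is blocked by any set connecting $V_{\bar C}$, so minimality forces each component to be an MST on the vertices of its convex hull. Your expanded treatment of the geometric claim and the replacement step just makes explicit what the paper leaves to the reader.
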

\begin{proof}
Any line that is blocked by $C$ will be blocked by any set connecting $V_{\bar C}$.  Thus in order that $F$ have minimum length, the components will have to be MSTs.
\end{proof}

\section{Opaque covers and the Point Goalie Problem}

An interesting related problem to OCP has been studied by Richardson and Shepp \cite{Richardson}, and provides a good discrete analogue to OCP.  For $x\in \Re^2$ and $\delta>0$ let $B_\delta(x)$ denote the closed disk of diameter $\delta$ centered at $x$, and for subset $X$ of $\Re^2$ let $\B_\delta(X)=\cup_{x\in X}B_\delta(x)$.  A {\em $\delta$-point goalie}\/ ($\delta$-PG) for convex polygonally-bounded set $S$ in $\Re^2$ is a any set $X$ for which $\B_\delta(X)$ blocks every straight line through $S$.   The intuition for the name is that we wish to block straight-line hockey shots through $S$ by positioning a number of ``goalies'' on the plane each of whom can protect a disk of diameter $\delta$.  The {\em $\delta$-point goalie problem}\/ is to find the minimum cardinality of an $\delta$-PG for $S$. Note that since $S$ is compact then there is always a finite $\delta$-point goalie for any $\delta$.

In \cite{Richardson} the authors consider a form of ``LP-relaxation'' for point goalies.  They compare it to a certain measure on $\Re^2$, called a {\em tomographic goalie}, that turns out to be a limiting measure for ``LP-goalies''.
They also briefly mention minimum OCs, giving them the name ``minimum length goalies'', although they do not consider the relationship between minimum length goalies and $\delta$-goalies.  We investigate limiting results for $\delta$-goalies by asking what happens to $\delta$-PG's when $\delta$ approaches zero.  Define the {\em limiting point goalie value}\/ for $S$ to be
\begin{equation}\label{P(S)}
 P(S)=\displaystyle \lim_{\delta\rightarrow 0}  \min\{\delta|X_\delta|:\mbox{$X_\delta$ is an $\delta$-PG for $S$}\}.
\end{equation}
The value $\delta|X_\delta|$ is simply the sum of the diameters of each of the disks in $\B(X_\delta)$, and so $P(S)$ corresponds to a discrete approximation of the Hausdorff measure of $\B(X_\delta)$, with the $E_i$ sets represented by the disks.

In general it is not clear that the limit for $P(S)$ exists, or what relationship it has with OCP solutions.  Under the Graphical Conjecture, however, the limiting point goalie value $P(S)$ does exist, and is in fact the same as that of minimum OC in the following sense.  Let $\{X_\delta:\,\delta>0\}$ be a set of $\delta$-PGs, and let $F$ be a finite set of lines segments in $\Re^2$.  Define
\[
X^c_\delta(F) = \{x\in X_\delta:\ F\cap B_\delta(x)=\emptyset\}.
\]
We say that $F$ is a {\em limit}\/ of $\{X_\delta\}$ if
\[\begin{array}{ll}
(L1)&\displaystyle \lim_{\delta\rightarrow 0} \lambda_1\left(F\setminus \B_\delta(X_\delta) \right)=0,\mbox{ and}\\[.8em]
(L2)&\displaystyle \lim_{\delta\rightarrow 0} \delta|X^c_\delta(F)| = 0
\end{array}\]
That is, the appropriate measure of the ``symmetric difference'' between the $\delta$-PGs and the set $F$ goes to zero.
\begin{Thm}
Let the Graphical Conjecture hold, let $F\subset\Re^2$ be a finite set of line segments, and let $\{X_\delta:\,\delta>0\}$ be a set of min-cardinality $\delta$-PGs where the limit is $F$.  Then:
\tr 2
[$(i)$] $F$ is a solution to OCP;
\tn
[$(ii)$] the limit in  Equation\/ {\rm (\ref{P(S)})} exists; and
\tn
[$(iii)$] $\displaystyle P(S)=\lambda_1(F).$
\tl
\end{Thm}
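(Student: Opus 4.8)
The plan is to establish the three claims essentially simultaneously by showing that $\lambda_1(F)$ equals the limiting point‑goalie value via two matched inequalities. The two conditions (L1) and (L2) are engineered precisely so that, up to vanishing error, the disks $\B_\delta(X_\delta)$ and the segment set $F$ block the same lines. So first I would use (L2) together with the fact that each $X_\delta$ is a $\delta$‑PG to argue that $F$ itself is an OC for $S$: any line $\ell$ meeting $S$ is blocked by $\B_\delta(X_\delta)$, hence meets some $B_\delta(x)$; if $x\notin X^c_\delta(F)$ then $\ell$ passes within $\delta$ of a point of $F$, and since (L2) forces $|X^c_\delta(F)|$ to be eventually negligible (in fact, by a compactness/sweeping argument one can show the ``bad'' disks cannot cover a positive‑length family of lines through $S$ in the limit), letting $\delta\to 0$ and using that $F$ is closed shows $\ell\cap F\neq\emptyset$. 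This gives $(i)$.

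Next, for the quantitative part, I would prove $P(S)\ge \lambda_1(F^*)$ where $F^*$ is a minimum graphical OC, and $\lambda_1(F)\le \liminf \delta|X_\delta|$, then combine. For the lower bound on $\delta|X_\delta|$: by (L1), $F\setminus\B_\delta(X_\delta)$ has length tending to $0$, so $F$ is, up to $o(1)$ length, covered by the disks $B_\delta(x)$, $x\in X_\delta$; since each disk has diameter $\delta$ and contributes at most $\delta$ to a covering of $F$ in the sense of the Hausdorff‑measure definition, we get $\lambda_1(F)\le \sum_{x\in X_\delta}\delta + \lambda_1(F\setminus\B_\delta(X_\delta)) = \delta|X_\delta| + o(1)$, hence $\lambda_1(F)\le \liminf_{\delta\to 0}\delta|X_\delta|$. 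For the reverse direction I would use that $F$ is an OC (from $(i)$) and the Graphical Conjecture to compare against a minimum graphical OC $F^*$: from $(i)$, $\lambda_1(F)\ge \lambda_1(F^*)$. Conversely, $F^*$ being graphical, one can discretize it — cover $F^*$ by $O(\lambda_1(F^*)/\delta)$ disks of diameter $\delta$ (placing centers at spacing $\delta$ along each segment) — to build a $\delta$‑PG $Y_\delta$ with $\delta|Y_\delta|\le \lambda_1(F^*)+o(1)$; minimality of $X_\delta$ gives $\delta|X_\delta|\le \delta|Y_\delta|\le \lambda_1(F^*)+o(1)$. Chaining the inequalities yields $\lambda_1(F^*)\le \lambda_1(F)\le \liminf\delta|X_\delta|\le \limsup\delta|X_\delta|\le \lambda_1(F^*)$, so everything collapses: the limit in (\ref{P(S)}) exists, equals $\lambda_1(F)$, and $F$ has the same length as a minimum graphical OC, which under the Graphical Conjecture is an OCP optimum. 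That settles $(ii)$ and $(iii)$ and sharpens $(i)$ to optimality.

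The step I expect to be the main obstacle is the rigorous handling of condition (L2) in the proof that $F$ is an OC — specifically, ruling out that the ``escaping'' lines blocked only by the bad disks $X^c_\delta(F)$ persist in the limit. The subtlety is that a single disk can block a two‑parameter family of lines, so $\delta|X^c_\delta(F)|\to 0$ must be converted into a statement about the measure (in the space of lines through $S$) of directions/offsets left unguarded by $F$. I would handle this with the same sweeping‑line device used in the proof of Lemma~\ref{lembounds}: fix a direction, parametrize lines through $S$ in that direction by their intercept on a transversal segment, note that a disk of diameter $\delta$ blocks an interval of intercepts of length at most $\delta$, so the total intercept‑length blocked by $X^c_\delta(F)$ is at most $\delta|X^c_\delta(F)|\to 0$; since $\B_\delta(X_\delta)$ blocks all of them and $F$ is the limit, in the limit $F$ must block all but a measure‑zero set of lines in every direction, and then closedness of $F$ and compactness of $S$ upgrade ``almost every line'' to ``every line''. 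A secondary technical point is making the disk‑covering estimate $\lambda_1(F)\le\delta|X_\delta|+o(1)$ fully rigorous against the $\delta\to 0$ infimum in Definition~2.1, but this is routine: the disks have diameter $\le\delta$, so they form an admissible family at scale $\delta$, and Equation~(\ref{Hausdorff}) applies directly.
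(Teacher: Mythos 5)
Your proposal is correct and follows the same overall strategy as the paper -- prove $(i)$ by observing that a disk of diameter $\delta$ can block only a $\delta$-interval of any parallel family of lines, so lines through $S$ missed by $F$ would force $\delta|X^c_\delta(F)|$ to stay bounded away from $0$, and then sandwich $\delta|X_\delta|$ between two quantities converging to $\lambda_1(F)$, using a discretization of a graphical OC into $\lceil l_i/\delta\rceil$ points per segment for the upper bound and condition $(L1)$ for the lower bound -- but two steps are executed differently. For the lower bound, the paper argues geometrically: away from an $M\delta$-neighborhood of segment endpoints (with $M$ a cosecant of the smallest inter-segment angle) each disk meets at most one segment and hence covers at most $\delta$ of length, giving $\delta|X_\delta|\geq\lambda_1(F)-2kM\delta-\epsilon/2$; you instead feed the sets $F\cap B_\delta(x)$ directly into the covering definition of $\lambda_1$, which is cleaner and avoids the constant $M$ entirely. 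One caveat: your intermediate inequality $\lambda_1(F)\leq\delta|X_\delta|+\lambda_1(F\setminus\B_\delta(X_\delta))$ is not literally true at fixed $\delta$ (a single disk near a vertex or crossing can contain more than $\delta$ of the length of $F$ -- precisely the issue the paper's $M$-argument addresses); what the scale-$\delta$ cover bounds is the infimum inside Equation~(\ref{Hausdorff}) at scale $\delta$, and only its limit as $\delta\to0$ is $\lambda_1(F)$, so your conclusion $\lambda_1(F)\leq\liminf_{\delta\to0}\delta|X_\delta|$ survives, as you essentially acknowledge in your closing technical remark. The second difference is organizational: the paper discretizes $F$ itself and invokes the Graphical Conjecture in a separate final step to rule out a shorter OC, whereas you discretize a minimum graphical OC $F^*$ and let the chain $\lambda_1(F^*)\leq\lambda_1(F)\leq\liminf\delta|X_\delta|\leq\limsup\delta|X_\delta|\leq\lambda_1(F^*)$ collapse, which delivers $(i)$--$(iii)$ and the optimality of $F$ in one stroke; both uses of the conjecture (existence of a graphical optimum to discretize) are at the same level of rigor, and your handling of the ``bad disk'' lines via the sweeping/intercept-measure argument plus compactness of $F$ is a sound, slightly more measure-theoretic rendering of the paper's strip argument.
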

\begin{proof}
Let $E_1,\ldots,E_k$ be the set of line segments of $F$, with $E_i$ of length $l_i$, i=1,\dots,k.  First, suppose that $F$ is not an opaque cover, that is, there is some line $L$ that intersects $S$ but not $F$.  Since $S$ and $F$ are both compact, this means that there is a strip $W$ of width $\Delta>0$ around $L$ such that each of the parallel lines in $W$ intersects $S$ but misses $F$.  Therefore it will be necessary to have at least $\lfloor\Delta/\delta\rfloor$ disks $B_\delta(x)$, $x\in X_\delta$ that do not intersect with $F$ in order to cover this set of parallel lines Thus for each $\delta$ there are at least $\lfloor\Delta/\delta\rfloor$ points in $X^c_\delta(F)$, and so the collection $\{X_\delta\}$ does not satisfy $(L2)$.  It follows that no such line $l$ exists, and so $F$ is an OC.

We next construct upper and lower bounds on $\delta|X_\delta|$.   For the upper bound we construct an appropriate $\delta$-PG for $F$.  For each segment $E_i$, place a set $X^i_\delta$ of $\lceil l_i/\delta\rceil$ points of distance $\delta$ from each other and at most $\delta/2$ from the endpoints of $E_i$, so that $E_i\subset \B_\delta(X^i_\delta)$.  If we set $X_\delta=\bigcup_i X^i_\delta$, then each $X_\delta$ is a $\delta$-point goalie of cardinality
\[
\displaystyle |X_\delta| = \sum_{i=1}^k \lceil l_i/\delta\rceil,
\]
and so
\[
\displaystyle \delta|X_\delta|\leq \sum_{i=1}^k \delta\lceil l_i/\delta\rceil.
\]

For the lower bound, let $\{X_\delta\}$ be a collection of $\delta$-PGs converging to $F$.  First note that for sufficiently small $\delta$, any disk $B_\delta(x)$ for which $x$ is distance more than $M\delta$ from the endpoint of any line segment ($M$ being approximately the cosecant of the smallest angle between any two adjacent segments) can cover at most one of the line segments.

Choose an arbitrary $\epsilon>0$.  By $(L1)$ we can choose $\Delta_0$ so that for any $\delta<\Delta_0$ we have
\[
\displaystyle \lambda_1\left(F\setminus \B_\delta(X_\delta) \right)\leq\epsilon/2,
\]
or in other words,
\[
\displaystyle \lambda_1\left(F\cap \B_\delta(X_\delta) \right)>\lambda_1(F)-\epsilon/2.
\]
Now the total length of the portions of the $k$ line segments within $M\delta$ of their endpoints is $2kM\delta$, and the remaining length $\lambda_1(F)-2kM\delta-\epsilon/2$ of $F\cap \B_\delta(X_\delta)$ must be covered by individual disks.  Thus by choosing
\[
\Delta = \min\left\{\Delta_0,\frac{\epsilon}{4kM}\right\}
\]
we have
\[
\delta|X_\delta|\geq \lambda_1(F)-2kM\delta-\epsilon/2\geq \lambda_1(F) - \epsilon.
\]
To summarize, for any $\epsilon$ we have
\[
\lambda_1(F) - \epsilon\leq\delta|X_\delta|\leq \sum_{i=1}^k \delta\lceil l_i/\delta\rceil
\]
holding for all sufficiently small $\delta$.  Since the left-hand and right-hand-sides both converge to $\lambda_1(F)$, this establishes $(ii)$ and $(iii)$.  Further, this also implies that $F$ has minimum length, since by the Graphical Conjecture any smaller-length solution must also be made up of a finite number of line segments, and hence for all sufficiently small $\delta$ we can construct appropriate $\delta$-PGs $X^0_\delta$ as above such that each $|X^0_\delta|$ is bounded above by a value less than $\lambda_1(F)$.  This violates $(iii)$.
\end{proof}

Sadly, there is nothing more known about the general structure of $\delta$-goalies or their limits than there is for OCPs.  We end the section with a conjecture.

\begin{Conj}
Independent of the Graphical Conjecture, a limit of point goalies always exists, and is a minimum OC.
\end{Conj}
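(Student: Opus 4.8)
The plan is to produce a limiting set by compactness and then identify it as a minimum OC via lower semicontinuity of length. Fix a family $\{X_\delta\}$ of minimum-cardinality $\delta$-PGs. Applying the upper-bound construction in the proof of the preceding theorem to an MST on $V_S$ (which is itself a graphical OC) gives $\delta$-PGs of cardinality at most $\mathrm{st}(V_S)/\delta+O(1)$, so minimality forces $\delta|X_\delta|\le\mathrm{st}(V_S)+o(1)$; the diameter-sums stay bounded. After confining the goalies to a fixed bounded region (a short separate argument: a distant disk blocks only a pencil of lines through $S$ of vanishing angular width, so in a minimum-cardinality goalie the disks may be taken near $\bar S$), the Blaschke selection theorem yields a sequence $\delta_n\downarrow 0$ along which the compact sets $\B_{\delta_n}(X_{\delta_n})$ converge in the Hausdorff metric to a compact set $F^\star$.

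First, $F^\star$ is an OC for $S$: for any line $L$ meeting $S$, each $\B_{\delta_n}(X_{\delta_n})$ meets $L$, so $F^\star$ has points arbitrarily close to $L$, and since $F^\star$ is compact and $L$ closed, $F^\star\cap L\neq\emptyset$; hence $\lambda_1(F^\star)\ge\lambda^\ast$, where $\lambda^\ast$ is the infimal length of an OC for $S$. Second, $\liminf_n\delta_n|X_{\delta_n}|\le\lambda^\ast$: given $\epsilon>0$, pick a graphical OC $G$ with $\lambda_1(G)\le\lambda^\ast+\epsilon$; minimality of $X_{\delta_n}$ makes $\delta_n|X_{\delta_n}|$ at most the diameter-sum of the explicit $\delta_n$-PG built from $G$ as in the preceding theorem, which tends to $\lambda_1(G)$.

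The crux is the reverse inequality $\lambda_1(F^\star)\le\liminf_n\delta_n|X_{\delta_n}|$. Since $\B_{\delta_n}(X_{\delta_n})$ is a ``fat'' set of infinite one-dimensional measure, the idea is to replace it by a nearby rectifiable set. A minimum-cardinality goalie cannot afford clusters of mutually overlapping disks (a local rearrangement would save a disk), and by the cosecant-of-the-minimal-angle estimate used in the proof of the preceding theorem, each disk away from a junction effectively covers a single strand; so the points of $X_{\delta_n}$ are strung at spacing $\Theta(\delta_n)$ along a bounded number of rectifiable curves. Joining consecutive points along each strand by segments yields a graph $G_n$ with $\lambda_1(G_n)\le\delta_n|X_{\delta_n}|+o(1)$, with boundedly many connected components, and with $G_n\to F^\star$ in the Hausdorff metric. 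Go\l\k{a}b's theorem, applied componentwise, then gives $\lambda_1(F^\star)\le\liminf_n\lambda_1(G_n)\le\liminf_n\delta_n|X_{\delta_n}|$. Chaining the inequalities, $\lambda^\ast\le\lambda_1(F^\star)\le\liminf_n\delta_n|X_{\delta_n}|\le\lambda^\ast$, so $F^\star$ is a minimum OC and the limit in \eref{P(S)} exists and equals $\lambda_1(F^\star)$; the Go\l\k{a}b bound also shows $F^\star$ is rectifiable with finitely many components, whence a local-optimality argument (in the spirit of the later sections) upgrades $F^\star$ to a finite union of segments. Finally, $\{X_\delta\}$ itself --- not merely the chosen subsequence --- must satisfy $(L1)$ and $(L2)$ with $F=F^\star$, since any failure would yield a subsequence along which $\delta|X_\delta|$ stays bounded away from $\lambda^\ast$, contradicting the value just computed.

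The main obstacle is the step that passes from the disk-union to the rectifiable graph $G_n$: it demands real control over the combinatorial structure of minimum-cardinality $\delta$-PGs as $\delta\to 0$, above all a bound on the number and total length of the strands along which the goalie points line up. Without the Graphical Conjecture there is no a priori such bound --- indeed, a complete proof here would also establish the Graphical Conjecture --- and this is essentially the same obstruction that has kept both questions open. Two lesser gaps are the reduction confining the goalies to a bounded region, and the use of ``near-optimal OCs are approached by graphical ones,'' which follows from the lemma of Dumitrescu et al.\ only once one knows that the infimum of $\lambda_1$ over rectifiable OCs equals the infimum over all OCs.
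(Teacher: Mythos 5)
You should first note that the statement you were asked to prove is stated in the paper as a \emph{conjecture}: the authors explicitly say that nothing is known about the structure of $\delta$-goalies or their limits beyond the preceding theorem, and they leave the claim open. So there is no proof in the paper to compare against, and your proposal does not close the gap either --- as you yourself acknowledge. The decisive missing step is exactly the one you flag: passing from the disk unions $\B_{\delta_n}(X_{\delta_n})$ to rectifiable graphs $G_n$ with length at most $\delta_n|X_{\delta_n}|+o(1)$ and a \emph{uniformly bounded} number of connected components. Without that bound the whole strategy collapses, because $\lambda_1$ is not lower semicontinuous under Hausdorff convergence for arbitrary compact sets (finite point sets of zero length converge to segments of positive length); Go\l\k{a}b's theorem rescues lower semicontinuity only when the number of components is uniformly controlled. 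Nothing in the definition of a minimum-cardinality $\delta$-PG forces its points to string out along boundedly many rectifiable strands --- the ``cosecant'' estimate in the paper's theorem is applied to goalies already assumed to converge to a \emph{given} finite segment set $F$, and cannot be bootstrapped into a structural statement about arbitrary optimal goalies. As you note, obtaining such structure is essentially equivalent in difficulty to the Graphical Conjecture itself, which is precisely why the paper leaves both open.

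Two further points deserve emphasis. First, the paper's notion of ``limit'' (conditions $(L1)$--$(L2)$) is defined only for $F$ a finite set of line segments, so proving the conjecture in the paper's sense requires producing a \emph{graphical} limit; your Blaschke-selection set $F^\star$ is merely compact, and the proposed ``upgrade'' via local-optimality arguments ``in the spirit of the later sections'' is circular, since those results presuppose graphical structure. Second, your final step --- that the full family $\{X_\delta\}$, not just the subsequence, satisfies $(L1)$ and $(L2)$ with $F=F^\star$ --- does not follow from the length computation alone: different subsequences could a priori converge to different minimum OCs of the same length, in which case no single $F$ is a limit of the whole family. So the proposal is a reasonable roadmap (compactness plus Go\l\k{a}b is the natural attack, and your upper-bound and OC-property arguments for $F^\star$ are fine), but it is a conditional sketch with the same core obstruction that makes the statement a conjecture rather than a theorem in the paper.
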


\section{Connected opaque covers}

For the remainder of the paper we  continue to assume that $S$ is a convex polygonally-bounded set, and in addition we  consider only graphical OCs. In this section we focus exclusively on connected OCs, both because they are simpler in structure than general graphical OCs, and because the properties of minimum connected OCs give an insight into the structure of the connected components of minimum graphical OCs.

\subsection{Interior connected OCs}

The simplest types of OCs are the interior connected OCs; the interior connected OCP is the only known OC problem for which the solution  can be fully  characterised and computed. The following result is an immediately consequence of Lemma~\ref{steiner}.

\begin{Cor}\label{connected interior OC} A minimum  interior connected OC for $S$ is an MST on $V_S$.
\end{Cor}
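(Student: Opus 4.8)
The plan is to combine Lemma~\ref{steiner} with the interior and connectivity restrictions to pin down the structure exactly. First I would observe that, by definition, an interior connected OC $F$ for $S$ is graphical and connected, so it is a single connected component; write $C = F$. By Lemma~\ref{steiner}, since $F$ is a minimum graphical OC, the component $C$ is an MST on $V_{\bar C}$, where $\bar C$ is the convex hull of $C$. So the entire task reduces to showing that $V_{\bar C} = V_S$, i.e.\ that the vertex set of the convex hull of an optimal interior connected OC is exactly the vertex set of $S$.

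For the inclusion $V_S \subseteq V_{\bar C}$: by Lemma~\ref{SinFbar} we have $S \subseteq \bar F = \bar C$, and since $S$ is a convex polygon with vertex set $V_S$, each $v \in V_S$ is an extreme point of $S$. I would argue that $v$ must then also be an extreme point of $\bar C$: if $v$ were a non-extreme point of $\bar C$, there would be a line through $v$ with all of $\bar C$ (hence all of $S$) weakly on one side, contradicting that $v$ is a vertex of the convex polygon $S$ (there is a line through $v$ strictly separating a neighborhood of $v$ in $S$ from... more carefully: $v$ being a vertex of convex polygon $S$ means $v$ is exposed, so it cannot lie in the interior of a segment contained in $\bar C \supseteq S$ while also being extreme for $S$ — I would spell out the small convexity argument that an exposed point of $S$ remains an extreme point of any convex superset whose... no: that's false in general). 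The correct route: $\bar C = \bar F$ must be an OC for $S$ is irrelevant; what matters is simply that $v \in S \subseteq \bar C$ and $v$ is a vertex of $S$, together with the reverse inclusion below forcing equality of the hulls.

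For the inclusion $\bar C \subseteq S$ (which is where the interior restriction does its work): since $F$ is an \emph{interior} OC, $F \subseteq S$, and because $S$ is convex, $\bar F \subseteq S$, i.e.\ $\bar C \subseteq S$. Combined with $S \subseteq \bar C$ from Lemma~\ref{SinFbar}, this gives $\bar C = S$, hence $V_{\bar C} = V_S$. Plugging this into Lemma~\ref{steiner} yields that $C = F$ is an MST on $V_S$, which is the assertion. The main obstacle — really the only delicate point — is making the argument that $\bar F \subseteq S$ rigorous (it needs convexity of $S$, which we have by hypothesis) and correctly marshalling Lemmas~\ref{SinFbar} and~\ref{steiner}; the rest is a one-line set-containment chain. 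Since the paper signals this is ``an immediate consequence of Lemma~\ref{steiner},'' I expect the intended proof to be essentially this short observation that, in the interior connected case, $\bar F = S$ so that $V_{\bar C} = V_S$.
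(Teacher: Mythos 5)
Your final argument is correct and is essentially the paper's intended (unwritten) one-liner: since $F$ is interior and $S$ is convex, $\bar F \subseteq S$, while Lemma~\ref{SinFbar} gives $S \subseteq \bar F$, so $\bar F = S$ and Lemma~\ref{steiner} yields an MST on $V_{\bar F}=V_S$. The abandoned detour about extreme points is harmless (you correctly flag it as a dead end), and the only pedantic point worth noting is that Lemma~\ref{steiner} is stated for minimum graphical OCs, so one should observe that its replacement argument respects the interior and connected restrictions (the MST on $V_{\bar F}$ is connected and lies in $\bar F\subseteq S$)---a gloss the paper itself leaves implicit.
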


This leads to the following result.

\begin{Thm}\label{connected_interior} The interior connected  OCP is NP-hard, but has a fully polynomial approximation scheme.
\end{Thm}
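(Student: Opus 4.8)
The plan is to route everything through Corollary~\ref{connected interior OC}, which identifies a minimum interior connected OC with a minimum Steiner tree (MST) on $V_S$. The first step is to notice that the interior restriction is free: since $S$ is convex, any MST on $V_S$ already lies in $\mathrm{conv}(V_S)=S$, and conversely (by the argument underlying Corollary~\ref{connected interior OC}, namely that a connected set blocks exactly the lines meeting its convex hull) an interior connected OC must have convex hull equal to $S$ and hence must contain $V_S$. So the interior connected OCP is \emph{exactly} the Euclidean Steiner tree problem on the point set $V_S$, with the special feature that $V_S$ is in convex position. The theorem thus splits into a hardness statement and an algorithmic statement, both about Steiner minimal trees for points in convex position.

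For NP-hardness I would reduce from the (general) Euclidean Steiner tree problem, which is NP-hard by Garey--Graham--Johnson. The issue, and what I expect to be the main obstacle, is that the standard hard instances are not in convex position, so the reduction must build, from an arbitrary instance $\{p_1,\dots,p_n\}$ with length bound $L$, a convex polygon whose vertex set encodes that instance. The natural route is to place a suitably scaled copy of the instance together with a family of $O(n)$ auxiliary ``anchor'' vertices so that (i) the combined vertex set is in convex position and (ii) every near-optimal Steiner tree on it is forced to contain a near-optimal Steiner tree of the embedded copy, so that the optimum length of the constructed instance decides whether $\mathrm{SMT}(\{p_i\})\le L$. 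Making (ii) rigorous -- preventing the Steiner tree from exploiting the geometric freedom introduced by the auxiliary vertices, and keeping all coordinates of polynomial size -- is the technical heart of this half; the familiar sum-of-square-roots subtlety in comparing Euclidean lengths is handled exactly as in the Garey--Graham--Johnson proof.

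For the approximation scheme I would use a dynamic program over the convex structure of $V_S$, in the spirit of the dynamic-programming approach of Provan for convex-position Steiner trees. The structural fact that makes this work is that a Steiner minimal tree is noncrossing, and, since all of $V_S$ lies on the convex hull, deleting any vertex of the tree splits it into subtrees each of whose terminals form a contiguous arc of the cyclic sequence $V_S$; hence an optimal tree decomposes recursively, and a subproblem is specified by a contiguous arc of $V_S$ together with a single point $r$ to which that sub-forest attaches. Restricting the Steiner points (and hence the candidate choices of $r$) to a square grid of spacing $g=\Theta(\epsilon\,\mathrm{diam}(S)/n)$ over a bounding box of $S$ makes the number of subproblems and the cost of each transition polynomial in $n$ and $1/\epsilon$. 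To bound the error, moving each of the at most $n-2$ Steiner points of an optimal tree to its nearest grid point changes the total length by at most $O(ng)=O(\epsilon\,\mathrm{diam}(S))$, and by Lemma~\ref{lembounds} the optimum is at least $\mathrm{diam}(S)$, so the grid solution has relative error $O(\epsilon)$; rescaling $\epsilon$ by a constant yields a $(1+\epsilon)$-approximation in time polynomial in $n$ and $1/\epsilon$, i.e.\ a fully polynomial approximation scheme. The two points needing genuine care in this half are the arc-decomposition property of Steiner minimal trees in convex position and the verification that the grid error does not compound across the Steiner points.
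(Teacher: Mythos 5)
Your reduction of the interior connected OCP to the Euclidean Steiner tree problem on $V_S$ (via Corollary~\ref{connected interior OC} and Lemma~\ref{SinFbar}) matches the paper, and your FPTAS half is essentially a re-derivation of the result the paper simply cites from Provan \cite{Provan}: a dynamic program over contiguous arcs of the convex-position terminals with Steiner points restricted to an $\epsilon$-dependent grid, with the relative-error bound coming from $\mathrm{opt}\geq\mathrm{diam}(S)$. That half is acceptable as a sketch.

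The NP-hardness half, however, has a genuine gap, and as described the plan cannot work. You propose to take an arbitrary hard Euclidean Steiner tree instance and make it a convex-position instance by adding $O(n)$ auxiliary ``anchor'' vertices, but adding points can never promote a point of the original instance onto the convex hull: any $p_i$ lying in the interior of the hull of the other instance points remains a non-extreme point of the enlarged set, so your condition (i) is unachievable without moving the instance points themselves --- and controlling how such a deformation changes the Steiner minimal tree value is precisely the difficulty, which you also explicitly leave open in condition (ii) (``the technical heart''). The paper avoids this entirely by starting from a hardness result that is already essentially in convex position: Brazil, Thomas and Weng \cite{Brazil} show the Euclidean Steiner tree problem is NP-hard even when the terminals lie on two parallel lines; such terminals all lie on the boundary of their convex hull, and a slight outward perturbation (bowing the two lines) makes them the vertices of a convex polygon while preserving the hardness gap. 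Without invoking a convex-position (or boundary-of-hull) hardness result of this kind, or actually carrying out a controlled deformation argument, your NP-hardness claim is not established.
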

\begin{proof}
The NP-hardness follows from \cite{Brazil}, where it is shown that the MST problem is NP-hard even in cases where the points lie on two parallel straight lines.  These points can be perturbed slightly to form the boundary vertices of a convex set, while still maintaining the NP-hardness properties.  A fully polynomial approximation scheme for the MST problem can be found in \cite{Provan}.
\end{proof}
\\[1em]
We note that, despite the NP-hardness of the MST problem, there an efficient solution software tool, GeoSteiner, for finding general MSTs \cite{WWZ}. This means  that fairly large interior OCPs  (for regions $S$ with hundreds of vertices) can be solved in reasonable time.

Interior single-path OCs are considerably easier to find, since they amount to finding the minimum length path that goes through every point of $V_S$.  This problem is known as the {\em Traveling Salesperson Path Problem (TSPP)}, which is the variant of the Traveling Salesperson problem where the beginning and ending points of the tour do not have to be the same.  The following result was proved in \cite{Dumitrescu}.
\begin{Thm}
TSPP on $V_S$, and hence the interior single-path OCP problem on $S$, can be solved in $O(|V_S|^2)$ time.
\end{Thm}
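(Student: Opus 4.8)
The plan is to use the reduction sketched just above — a minimum interior single‑path OC is exactly a minimum‑length Hamiltonian path on $V_S$ (any interior single‑path OC must contain every vertex of $V_S$, since tilting a supporting line at a vertex slightly into $S$ produces lines meeting $S$ only arbitrarily near that vertex, forcing the OC to accumulate there and hence, being closed, to contain it; conversely a connected straight‑line path through all of $V_S$ lies in the convex region $S$ and meets every line through $S$) — and then to solve the resulting Traveling Salesperson Path instance in $O(n^2)$ time, where $n=|V_S|$, by exploiting that $V_S$ is in convex position.

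First I would establish two structural facts about optimal TSP paths on points in convex position. (i) Some minimum‑length Hamiltonian path on $V_S$ is non‑self‑crossing: if two of its edges cross, reversing the subpath between them replaces those edges by two others of no greater total length (the triangle inequality applied at the crossing point); the usual uncrossing argument — cleanest via a perturbation of $V_S$ to general position, where the replacement is strictly shorter — then yields a crossing‑free optimum. (ii) A non‑self‑crossing Hamiltonian path $v_1,\dots,v_n$ on points in convex cyclic position has ``arc structure'': for each $k$ the set $\{v_1,\dots,v_k\}$ is a contiguous arc of the cyclic order and $v_k$ is one of its two extreme vertices. This I would prove by induction on $k$: the unvisited set $\{v_{k+1},\dots,v_n\}$ is the complementary arc, and were $v_{k+1}$ an interior vertex of that arc, the chord $v_k v_{k+1}$ would split the remaining unvisited vertices into two nonempty blocks on opposite sides of it (this is where convexity enters), forcing the simple subpath from $v_{k+1}$ onward to contain an edge crossing $v_k v_{k+1}$, contrary to the assumption that the path is non‑self‑crossing.

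The dynamic program then follows. Index subproblems by a contiguous arc $A$ of the cyclic order $p_1,\dots,p_n$ (each such $A$ carrying two distinguished endpoints, the full sequence being regarded as $n$ distinct arcs, one rooted at each vertex) together with a choice of endpoint $x$ of $A$; let $c(A,x)$ be the minimum length of an arc‑structured Hamiltonian path on the vertices of $A$ that ends at $x$. Peeling off $x$ gives $c(A,x)=\min\{\,|xx'|+c(A\setminus\{x\},x'),\ |xx''|+c(A\setminus\{x\},x'')\,\}$, where $x'$ and $x''$ are the two endpoints of the arc $A\setminus\{x\}$, with base case $c(\{p_i\},p_i)=0$. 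There are $O(n^2)$ arcs and two endpoint choices apiece, each subproblem costs $O(1)$ once the shorter arcs have been processed, and the optimal TSP path length is the minimum of $c(A,x)$ over the $n$ full‑length arcs and their endpoints; the whole computation is $O(n^2)$. By (i) and (ii) the value produced is the true minimum‑length Hamiltonian path on $V_S$, and by the reduction it is the minimum length of an interior single‑path OC.

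The step needing genuine care is (ii): verifying that the chord $v_k v_{k+1}$ really does separate the two blocks of still‑unvisited vertices in the plane — the one place where convex position is essential — and that any simple path through vertices on both sides of a chord must cross it. The uncrossing step in (i) is routine, but since $V_S$ need not be in general position it is safest to argue via the perturbation mentioned above rather than by a direct monovariant on the number of crossings.
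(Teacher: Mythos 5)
Your proposal is correct, but it should be noted that the paper itself gives no proof of this statement: it states the reduction in one sentence (``interior single-path OCs amount to finding the minimum length path through every point of $V_S$'') and attributes the $O(|V_S|^2)$ bound to Dumitrescu et al.\ \cite{Dumitrescu}. What you have done is supply the full argument: first the two-way reduction (an interior single-path OC, being a closed set, must contain each vertex of $V_S$ by the tilted-supporting-line accumulation argument, and conversely a straight-line Hamiltonian path on $V_S$ lies in the convex region $S$ and, containing points on both closed sides of any line meeting $S$, is an OC), and then the standard convex-position TSPP algorithm: uncross an optimal path, prove the prefix-is-an-arc structure, and run the $O(n^2)$ dynamic program over contiguous arcs with a distinguished endpoint. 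This is essentially the argument underlying the cited reference, so the route is not novel, but it is complete and the delicate point is handled correctly: you rightly identify that the separation step in (ii) is where strict convex position is used, and the crossing must indeed occur on the chord $v_kv_{k+1}$ itself rather than on its extension, since the extensions of a chord leave the polygon and no three vertices of a convex polygon are collinear (so no later edge can pass through $v_k$ or $v_{k+1}$). One could quibble that the reduction also needs the remark that a shortest curve visiting all of $V_S$ is no shorter than the shortest Hamiltonian path on $V_S$ (so that containing the vertices really forces length at least the TSPP value), but this is immediate from ordering the vertices by first visit along the path. In short: correct, more detailed than the paper, and matching in substance the proof the paper outsources to \cite{Dumitrescu}.
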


\subsection{General connected OCs}

It is not known whether a minimum OC can always be chosen to be contained entirely in the set $S$, but this is definitely not true for connected, or even single-path, OCs.  Figure~\ref{connectedOC} illustrates many of the terms and results of this section.  This figure gives three OCs on the same nonagon $BCDEFGHJM$.  The figure has been drawn significantly out-of-scale in order to illustrate its structure more clearly.  Figure~\ref{connectedOC}(a) gives the conjectured minimum connected OC for $S$, consisting of the single path $ABCDLEKFGHI$ of length 4091.17.  We will specify its structure more clearly in the discussion below.  Figure~\ref{connectedOC}(b) is the solution to the interior connected OCP as indicated in Corollary~\ref{connected interior OC}, also a single path $BCDMEJFGH$ of length 4100.58.  Figure~\ref{connectedOC}(c) is a popular conjectured optimal structure (see, for instance, the examples in \cite{Brakke}), called the ``tent'', which involves dropping perpendiculars from the right- and left-most points of $S$ to a horizontal line tangent to $S$ and connecting them with the remaining boundary path enclosing $S$.  In this case the OC is $NBCDEFGHP$, for total length 4093.04.  Thus the tent path, while better than the best interior connected OC, is still not the optimal connected OC for this set.
\begin{figure}
   \begin{center}
       \epsfxsize=6in
       \epsffile{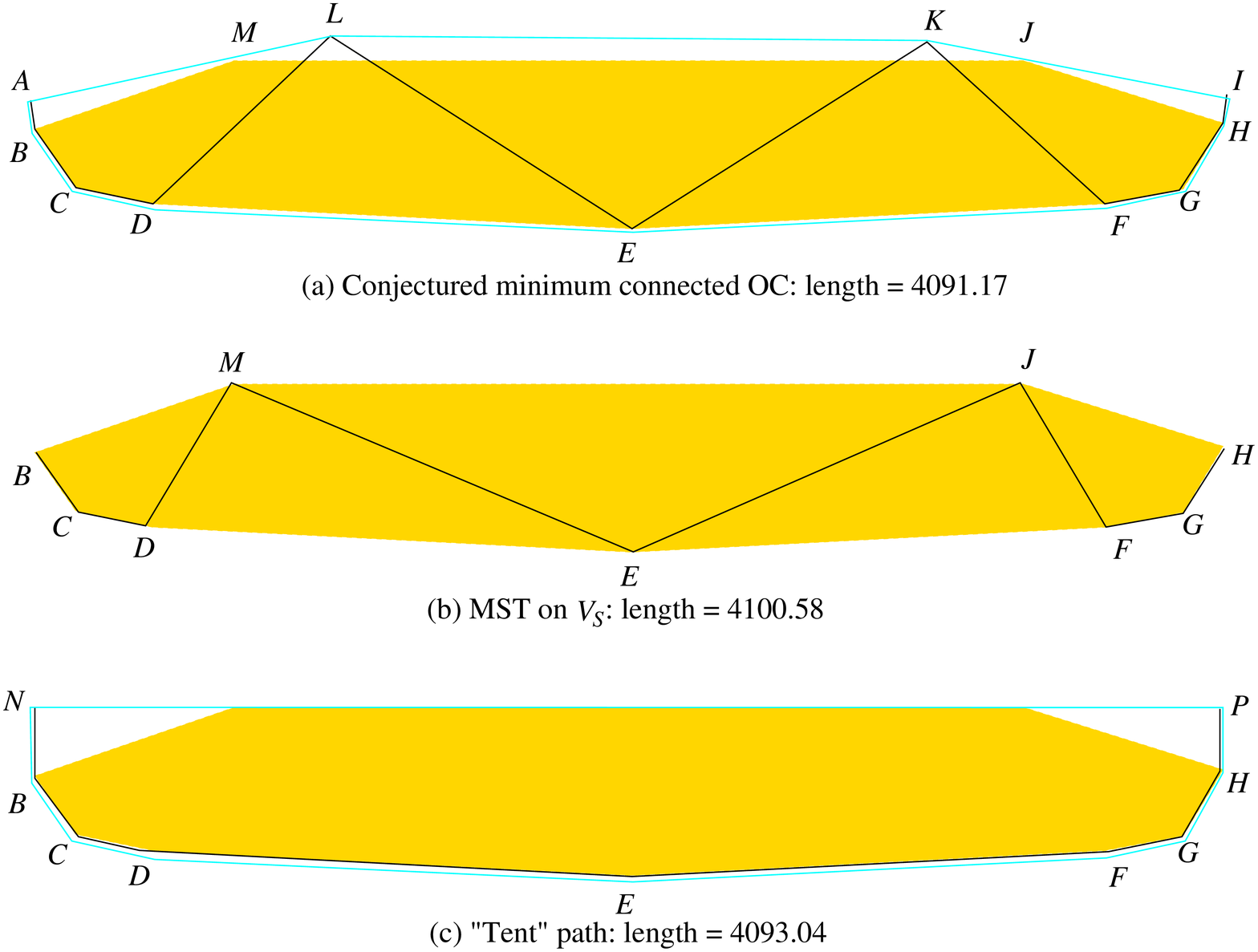}
   \end{center}
\centerline{Coordinate Values}\ \\[-2em]
\[\begin{array}{|c|rr|}
\multicolumn{1}{c}{}&\multicolumn{1}{c}{x}&\multicolumn{1}{c}{y}\\
 \hline
A & -2030.0100 &    35.0000\\
B & -2030.0000 &    33.0000\\
C & -2018.0000 &    15.0000\\
D & -2000.0000 &     5.0000\\
E &     0.0000 &     0.0000\\
F &  2000.0000 &     5.0000\\
G &  2018.0000 &    15.0000\\
H &  2030.0000 &    33.0000\\
I &  2030.0100 &    35.0000\\
J &  1944.1117 &    35.4295\\
L &  1199.3837 &    39.1531\\
M & -1199.3837 &    39.1531\\
K & -1944.1117 &    35.4295\\
\hline
 \end{array}
 \]
   \caption{Three opaque covers}
   \label{connectedOC}
\end{figure}

It follows immediately from Lemma~\ref{steiner} that a minimum connected OC $F$ on $S$ is an MST on $V_{\partial\bar{F}}$, and so a key question is to understand the relationship between $F$, $S$ and $\bar F$.  We begin by examining properties of the set $\bar F$ for a connected OC $F$.

\begin{Lem}\label{conn_hull_lemma}
Let $F$ be a minimum connected OC on $S$.  Then \vspace{-2mm}
\begin{enumerate}\renewcommand{\labelenumi}{$(\alph{enumi})$}
\item
$\bar F$ must contain $S$.
\item \label{conn_hull_lemma2}
The vertices of $\partial \bar F$ are exactly the degree 1 or 2 vertices of $F$.
\item
Every edge of $\partial \bar F$ is either an edge of $F$ or has only its endpoints in $F$.
\item\label{conn_hull_lemma3}
Let ${uv}$ and ${vw}$ be adjacent edges of $\partial \bar F$.
\begin{enumerate}\renewcommand{\labelenumii}{$(\roman{enumii})$}
\item
${uv}\cup {vw}$  intersects $S$ at some point other than $u$ and $w$.
\item
If exactly one of ${uv}$ and ${vw}$, say ${uv}$, is in $F$ then $S$  intersects ${uv}$ at some point other than $u$.
\item
If both ${uv}$ and ${vw}$ are in $F$, then $v$ is in $S$.
\end{enumerate}
\item
There are at most $2|V_S|$ vertices in $\partial\bar{F}$.
\end{enumerate}
\end{Lem}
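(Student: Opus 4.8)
The plan is to prove (a)--(e) in order, with (d) doing the real work and (e) a counting consequence of (d)(i).

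Part (a) is immediate from Lemma~\ref{SinFbar}. For (b): every vertex of $\partial\bar F$ is an extreme point of $F$, hence a graph-vertex of $F$ (a point interior to a segment of $F$ is a proper convex combination of that segment's endpoints, so is not extreme). By Lemma~\ref{steiner}, $F$ is a minimum Steiner tree on $V_{\partial\bar F}$, so the angle between consecutive edges at every vertex is at least $120^\circ$; a vertex of degree $\ge 3$ then has exactly three incident edges at mutual angles $120^\circ$ and so lies in the interior of the convex hull of one interior point from each edge, hence in $\mbox{int}(\bar F)$ --- a contradiction. For the reverse inclusion, the only non-terminal vertices of an MST are Steiner points, which have degree exactly $3$, so every degree-$1$ or degree-$2$ vertex of $F$ is a terminal, i.e.\ a point of $V_{\partial\bar F}$. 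For (c): if the hull edge $uv$ is not an edge of $F$ but some $q\in F$ lies in its relative interior, then $q$ is a boundary non-corner point of $\bar F$, so by (b) it is not a graph-vertex of $F$ (a non-corner graph-vertex on $\partial\bar F$ would have degree $\ge 3$ and hence lie in $\mbox{int}(\bar F)$); thus $q$ lies in the relative interior of an edge $ab$ of $F$. Since $F$ lies in the closed half-plane bounded by the supporting line of $uv$ and $q$ lies on that line, $ab$ lies on that line, and as $\bar F$ meets the line only along $uv$ we get $ab\subseteq uv$; but then, since $a,b\in uv$, if $\{a,b\}\ne\{u,v\}$ one of $a,b$ is a non-corner graph-vertex of $F$ on $\partial\bar F$, contradicting (b). So $uv=ab$ is an edge of $F$ --- a contradiction.

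Part (d) is the heart of the lemma. The common strategy is a \emph{corner-cutting} local modification: assuming the asserted $S$-avoidance near the corner $v$ (which in each case gives $v\notin S$), I replace a small part of $F$ near $v$ by a shorter connector to obtain $F'$ with $\lambda_1(F')<\lambda_1(F)$, and then derive a contradiction by showing $F'$ is still an OC. This is cleanest in case (iii): there $v$ has degree $2$ and its two edges are the hull edges $uv,vw$, so pick $p\in uv$ and $q\in vw$ close enough to $v$ that the triangle $vpq$ misses $S$; since $S\subseteq\bar F$ lies in the wedge at $v$ and the line $\ell$ through $p,q$ spans that wedge, $S$ lies strictly beyond $\ell$ and $\bar F\cap\ell=pq$. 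Take $F'=(F\setminus(vp\cup vq))\cup pq$; this is connected, and since the angle at $v$ lies in $[120^\circ,180^\circ)$ (by convexity of $\bar F$ and the MST angle condition) we have $|pq|<|vp|+|vq|$, so $F'$ is strictly shorter. Moreover $F'$ is still an OC: if a line $L$ met $S$ but not $F'$, then $L$ meets $F$ only inside the deleted stubs and so passes through the triangle $vpq$; since $L$ also meets $S$, which lies beyond $\ell$, convexity of $\bar F$ forces the sub-segment of $L$ from a point of that triangle to a point of $S$ to remain in $\bar F$ and to cross $\ell$, hence to meet $\bar F\cap\ell=pq\subseteq F'$ --- a contradiction. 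For (i) and (ii) one uses analogous modifications (when $v$ is a degree-$2$ vertex whose edges are not both hull edges one instead pushes $v$ slightly inward toward the segment joining its two neighbours; when $v$ is a leaf one trims its pendant edge), but now the cut need not ``seal off'' $S$, and one must also rule out that a line through $v$ reaching $S$ is blocked by $F$ only at $v$ --- here the key fact is that any line through $\mbox{int}(\bar F)$ separates $V_{\partial\bar F}$ into two nonempty parts, so the connected spanning tree $F$ must cross it, and one checks that this crossing (or a replacement edge in $F'$) lies outside the modified region. I expect this case analysis, rather than anything in (a)--(c) or (e), to be the main obstacle of the proof.

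Finally, (e) follows from (d)(i) by a charging argument. For each corner $v$ of $\partial\bar F$, (d)(i) supplies a ``witness'' point of $S$ lying on a hull edge incident to $v$, or else $v\in S$ and then $v\in V_S$ (its interior angle in $S$ is at most that in $\bar F$, hence less than $\pi$). When the witness is interior to a hull edge $e$, the supporting line of $e$ also supports $S$, so $S$ meets that line either along an edge of $S$ contained in $e$ or at a single vertex of $S$ lying on $e$. Using convexity of both $S$ and $\bar F$ --- in particular that $\partial S\cap\partial\bar F$ occurs in the same cyclic order on both curves --- one bounds the number of corners of $\partial\bar F$ that can be charged to a given vertex or edge of $S$; summing over the $2|V_S|$ faces of $S$ then gives $|V_{\partial\bar F}|\le 2|V_S|$. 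Arranging the count so that the constant is exactly $2$, rather than a larger multiple of $|V_S|$, requires a monotone choice of witnesses and careful attention to the arcs of $\partial S$ they cut out; this is the delicate point of part (e).
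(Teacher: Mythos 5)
Your parts (a)--(c) and your corner-cutting proof of (d)(iii) are correct and in the same spirit as the paper, which disposes of all of (d) with a one-line ``move $u$, $v$ or $w$ inward slightly without violating the OC property'' remark; your (iii) is in fact a rigorous version of that remark. The genuine gap is that (d)(i) and (d)(ii) are never actually proved: you describe the intended perturbation (push a degree-2 vertex toward the chord of its neighbours, or trim a pendant edge) but explicitly defer the verification that the perturbed set is still an opaque cover, calling it ``the main obstacle''. That verification is precisely where the content of (i) and (ii) lies, and your own remark that the cut ``need not seal off $S$'' shows the (iii)-argument does not transfer verbatim. The standard way to close it is a compactness/limiting argument: if the $\varepsilon$-perturbation failed for every $\varepsilon=1/n$, the violating lines $L_n$ meet $F$ only inside the vacated $O(1/n)$-neighbourhood of $v$, hence converge to a line $L^*$ through $v$ that meets $S$; one must then show, using the hypothesis that $S$ avoids $uv\cup vw$ away from $\{u,w\}$ (resp.\ avoids $uv\setminus\{u\}$), that either $L^*$ crosses the perturbed tree transversally away from $v$ (e.g.\ because the two neighbours of $v$ lie strictly on opposite sides of $L^*$, so nearby lines are still blocked) or $L^*$ is a supporting line of $\bar F$ along $uv$ or $vw$, which the hypothesis forbids since $L^*$ still meets $S$. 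None of this appears in your write-up, so as it stands (i) and (ii) are assumed rather than proved.

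A second, smaller gap is (e). Your charging scheme over the ``$2|V_S|$ faces'' of $S$ with a ``monotone choice of witnesses'' is left unfinished (you call it the delicate point), and as described it does not obviously close, since a vertex of $S$ interior to a hull edge legitimately witnesses two corners. But (e) requires no delicacy: the witness from (d)(i) can always be upgraded to a \emph{vertex} of $S$ lying on $uv\cup vw$ but distinct from $u$ and $w$. Indeed, if the witness is interior to an edge of $\partial S$, then that edge lies on the supporting line of the hull edge containing the witness, hence is contained in that hull edge, so at least one of its endpoints (a vertex of $S$) lies in $uv\cup vw\setminus\{u,w\}$; and if the witness is $v$ itself, then $v$ is a vertex of $S$ because the interior angle of $S$ at $v$ is at most that of $\bar F$, hence less than $\pi$. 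Since each vertex of $S$ lies in the ``interior'' of at most two of the consecutive-edge pairs of $\partial\bar F$, and there is one such pair per vertex of $\partial\bar F$, one gets $|V_{\partial\bar F}|\le 2|V_S|$ immediately; this is exactly the paper's count.
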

\begin{proof}
$(a)$
This follows directly from the definition of an OC.

$(b)$
Since $F$ is an MST on the vertices of $\partial \bar F$ the Steiner vertices of $F$ cannot lie on the boundary of $\bar F$ (since the three incident edges at a Steiner point form angles of $2\pi/3$ with each other). Furthermore,  the non-Steiner vertices must all be vertices of $\partial \bar F$, and have degree $1$ or $2$.

$(c)$
Again, since $F$ is an MST on the vertices of $\partial \bar F$, there can be no vertices of $F$ lying between the endpoints of an edge of $\partial \bar F$.

$(d)$
If any of the given conditions is violated, then we can move one of the three  points $u$, $v$, or $w$ inward slightly without violating the OC property to obtain a shorter OC. (This is illustrated in Figure~\ref{connectedOC}(a) by the edges ${BA}$ and ${AL}$ for properties (i) and (ii), and by edges ${AB}$ and ${BC}$ for property (iii).)

$(e)$
From (d) we conclude that the interior of any consecutive pair of edges of $\partial \bar F$ contains at least one element of $V_S$.  Thus there can be no more than $2|V_S|$ edges in $\partial \bar F$ in total.
\end{proof}

Figure~\ref{connectedOC}(a) and (c) are examples of two general connected OCs $F$, both of which satisfy the properties in Lemma~\ref{conn_hull_lemma}.  The polygonal region $\bar F$ for Figure~\ref{connectedOC}(a) has 11 sides, two more than that of $S$, and has one edge $LK$ not touching $S$ at all.  Lemma~\ref{conn_hull_lemma}(d), however, guarantees that the two  edges  adjacent to $LK$ in $\partial \bar F$, namely $AL$ and $KI$,  both intersect $S$ in their interiors, in this case at $M$ and $J$, respectively.

We also note that in Figure~\ref{connectedOC}(a) the edge $MJ$ of $\partial S$ can be replaced by an arbitrary number of boundary edges (keeping $S$ convex) without changing the conjectured OCP solution. Hence, although, by Lemma~\ref{conn_hull_lemma}, at least one in every pair of adjacent edges of $\partial \bar F$ intersects $S$, there can be an arbitrary number of adjacent edges of $\partial S$ that do not intersect $\bar F$.

For the general connected OC problem we can, in theory, find the minimum OC if we can first identify the (at most $2|V_S|$-vertex) region $\bar F$.  In order to understand the properties of $\bar F$, the following definitions will be useful (in this and later sections).

\begin{Def}\label{critical_definition} \textbf{Critical lines, points and edges.} \vspace{-2mm}
\begin{enumerate}
\item A {\em critical point}\/ of a graphical OC $F$ for $S$ is a vertex $v$ of $F$ that is not in $V_S$ but which can be perturbed, along with its adjacent edges in $F$, in such a way that the length of $F$ decreases. Note that if $F$ is a minimum graphical OC then any such perturbation will necessarily result in the existence of some violating line $L$ that intersects $S$ but misses $F$.
\item  A {\em critical line} $L$ of $F$ is the limit of violating lines obtained by length-decreasing perturbations of critical points.
\item An edge of $\partial \bar F$ that is contained in a critical line of $F$ is called a \emph{critical edge} of $F$.
\end{enumerate}
\end{Def}

Note that a critical edge of $F$ is an edge of $\partial \bar F$, but is not necessarily an edge of $F$. In Figure~\ref{connectedOC}(a), the critical points are $A$, $L$, $K$, and $I$, while the critical lines are those passing through the edges $AL$, and $KI$, but {\em not}\/ the one passing through $LK$.  In Figure~\ref{connectedOC}(c) the critical points are $N$ and $P$, and the one critical line is $NP$.

\begin{Lem}\label{critical_points_connected}
Let $F$ be a minimum connected OC for $S$.  Then \vspace{-2mm}
\begin{enumerate}\renewcommand{\labelenumi}{$(\alph{enumi})$}
\item  The critical points of $F$ are precisely the vertices of $\partial \bar F$ that are not vertices of $\partial S$, or equivalently, the degree 1 or 2 vertices of $\partial F$ that are not vertices of $\partial S$.
\item Every critical point is on at least one critical line.
\item Every critical line of $F$ contains a critical edge of $F$.
\item  An edge of $\partial \bar F$ is critical if and only if it is not an edge of $\partial S$ but intersects $S$ at some point other than the endpoints.
\end{enumerate}
\end{Lem}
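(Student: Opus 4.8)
The proof rests on three tools. First, since $F$ is a minimum connected OC it is an MST on $V_{\partial\bar F}$, so it has no vertex in the relative interior of an edge, no two edges of $F$ crossing, every vertex of degree at most $3$, and every degree-$3$ vertex a Steiner point (the point minimising the sum of distances to its three neighbours, with $120^\circ$ angles). Second, the \emph{hull observation}: if a line $\ell$ misses a connected set $G$, then $\ell$ cannot meet the interior of $\bar G$, for otherwise the extreme points of $\bar G$ (all lying in $G$) would be split into two nonempty groups and the connected $G$ would have to cross $\ell$. Third, the \emph{wedge argument}: a vertex $v$ of $\partial\bar F$ with $v\notin V_S$ cannot lie in $S$, because near $v$ the convex set $\bar F\supseteq S$ is a wedge of angle $<\pi$, which contains neither an interior point of $S$ nor the two opposite directions along an edge of $S$ through $v$.

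For (a), Lemma~\ref{conn_hull_lemma}(b) identifies the vertices of $\partial\bar F$ with the degree-$1$ and $2$ vertices of $F$; a hull vertex cannot carry three edges that are pairwise at least $120^\circ$ apart, since these span at least $240^\circ$ while the incident edges of $F$ lie in a wedge of angle $<\pi$, so every degree-$3$ vertex of $F$ is a Steiner point, and perturbing a Steiner point together with its incident edges never decreases length; hence no Steiner point is critical. Conversely, a degree-$1$ or $2$ vertex $v\notin V_S$ is a genuine corner, so moving $v$ toward its neighbour (degree $1$) or toward the segment joining its two neighbours (degree $2$) strictly shortens $F$, making $v$ critical. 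For (b), fix a critical point $v$ and a length-decreasing perturbation moving it a distance $t$ to $v_t$; the perturbed $F_t$ is connected and shorter, so by minimality it is not an OC, and there is a violating line $L_t$ (meeting $S$, missing $F_t$). Since $F$ is an OC, $L_t$ meets $F$, but $F\setminus F_t$ lies in the one or two thin triangles with apex $v$; a short case analysis -- noting that any $L_t$ running along an unperturbed incident edge would pass through the unmoved neighbour and hence meet $F_t$ -- forces $L_t$ within distance $O(t)$ of $v$. A subsequential limit $L$ (the $L_t$ lie in a compact family of lines) is then a critical line through $v$.

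For (c), with $L=\lim L_t$ as above, the hull observation says each $L_t$ either misses $\bar F_t$ or supports it. If some subsequence misses $\bar F_t$, a point of $L_t\cap S$ lies in $\bar F\setminus\bar F_t$, a region collapsing to $\{v\}$, whence $v\in S$ -- impossible. So $L_t$ supports $\bar F_t$, and in the limit $L$ supports $\bar F$; thus $L\cap\bar F$ is a face of $\bar F$, and it cannot be the single vertex $v$ (that would give $L\cap S\subseteq\{v\}$ and $v\in S$), so it is an edge of $\partial\bar F$ lying in $L$, i.e.\ a critical edge. For (d)($\Leftarrow$), if $e$ is an edge of $\partial\bar F$ that is not an edge of $\partial S$ but meets $S$ at an interior point $q$, then not both endpoints of $e$ lie in $V_S$ (convexity would force $e\subseteq\partial S$), so one endpoint $v$ is critical by (a); perturbing $v$ inward caves the hull so that $q$ (lying on the deleted edge $e$) falls strictly outside $\bar F_t$, and separating $q$ from $\bar F_t$ gives a violating line $L_t\ni q$ whose subsequential limits support $\bar F$ at the relative-interior point $q$, hence equal the line through $e$; so that line is critical and $e$ is a critical edge. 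For (d)($\Rightarrow$), a critical edge $e\subseteq L$ (a critical line) satisfies, by (c) and (b), $L\cap\bar F=e$ with the generating critical point $v$ an endpoint of $e$ and $v\notin V_S$, so $e$ is not an edge of $\partial S$; and $L\cap S\subseteq e$ is nonempty and avoids $v$, so if it reduced to the other endpoint $v'$ (which must then lie in $V_S$) the hull observation, applied near $v'$ to the supporting lines $L_t$ of $\bar F_t$, would put $L_t$ through a point of $F_t$ (or through $v_t\to v\notin S$), a contradiction; hence $e$ meets $S$ at a non-endpoint.

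The main obstacle is the final step of (d)($\Rightarrow$): excluding a critical edge that touches $S$ only at an endpoint vertex. This, and the limiting steps in (b)--(d), require some care with the perturbation -- that for small $t$ the point $v_t$ is still a vertex of $\bar F_t$ with the same hull-neighbours, that $\bar F_t\to\bar F$ and $L_t\to L$, and that $S\subseteq\bar F_t$ once the caved-in region near $v$ is disjoint from the compact, $v$-avoiding set $S$. Granting these, the single observation that \emph{a line missing $F_t$ must support or miss $\bar F_t$} is what reduces all of (c) and (d) to short contradictions, so pinning down that observation and these stability facts is the crux.
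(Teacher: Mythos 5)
Parts (a), (b), and (d)($\Leftarrow$) of your argument are sound and close in spirit to the paper's proof (your (b) is in fact a more careful justification that the limit line passes through the critical point than the paper bothers to give). The genuine problem is in (c), and it propagates into the step of (d)($\Rightarrow$) that you yourself flag as the crux. Your dismissal of the case ``$L_t$ misses $\bar F_t$'' rests on the claim that $\bar F\setminus\bar F_t$ collapses to $\{v\}$. That is false: moving a hull vertex $v$ inward removes a sliver that is thin but runs along the \emph{entire length} of the two hull edges of $\bar F$ at $v$, so it collapses in the Hausdorff sense onto those edges, not onto $\{v\}$. The tent OC of Figure~\ref{connectedOC}(c) shows this case really occurs: perturbing $N$ downward, the violating lines pass through points of $S$ near the tangency point on the critical edge $NP$, far from $N$, and no contradiction arises. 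Worse, the case you then build on --- ``$L_t$ supports $\bar F_t$'' --- can never occur: a supporting line of $\bar F_t$ meets a face of $\bar F_t$, hence contains a vertex of $\bar F_t$, and every vertex of $\bar F_t$ is an extreme point of the convex hull of the compact set $F_t$ and therefore lies in $F_t$, contradicting the fact that a violating line misses $F_t$. So your two cases are exactly reversed: every violating line misses $\bar F_t$ outright, and that is the case your write-up rules out. The conclusion of (c) is still recoverable (the limit line $L$ contains $v$, misses the interior of $\bar F$ because the $L_t$ miss $\bar F_t$ and $\bar F_t\to\bar F$, and meets $S$ at a limit point which is either $v$ --- excluded by your wedge argument --- or a second point of $L\cap\bar F$, forcing $L\cap\bar F$ to be an edge), but that is not the argument you wrote.

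The same faulty picture (``the supporting lines $L_t$ of $\bar F_t$'') is all you offer for the hard half of (d)($\Rightarrow$), namely excluding a critical edge $e=vv'$ with $L\cap S=\{v'\}$, $v'\in V_S$; as written this step is not proved, and you concede as much. What is actually needed there is an argument of the following kind: since every violating line misses $\bar F_t$, its intersection points $s_t$ with $S$ lie in the sliver $\bar F\setminus\bar F_t$ and converge to $v'$, with the directions $(s_t-v')/|s_t-v'|$ converging to the direction of $e$; because $S$ is a convex polygon, its tangent cone at the vertex $v'$ is closed and locally realized, so this forces an initial segment of $e$ beyond $v'$ to lie in $S$, contradicting $L\cap S=\{v'\}$. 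Some argument of this type (or the paper's terse version of it) must replace your sketch; until then (c) and the forward half of (d) have a genuine gap.
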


\begin{proof}
$(a)$  If we move any vertex of $F$ that is not a vertex of $\partial\bar F$ then $F$ will continue to be an OC for $S$.  Thus $F$ could not be made any shorter by doing this.  Conversely, any vertex of $\partial \bar F$ can be moved along its adjacent edge or along the bisector of the angle of its pair of adjacent edges, so as to decrease the length of $F$.  The equivalent characterization follows from Lemma~\ref{conn_hull_lemma}(b).

$(b)$ This follows from the definition of a critical line.

$(c)$ Suppose  a critical line $L$ does not contain an edge of $\partial \bar F$.  Then it must either \emph{(i)} miss $F$ entirely, \emph{(ii)} intersect the interior of $\bar F$, or \emph{(iii)} touch only a vertex $v$ of $\partial\bar F$.  Options \emph{(i)} and \emph{(ii)} cannot occur, since then no perturbation of $F$ and $L$ will cause $L$ to intersect $S$ and miss $F$. Option \emph{(iii)} cannot occur, since if the point $v$ is in $S$ then it cannot be critical, and if $v$ is not in $S$ then there is no way of perturbing $L$ so that it passes through $S$.

$(d)$ Let $e$ be a critical edge of $F$.  If $e$ did not intersect $S$ in its interior, then no noncritical endpoint of $E$ could be perturbed in such a way as to allow a perturbed line to pass through $S$.  Conversely, suppose $e$ is not an edge of $S$, but has a point $u$ of $S$ in its interior.  Then one of its endpoints $v$ is not in $S$, and perturbing $v$ along its adjacent edge or angle bisector must expose $u$ to the line passing through $L$.
\end{proof}

Note that not every edge of $\partial \bar F$ that is not an edge of $S$ will necessarily be a critical edge, as the example in Figure~\ref{connectedOC}(a) illustrates.

\begin{Def}\label{FCP_definition}
A critical point $v$ is a {\em free critical point (FCP)}\/ if it lies on only one critical line.
\end{Def}

In Figure~\ref{connectedOC}, the critical edges $AL$ and $KI$ in (a) and the critical edge $NP$ in (c) all have both of their endpoints free.  The  importance of FCPs is that they provide a method of further restricting solutions of OCPs, since they can be moved along a critical edge in either direction without violating the conditions of $F$ being an OC.  This allows us to place conditions on the positions and angles of the edges of $F$ adjacent to FCPs in a minimum connected OC.

\begin{Thm}\label{boundary-critical_connected}
Let $F$ be a minimum connected OC for $S$, let $E={v_1v}_3$ be a critical edge of $\partial \bar F$, and let $L$ be the line through $E$.  Referring to the labels given in Figure \ref{boundary_cl} we have
\begin{enumerate}\renewcommand{\labelenumi}{$(\alph{enumi})$}
\item
For $i=1,3$, if $v_i$ is an FCP of degree 1 in $F$, then the adjacent edge must be perpendicular to $L$. If $v_i$ is an FCP of degree 2 in $F$, then the two adjacent edges to $v_i$ in $F$ form equal angles $\alpha_i\leq \pi/6$ with $L$.
\tr{-2}
Suppose that both $v_1$ and $v_3$ are FCPs and $L\cap S$ is a single point $v_2$, as in Figure~\ref{boundary_cl}(a). For $i=1,3$, let $d_i$ be the distance $|v_iv_2|$; then $F$ has the following properties:
\tl
\item
If $v_1$ and $v_3$ have degree 1 in $F$, then $d_1=d_3$.
\item
If $v_1$ has degree 1 and $v_3$ has degree 2 in $F$, then $d_1=2d_3\sin\alpha_3$.
\item
If $v_1$ and $v_3$ have degree 2 in $F$, then $d_1\sin\alpha_1=d_3\sin\alpha_3$.
\tr{-2}
Next, suppose $L\cap S$ is an edge ${v_2v_2'}$ of $\partial\bar{S}$, with $v_1,v_2,v_2',v_3$ in order from left to right on $L$, as in Figure~\ref{boundary_cl}(b). Let $d_1=|v_1v_2|$, $d_2=|v_2v_2'|$ and $d_3=|v_2'v_3|$ ; then $F$ has the following properties:
\tl
\item
If both $v_1$ and $v_3$  are of degree 1 then  $d_1\leq d_2+d_3$ and $d_3\leq d_1+d_2$.
\item
If $v_1$ is of degree 1 and $v_3$ is of degree 2, then
\[
\frac{1}{2}\left(\frac{d_1}{d_2+d_3}\right)\leq \sin \alpha_3\leq\frac{1}{2}\left(\frac{d_1+d_2}{d_3}\right).
\]
\item
If both $v_1$ and $v_3$ are of degree 2 then
\[
\frac{d_1}{d_2+d_3}\leq \frac{\sin \alpha_3}{\sin\alpha_1}\leq\frac{d_1+d_2}{d_3}.
\]
\end{enumerate}
\end{Thm}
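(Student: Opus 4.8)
The plan is to prove all seven parts by first-order variational arguments: in each case I would exhibit a one-parameter family of admissible perturbations of $F$ supported near the critical edge $E$, differentiate $\lambda_1(F)$ along it, and invoke minimality of $F$ to force the derivative to vanish (two-sided families) or to be nonnegative (one-sided families). Throughout I would put $L$ along the $x$-axis with $\bar F$ in the closed lower half-plane, so that every $F$-neighbour of $v_1$ and of $v_3$ lies weakly below $L$, and I would use the fact (Lemma~\ref{steiner}, with the angle condition for minimum Steiner trees) that the angle between the two $F$-edges at a degree-$2$ vertex of $F$ is at least $2\pi/3$. First I would dispose of the case in which $E$ is itself an edge of $F$: by the stationarity computation of the next paragraph, if an endpoint of $E$ were an FCP then sliding it along $L$ toward the other endpoint would either shorten $F$ outright or force that endpoint's other incident $F$-edge to lie along $L$, contradicting that it is a vertex of $\partial\bar F$; so the relative interior of $E$ misses $F$ and the $F$-neighbours of $v_1,v_3$ lie strictly below $L$.

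For part (a) the perturbation I would use is a small slide of an FCP endpoint $v_i$ along $L$, holding its $F$-neighbours fixed. Since $v_i$ is an FCP its only critical line is $L$, it stays on $L$ under the slide, and one checks that every line near $L$ through $S$ is still met by $F$, so the slide is admissible in both directions. If $v_i$ has degree $1$ with neighbour $p$, only $|v_ip|$ varies, and its derivative along $L$ vanishes iff $v_ip\perp L$. If $v_i$ has degree $2$ with neighbours $p,q$, the derivative of $|v_ip|+|v_iq|$ along $L$ vanishes iff the two unit edge vectors have cancelling $L$-components, i.e.\ iff $v_ip$ and $v_iq$ make equal angles $\alpha_i$ with $L$ on opposite sides of the normal; the angle between the two edges is then $\pi-2\alpha_i$, and the Steiner angle condition $\pi-2\alpha_i\ge 2\pi/3$ gives $\alpha_i\le\pi/6$.

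For parts (b)--(d), where $L\cap S=\{v_2\}$, I would rotate the segment $E$ rigidly about $v_2$ through a small angle $\gamma$, again holding the $F$-neighbours of $v_1,v_3$ fixed. The rotated line still passes through $v_2\in S$; a short interval-covering argument on the tops of the edges at $v_1,v_3$ shows the tilted notch still blocks every nearly horizontal line through $v_2$, for both signs of $\gamma$; and for small $\gamma$ the hull stays convex and still contains $S$ because $S$ does not extend past $v_2$ along $L$. Under this rotation a point of $E$ at signed distance $\ell$ from $v_2$ moves a distance $\ell\gamma$ perpendicular to $L$ to first order, and $v_1,v_3$ move in opposite normal directions; using part (a), the first-order change in the total length of the edges at $v_1$ is $-d_1\gamma$ (degree $1$) or $-2d_1\gamma\sin\alpha_1$ (degree $2$), and that at $v_3$ is $+d_3\gamma$ or $+2d_3\gamma\sin\alpha_3$. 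Since the rotation is admissible for both signs of $\gamma$, minimality forces the sum to be zero, giving $d_1=d_3$ in (b), $d_1=2d_3\sin\alpha_3$ in (c) and $d_1\sin\alpha_1=d_3\sin\alpha_3$ in (d).

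For parts (e)--(g), where $L\cap S$ is the edge $v_2v_2'$, I would use the same rotations, about $v_2$ for one bound and about $v_2'$ for the other, except that now only one sign of $\gamma$ is admissible at each pivot: rotating about $v_2$ in the sense that pulls $v_3$ below $L$ drops the notch beneath $v_2'$ and exposes a wedge of nearly horizontal lines through $v_2'$, so only the opposite rotation survives, and minimality then gives a one-sided inequality; combining the two pivots gives the two sides of each displayed bound. The length-derivative formulas are those of (b)--(d) with the signed distance from the pivot to the far endpoint replaced by $d_2+d_3$ (pivot $v_2$) or $d_1+d_2$ (pivot $v_2'$); for instance, at pivot $v_2$ with $v_1$ of degree $1$ and $v_3$ of degree $2$ the surviving rotation gives $-d_1\gamma+2(d_2+d_3)\gamma\sin\alpha_3\ge 0$, i.e.\ $\sin\alpha_3\ge\frac12\,\frac{d_1}{d_2+d_3}$, and pivot $v_2'$ supplies the matching upper bound $\sin\alpha_3\le\frac12\,\frac{d_1+d_2}{d_3}$. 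I expect the hard part to be precisely the feasibility bookkeeping: for each pivot and each sign, deciding which nearly horizontal lines through $L\cap S$ stay blocked by the perturbed edges (hence which rotations keep $F$ an opaque cover), together with the routine but necessary checks that $\partial\bar F$ stays convex, that $\bar F\supseteq S$ persists, and that a perturbation confined to a neighbourhood of an FCP disturbs no critical line other than $L$.
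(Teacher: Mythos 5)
Your proposal is correct and follows essentially the same route as the paper: part (a) by a first-order slide of the FCP along $L$ combined with the $2\pi/3$ Steiner angle bound, and parts (b)--(g) by rotating $L$ about $v_2$ or $v_2'$ with the rate formulas $d\gamma$ (degree 1, perpendicular edge) and $2d\gamma\sin\alpha$ (degree 2, equal angles), two-sided rotations giving the equalities in the single-point case and one-sided admissible rotations giving the inequalities in the edge case. Your extra bookkeeping about which rotations remain opaque (e.g.\ the wedge of nearly horizontal lines through $v_2'$ exposed by the wrong rotation) is consistent with, and somewhat more explicit than, the paper's brief assertion that the perturbed set is still an OC.
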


\begin{figure}[h]
\begin{center}
\input{}\\[1.5em]
(a) $L\cap S$ is a single point\\[1.5em]
\input{}\\[1.5em]
(b) $L\cap S$ is an edge of $\partial \bar S$.
\caption{The two cases for $L$ in Theorem~\ref{boundary-critical_connected}}
\label{boundary_cl}
\end{center}
\end{figure}
\begin{proof}
For $(a)$ we note that in both the degree 1 and 2 case if the angle condition does not hold then we can move $v_i$ slightly in the direction of the smaller of the incident angles and decrease the length of the OC.  The upper bound on $\alpha_i$ comes from the fact that the angle between the edges of $F$ can be no less than $2\pi/3$, since $F$ is an MST.

Statements (b)--(g) all use the following argument from variational calculus. Suppose we have a vertex $v_i$ on $L$ of degree 1 in $F$ where the incident edge in $F$ is orthogonal to $L$. Let $v$ be another point on $L$ at distance $d$ from $v_i$. If we rotate $L$ about  $v$ so that the edge adjacent to $v_i$ increases in length,  then the infinitesimal rate of increase in the length of that edge is $d\sin(\pi/2) = d$. Suppose on the other hand, the vertex $v_i$ on $L$ is of degree 2 in $F$ where the two incident edges both lie on the same side of $L$ and both form equal angles $\alpha_i$ with $L$. If we rotate $L$ about  $v$ so that both edges adjacent to $v_i$ increase in length,  then the infinitesimal rate of increase in the length of each edge  is $d\sin\alpha_i$ and hence the total increase in length is $2d\sin\alpha_i$.
To prove each of the statements (b)--(g), simply note that if any of the (in)equalities do not hold then we can rotate $L$ about $v_2$ or $v'_2$ slightly in the appropriate direction, moving the adjacent edges to $v_1$ and $v_3$ accordingly, so that the resulting set is still an OC and has smaller length than $F$.
\end{proof}

\begin{Cor}\label{boundary-critical_single-path}
Let $F$ be a minimum single-path OC for $S$.  Then Lemmas \ref{conn_hull_lemma} and \ref{critical_points_connected}, and Theorem~\ref{boundary-critical_connected} hold, except that in Theorem~\ref{boundary-critical_connected} the requirement in (a) that $\alpha_i\leq \pi/6$ is no longer required.
\end{Cor}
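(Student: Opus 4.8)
The plan is to re-examine the proofs of Lemmas~\ref{conn_hull_lemma} and~\ref{critical_points_connected} and Theorem~\ref{boundary-critical_connected} and to note that they rest on only three ingredients: the defining property of an opaque cover; local variational optimality of $F$ under perturbations that keep $F$ in its class; and the minimum Steiner tree structure of $F$ supplied by Lemma~\ref{steiner}. For a minimum single-path OC the first two ingredients are available unchanged --- the OC property is the same notion, and perturbing a path (moving a vertex, or rotating a critical line while dragging the incident edges of $F$) again produces a path --- so the task is to replace the third. To that end I would first establish a single-path analogue of Lemma~\ref{steiner}: if $F$ is a minimum single-path OC for $S$ then $F$ is a shortest Hamiltonian path (a TSPP solution) on $V_{\bar F}$. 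The argument copies Lemma~\ref{steiner}: any path meeting every point of $V_{\bar F}$ is connected and contains $\bar F\supseteq S$ in its convex hull, hence blocks every line through $S$, so by minimality $F$ is a shortest such path, and a shortest polygonal path through a fixed finite set of points is the straight-segment path in an optimal visiting order. The consequence I actually use, and would extract explicitly, is that every vertex of $F$ lies on $\partial\bar F$: if $F$ bent at a point $p$ interior to $\bar F$ or in the relative interior of an edge of $\partial\bar F$, then $p$ is not an extreme point of $\bar F$, so replacing the two incident segments $ap,pb$ by $ab\subseteq\bar F$ leaves $\bar F$ --- hence the OC property --- unchanged while shortening $F$, a contradiction; a degree-$1$ endpoint of $F$ at such a $p$ is excluded the same way by trimming its terminal segment.

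With this in place I would classify each step of the three results. The first class comprises the steps that use only the OC definition or a purely local perturbation of $F$: parts $(a),(d),(e)$ of Lemma~\ref{conn_hull_lemma}, parts $(b),(c),(d)$ of Lemma~\ref{critical_points_connected}, parts $(b)$---$(g)$ of Theorem~\ref{boundary-critical_connected}, and the assertions of Theorem~\ref{boundary-critical_connected}$(a)$ other than the bound $\alpha_i\le\pi/6$. Each such perturbation keeps $F$ a single path, so these steps transfer verbatim. The second class comprises the places where the connected-OC proof invokes ``$F$ is an MST on $V_{\bar F}$'' \emph{only} to conclude that $F$ has no vertex strictly inside $\bar F$ and no vertex strictly between the endpoints of an edge of $\partial\bar F$ --- namely the relevant claims in the proofs of Lemma~\ref{conn_hull_lemma}$(b),(c)$ and Lemma~\ref{critical_points_connected}$(a)$ --- and these conclusions are exactly what the structural fact established above provides. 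Hence Lemmas~\ref{conn_hull_lemma} and~\ref{critical_points_connected} hold verbatim for minimum single-path OCs.

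The only genuinely essential appeal to the Steiner-tree structure is in the proof of Theorem~\ref{boundary-critical_connected}$(a)$: there the $2\pi/3$ lower bound on the angle between two edges of an MST at a common vertex forces $\pi-2\alpha_i\ge 2\pi/3$, i.e.\ $\alpha_i\le\pi/6$, at a degree-$2$ FCP. A shortest Hamiltonian path carries no such angle lower bound, so this is precisely the inequality that must be dropped; the remaining content of part $(a)$ --- perpendicularity in the degree-$1$ case and equality of the two angles in the degree-$2$ case --- and all of parts $(b)$---$(g)$ are variational and never used the bound, so they remain valid. Together these observations give the statement of the corollary.

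The main obstacle is the structural step in the first paragraph --- showing that a minimum single-path OC cannot bend off the boundary of its own convex hull. ``Straightening a bend'' is risky in general because it can shrink the hull and destroy the OC property; the point to get right is that this shrinkage happens precisely when the bend sits at an extreme point of $\bar F$, which is exactly the configuration one is not trying to straighten, so the move is safe at every other bend. One should also note, via a standard $2$-opt uncrossing argument, that a minimum single-path OC may be taken to be simple, so that $\partial\bar F$ and its vertices behave as in the connected case; with these points settled, what remains is the bookkeeping against the existing proofs described above.
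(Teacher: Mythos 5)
Your proposal is correct and follows essentially the route the paper intends: the paper states this corollary without proof, treating it as immediate from the observation that the earlier arguments are purely variational except for the MST angle bound $\geq 2\pi/3$, which is exactly the ingredient you identify and drop to lose $\alpha_i\leq\pi/6$. Your substitute for Lemma~\ref{steiner} (a minimum single-path OC is a shortest Hamiltonian path on $V_{\bar F}$, with every vertex of $F$ an extreme point of $\bar F$, justified by the straightening/trimming argument that leaves $\bar F$ unchanged) is precisely the bookkeeping needed to make parts (b) and (c) of Lemma~\ref{conn_hull_lemma} and part (a) of Lemma~\ref{critical_points_connected} carry over, so nothing is missing.
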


Referring again to Figure~\ref{connectedOC}, we have
\begin{itemize}
\item The degree one vertices $A$ and $I$ in (a) and $N$ and $P$ in (c) satisfy Theorem~\ref{boundary-critical_connected}(a).
\item
The endpoints of the critical edges $AL$ and $KI$ in (a) satisfy Theorem~\ref{boundary-critical_connected}(c).
\item
The endpoints of the critical edge $NP$ in (c) satisfy Theorem~\ref{boundary-critical_connected}(e).
\end{itemize}

Theorem~\ref{boundary-critical_connected} is limited, covering only the critical edges of $\partial \bar F$ for which the endpoints are FCPs.  The situation can become much more complicated when two or more consecutive edges of $\partial \bar F$ are critical.
In these cases, altering the shape of $\bar F$ may result in a very complex change in the position and length of the MST for $\partial\bar{F}$.  Thus even for general connected or single-path problems the OCP solutions appear to be difficult to characterize, let alone find. We note, however, that we know of no examples of polygonal regions $S$ for which the conjectured minimum connected OC has a critical edge, the endpoints of which are not both FPCs.

We finish the section by solving one special case of the connected OCP.  Namely, when $S$ is a triangular region we show that $\bar F = \bar S$, and hence the minimum connected OC is an MST on $V_S$. For this theorem we first recall some well-known facts about MSTs (see, for example, \cite{Hwang}).

Let $T$ be an MST on point set $V$. Let $|T|$ denote the total length of $T$ (that is, $|T| := \lambda_1(T)$). A {\em Steiner vertex}\/ of $T$ is a vertex of $T$ that is not in $V$.  Then
\begin{enumerate}
\item
$T$ is made up of at most $2|V|-3$ edges.
\item
Each vertex of $T$ has degree 1, 2, or 3.
\item
All Steiner vertices have degree 3, and the incident edges form angles of $2\pi/3$ with each other.
\item
The degree 2 vertices have their adjacent edges forming angles of at least $2\pi/3$.
\end{enumerate}

An MST interconnecting a given set of $n$ points is said to be \emph{full} if each of the given points has degree $1$ in the MST; such a tree has exactly $n-2$ Steiner vertices. Suppose $T$ is a full MST on the set $\{ a,b,c\}$ with Steiner vertex $s$, and let $e=e_c(a,b)$ be the unique point such that $\triangle abe$ is an equilateral triangle and $e$ lies on the opposite side of the line through $ab$ to $c$. Then $ce$ is known as the \emph{Simpson line} for $T$, and has the properties that: (1) $|T| = |ce|$, and (2) $s$ lies on $ce$. This relationship between $|T|$ and the Simpson line no longer applies if $T$ is not full, or more specifically if $\triangle abc$ contains an angle strictly greater than $2\pi/3$.  Also note that the length of any edge of $\triangle abc$ is less than $|T|$.

Before proving the main theorem, we require the following lemma.

\begin{Lem}\label{localmax}
Let $p$ and $q$ be given points in  $\Re^2$, and let $r=r(x)$ be a variable point on a given line (in  $\Re^2$) parameterized by the real variable $x$. Let $T(x)$ be the MST on $\{ p, q, r \}$ with length $l(x):= |T(x)|$. If $l(x_0)$ is a local maximum for $l$ then $T(x_0)$ is a non-full MST such that $r$ has degree $2$ and $\angle prq = 2\pi/3$.
\end{Lem}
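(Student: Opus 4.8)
The plan is to reduce everything to the classification of minimum Steiner trees on three points together with the convexity of Euclidean distance functions. Assume $l(x_0)$ is a local maximum; write $r_0 = r(x_0)$ and let $e = e_{r_0}(p,q)$ be the Simpson point opposite $r_0$. Recall that $T(x_0)$ is either \emph{full} --- all three angles of $\triangle pqr_0$ strictly below $2\pi/3$, one interior Steiner point, length $|r_0e|$ --- or \emph{non-full}, in which case it is the two-edge path through the unique vertex whose angle is at least $2\pi/3$. Correspondingly, on a neighbourhood of $x_0$ the function $l$ equals exactly one of $|r(x)-e|$ (full), $|pq|+|r(x)-p|$ (path through $p$), $|pq|+|r(x)-q|$ (path through $q$), or $|r(x)-p|+|r(x)-q|$ (path through $r$); in the first case $e$ stays fixed, since a triangle with all angles below $2\pi/3$ cannot have $r$ on the line $pq$, so $r$ cannot cross that line while the tree remains full. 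Each of these four expressions is a sum of Euclidean distances from the moving point $r(x)$ to fixed points, hence a convex function of $x$, and strictly convex unless the line of motion passes through all the relevant fixed points. (If $r_0\in\{p,q\}$ then $l(x_0)=|pq|$, the global minimum of $l$, so $x_0$ would be a local minimum; discard this case.)

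First I would rule out the full case and the ``path through $p$ or $q$'' cases. If $T(x_0)$ is full, then $T(x)$ is full for $x$ near $x_0$, so $l$ agrees with the strictly convex $|r(x)-e|$ near $x_0$, which has no local maximum --- a contradiction. If $T(x_0)$ is the path through $p$ with $\angle r_0pq>2\pi/3$, the same reasoning applied to $|pq|+|r(x)-p|$ (strictly convex, or strictly monotone near $x_0$ if the line passes through $p$) again rules out a local maximum; the case of $q$ is symmetric. If instead $\angle r_0pq=2\pi/3$ exactly, then $x_0$ lies on the boundary between the full and path-through-$p$ regimes; there the Steiner point coincides with $p$, so $p$ lies on the segment $r_0e$ and the two one-sided expressions for $l$ agree in value ($|r_0e|=|r_0p|+|pe|=|r_0p|+|pq|$, since $|pe|=|pq|$) and, because $r_0,p,e$ are collinear, also in first derivative; hence $l$ is convex and $C^1$ near $x_0$ and has no local maximum. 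So the degree-$2$ vertex of $T(x_0)$ is $r$, and $T(x_0)$ is non-full.

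Finally, if $\angle pr_0q>2\pi/3$, this persists near $x_0$, so $l$ agrees with $|r(x)-p|+|r(x)-q|$ there; this is strictly convex unless the entire line of motion lies on the segment $pq$ --- a degenerate configuration I would set aside, since there $T(x_0)$ is still the non-full segment $pq$ with $r$ of degree $2$ --- so $x_0$ is not a local maximum. This forces $\angle pr_0q=2\pi/3$, whence $T(x_0)$ is the non-full path $pr_0\cup r_0q$ in which $r$ has degree $2$, which is the claim. The hard part will be the point implicitly needed above, namely that no local maximum can hide on the boundary between the full regime and the path-through-$r$ regime: one must check that $l$ glues together $C^1$ across \emph{that} boundary as well. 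Here I would use that $pr_0qe$ is a cyclic quadrilateral --- its opposite angles at $r_0$ and $e$ sum to $2\pi/3+\pi/3=\pi$ --- so Ptolemy's theorem gives $|r_0e|=|r_0p|+|r_0q|$ (value agreement), while the matching of first derivatives follows from $\widehat{r_0-e}=\widehat{r_0-p}+\widehat{r_0-q}$, the right-hand side being a unit vector because its two summands meet at angle $2\pi/3$.
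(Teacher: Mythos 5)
Your argument is correct and it arrives at the same case split as the paper (classify by the topology of $T(x_0)$ and rule out a local maximum in every configuration except degree $2$ at $r$ with angle exactly $2\pi/3$), but the machinery is genuinely different. The paper's proof is a two-line perturbation argument: when $r$ has degree $1$ it moves $r$ along the line so as to lengthen the Simpson line (full case) or the incident full component (non-full case), and when $r$ has degree $2$ with angle strictly greater than $2\pi/3$ it moves $r$ outside the ellipse with foci $p$ and $q$. You instead write down the exact distance-sum formula for $l$ in each regime and use convexity along the line of motion, and this buys you an explicit treatment of the boundary case $\angle r_0pq=2\pi/3$, which the paper's ``lengthen the incident component'' argument glosses over (there the topology can switch to full on one side, so lengthening $|r p|$ alone does not force $l$ to increase); your value- and first-derivative matching via the collinearity of $r_0$, $p$, $e$, combined with one-sided convexity, closes that case properly --- the only unstated step being that a convex function which is nowhere locally constant has no interior local maximum. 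Two caveats. First, your closing paragraph about gluing across the full/path-through-$r$ boundary is not needed: a local maximum occurring on that boundary is precisely the allowed conclusion of the lemma (though the Ptolemy identity $|r_0e|=|r_0p|+|r_0q|$ and the unit-vector identity $\widehat{r_0-e}=\widehat{r_0-p}+\widehat{r_0-q}$ that you quote are both correct). Second, both you and the paper silently set aside the degenerate configuration in which the line of motion is the line through $p$ and $q$, where $l$ is constant on the segment $pq$ and the statement only survives under a strict reading of ``local maximum''; your dismissal of $r_0\in\{p,q\}$ as ``a local minimum'' needs the same degeneracy excluded, since a locally constant function is both. You at least flag this, and it cannot occur in the paper's application of the lemma, where $p$, $q$, $r$ are vertices of a nondegenerate triangle.
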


\begin{proof}
Let $L$ be the given parameterized line in $\Re^2$. If, for some point $r=r(x)$ on $L$, $r$ is of degree 1 in $T(x)$ then $l(x)$ is not a local maximum, since moving $r$ in one of the two possible  directions along $L$ will increase the length of the Simpson line (if $T(x)$ is full), or the full component of $T(x)$ incident with $r$ if $T(x)$ is not full. Similarly, if, for some point $r(x)$ on $L$, $r$ is of degree 2 in $T(x)$ and the angle at $r$ is strictly greater than $2\pi/3$ then $l(x)$ is not a local maximum since moving in one of the two possible  directions along $L$ takes one outside the ellipse with foci $p$ and $q$ and with $r$ on the boundary.
\end{proof}

\begin{Thm}\label{contriangle}
For any  triangular region $S=\triangle abc$, the minimum connected OC for $S$ is the MST on $\{ a, b, c \}$.
\end{Thm}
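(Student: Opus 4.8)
The plan is to show that $\bar F=S$ for every minimum connected OC $F$ of $S=\triangle abc$, from which the theorem follows. Indeed, Lemma~\ref{steiner} then gives that $F$ is an MST on $V_{\partial\bar F}=\{a,b,c\}$; and conversely the MST on $\{a,b,c\}$ is itself a connected OC for $\triangle abc$, because a line meeting $\triangle abc$ cannot have all of $a,b,c$ strictly on one side of it, so it either passes through one of them or has two of them strictly separated, and in either case it meets every connected set containing $a$, $b$ and $c$. By Lemma~\ref{conn_hull_lemma}(a) we already have $\bar F\supseteq S$, so it remains only to rule out $\bar F\ne S$.

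Suppose $\bar F\ne S$, so that $S$ is a proper subset of $\bar F$. Then $\partial\bar F$ has a vertex $v_1\notin\{a,b,c\}$ (otherwise $\bar F\subseteq\mathrm{conv}\{a,b,c\}=S$), which by Lemma~\ref{critical_points_connected}(a) is a critical point. By parts (b)--(c) of that lemma, $v_1$ lies on a critical line $L$ which contains a critical edge $e$ of $\partial\bar F$; since $v_1$ lies on both $L$ and $\partial\bar F$ and the line through an edge of $\partial\bar F$ supports the convex region $\bar F$, it follows that $v_1$ is an endpoint of $e=v_1v_3$, and by Lemma~\ref{critical_points_connected}(d) the relative interior of $e$ meets $S$. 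As $L$ supports $S$ too, $L\cap S$ is a face of $\triangle abc$ lying strictly inside $e$ (recall $v_1,v_3\notin S$), so we are in one of the two cases of Theorem~\ref{boundary-critical_connected}: $L\cap S$ is a single vertex of $\triangle abc$ lying strictly between $v_1$ and $v_3$ (Figure~\ref{boundary_cl}(a)), or $L\cap S$ is an edge of $\triangle abc$ strictly contained in $e$ (Figure~\ref{boundary_cl}(b)); in particular $L$ is one of the finitely many supporting lines of the triangle touching it at a vertex or along an edge. I would then reach a contradiction variationally. Move $v_1$ --- and, where needed, $v_3$ and any other critical vertices --- along $L$ toward $S$; this keeps $\bar F\supseteq S$ for a positive range, since $e$ keeps meeting the same face of $S$ and the pair of $\partial\bar F$-edges at $v_1$ already contains that face, so Lemma~\ref{conn_hull_lemma}(d) remains satisfied, and hence $F$ stays a connected OC. The length of $F$ under this deformation is controlled by Lemma~\ref{localmax}: restricting to the full Steiner component of $F$ having $v_1$ as a leaf and collapsing its remaining (fixed) leaves by the Simpson-line construction to a single point $E$, the length of $F$ equals $|v_1E|$ plus a constant --- a convex function of the position of $v_1$ on $L$, with no interior local maximum except at the degenerate configurations identified in that lemma. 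Combining this rigidity with the degree and angle relations that Theorem~\ref{boundary-critical_connected}(a)--(g), together with the MST angle conditions, impose simultaneously at all the critical vertices, one argues that these local pictures cannot be assembled consistently around $\triangle abc$ while $\bar F\ne S$: in effect each critical edge forces $F$ to carry extra length outside $S$ near the face it touches, and a triangle is too tight to accommodate this instead of rerouting $F$ through a shorter interconnection of $\{a,b,c\}$. This contradicts the minimality of $F$, so $\bar F=S$.

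The main obstacle is making this last step rigorous. Moving a critical vertex along $L$ can change the Steiner topology of $F$, so the convex ``$|v_1E|$'' description is valid only piecewise and must be paired with a termination argument showing that finitely many local improvements genuinely shrink $\bar F$ to $S$ rather than merely relocating the defect. In addition, when two or more consecutive edges of $\partial\bar F$ are critical the endpoints of $e$ need not be free critical points, so Theorem~\ref{boundary-critical_connected} does not apply verbatim and several critical vertices must be deformed together. Carrying out this combinatorial case analysis, and checking that no configuration with $\bar F\ne S$ survives as even a local minimum, is where the substantive work lies.
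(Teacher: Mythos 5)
Your proposal reduces the theorem to showing that every minimum connected OC $F$ for the triangle satisfies $\bar F=S$, but the step that would establish this is precisely the part you leave unproved: the final paragraph concedes that "checking that no configuration with $\bar F\ne S$ survives as even a local minimum\dots is where the substantive work lies." That is not a technical loose end, it is the whole theorem, and the route you sketch for it is unlikely to close. The tools you invoke are all \emph{local} optimality conditions (Lemma~\ref{critical_points_connected}, Theorem~\ref{boundary-critical_connected}, the MST angle conditions, and an extension of Lemma~\ref{localmax} from three points to general components, where the "convex function of $v_1$" claim already fails because the Steiner minimal length is a minimum over topologies of such functions and need not be convex). But these local conditions are demonstrably consistent with $\bar F\supsetneq S$: the conjectured minimum connected OC in Figure~\ref{connectedOC}(a) has $\bar F\ne S$ and satisfies every one of them. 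So no assembly of these local pictures can by itself force $\bar F=S$; what is needed is a global, triangle-specific length comparison, and your sketch offers only the assertion that "a triangle is too tight," with no argument. There is also a smaller unjustified step: sliding $v_1$ along the critical line $L$ "toward $S$" is exactly the kind of perturbation that the criticality of $L$ is designed to obstruct (it is length-neutral at best at a free critical point by Theorem~\ref{boundary-critical_connected}), so the claimed variational contradiction does not come for free.

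For comparison, the paper never proves $\bar F=S$ and never needs to. It argues directly that for any connected OC $F$, the MST $T_0$ on $\{a,b,c\}$ satisfies $|T_0|\le|T_{\bar F}|\le\lambda_1(F)$, where $T_{\bar F}$ is the MST on the vertices of $\partial\bar F$: one moves the three triangle vertices outward along $\partial\bar F$ so that $|T_0|$ only increases (this is where Lemma~\ref{localmax}, for three points, is actually used), reducing to the cases where all three sit at vertices of $\partial\bar F$, or one sits on a boundary edge as a degree-$2$ terminal with angle $2\pi/3$; the latter cases are handled by Simpson-line monotonicity and a four-terminal Steiner tree comparison. That global comparison bypasses the structure of $F$ entirely and is the ingredient your proposal is missing.
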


\begin{proof}
Let $F$ be a connected OC for $S$; then Lemma~\ref{SinFbar} implies that $\bar F$ is a polygonal region containing $S$. By Lemma~\ref{steiner}, the theorem is equivalent to showing that the MST $T_0$  for $\{ a,b,c \}$ is no longer than the MST $T_{\bar F}$  for the vertices of  ${\partial \bar F}$. The strategy of the proof is to take a copy of $S$ and to try to expand it to find a new MST $T$ with the property that $|T_0|\leq |T|$ and $|T|\leq |T_{\bar F}|$.

First note that we can  assume that $a,b,c$ lie on the boundary of ${\bar F}$, since if any of the points is in the interior of ${\bar F}$ it is easy to see that  $S$ can be expanded in such a way as to increase $|T_0|$.
Furthermore, it follows from Lemma~\ref{localmax} that, by moving the vertices of $S$ along edges of ${\partial \bar F}$ so that $|T_0|$ increases, we can assume that two of the vertices of $S$ (say, $a$ and $b$) are vertices of ${\partial \bar F}$, and that the third vertex of $S$ ($c$) is either a vertex of ${\partial \bar F}$ or a degree $2$ terminal of $T_0$ with an angle of exactly $2\pi/3$, lying on a boundary edge of ${\bar F}$. There are three cases to consider.

\smallskip

\noindent \textsc{Case} 1: Suppose $c$ is a vertex of ${\partial \bar F}$; ie, all three points $a,b,c$ are vertices of ${\partial \bar F}$. In this case, $T_{\bar F}$ interconnects  $a,b,c$, hence by the definition of a MST it follows that $|T_0| \leq |T_{\bar F}|$, as required.

\smallskip

\noindent \textsc{Case} 2: Suppose $c$ is a degree 2 terminal of $T_0$ with an angle of exactly $2\pi/3$ and lying on the boundary edge $v_1 v_2$ of ${\partial \bar F}$ where $v_1$ is also a vertex of $S$ (say $v_1=a$); see Figure~\ref{tricases}.
\begin{figure}[ht]
\begin{center}\includegraphics[width=5.0in]{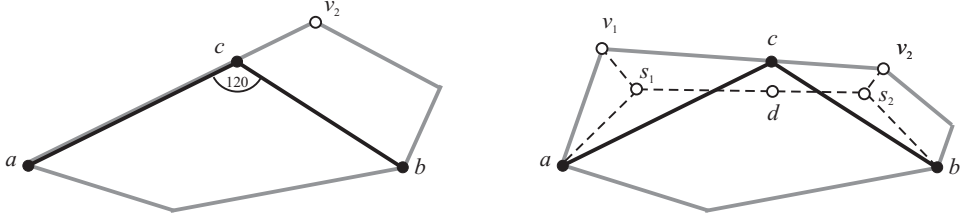}
\end{center}
\caption{Diagrams for Cases~2 and 3 of the proof. The boundary of the polygon ${\bar F}$ is indicated in grey, the MST $T_0$ for $\{ a,b,c \}$ is shown in black. In each case, the angle at $c$ in $T_0$ is  $2\pi/3$.}\label{tricases}
\end{figure}
Let $T_2$ be the MST for $\{ a,b,v_2 \}$. The tree $T_0$ is a degenerate full MST with Simpson line $e_c(a,b)c$. Noting that $\angle e_c(a,b)cv_2 =  2\pi/3 > \pi/2$, it follows that moving $c$ along $v_1 v_2$ towards $v_2$ increases $|T_0|$, and hence $|T_0| < |T_2| \leq |T_{\bar F}|$, where the second inequality again follows from the argument in Case~1.

\smallskip

\noindent \textsc{Case} 3: Suppose $c$ is a degree 2 terminal of $T_0$ with an angle of exactly $2\pi/3$ and lying on the boundary edge $v_1 v_2$ of ${\partial \bar F}$ where  neither $v_1$ nor $v_2 $ is in  $\{ a,b \}$. Let $T_{1,2}$ be a MST for $\{ v_1,v_2,b,a \}$ (where the points are listed in clockwise order around the quadrilateral). By considering the angles between the two diagonals of the quadrilateral $v_1v_2ba$, it can easily be shown, using for example \cite{Du}, that either $T_{1,2}$ has a full topology where one of the Steiner vertices, $s_1$, is adjacent to $a$ and $v_1$ and the other, $s_2$, is adjacent to $v_2$ and $b$, or else  $T_{1,2}$  is a degeneracy of this topology; see the second diagram in Figure~\ref{tricases}. We can assume (by relabeling if necessary) that $|v_1s_1| \geq |v_2s_2| \geq 0$. If $|v_1s_1|=0$ then  $T_{1,2}$ interconnects  $a,b,c$ and hence $|T_0| \leq |T_{1,2}| \leq |T_{\bar F}|$. If $|v_1s_1|>0$ then let $d$ be the closest point on $s_1s_2$ to $c$, and note that $|cd|< |v_1s_1|$. Let $T'_{1,2}$ be the tree obtained from $T_{1,2}$  by removing the edge $v_1s_1$ and adding the edge $cd$. Then $T'_{1,2}$ interconnects $a,b,c$ and hence
$$|T_0| \leq |T'_{1,2}|\leq |T_{1,2}| \leq |T_{\bar F}|$$
giving the required result for the final case.
\end{proof}

\medskip

\textbf{Note:} the above result is mentioned in a few previous papers (such as \cite{Faber}), but this appears to be the first time a proof has appeared in the literature.

\section{Opaque covers with multiple connected components}

For the remainder of this paper we will be considering the general graphical OCP, again for a convex polygonally-bounded set $S$ but now allowing for the possibility of multiple connected components.  We will not be considering the interior version of the problem; although there are a couple of interesting modifications of the material in this section that would apply to the interior problem, they are not very illuminating in their own right.

The conjectured optimal connected solution for the set $S$ given in Figure~\ref{connectedOC}(a) is not the best graphical OC.  A smaller OC for $S$ is based on a construction of Shermer \cite{Shermer}.  He conjectured that the minimum OC for a convex polygonal set $S$ can be obtained by the following procedure:
\begin{enumerate}
\item Triangulate $S$, and then remove all of the diagonals inside some convex region $P_0$ made up of a chosen subset of the triangles.
\item
Find the MST $T_0$ on the vertices of $P_0$.
\item
Recursively for $i=1,\ldots,|V_S|-|V_{P_0}|$ choose a triangle $\Delta_i$  (not in $P_{i-1}$) in the triangulation that shares an edge $e_i$ with $P_{i-1}$, and  let $A_i$  be the altitude of $\Delta_i$ perpendicular to $e_i$.  Set $P_i=P_{i-1}\cup\Delta_i$ and $T_i=T_{i-1} \cup A_i$.
\item
These steps can be repeated for all triangulations of $S$. The final $T_i$ obtained in each case will be an OC of $S$.
\end{enumerate}

In Figure~\ref{disconnectedOC}, $P_0$ is the lower half of $S$, and the perpendiculars are inserted as given in the figure.  In this case Shermer's construction yields what we conjecture to be the minimum OC. However, as we will show below (in the example in Figure~\ref{2C_OC_Figure}),  it does not do so in every instance.

\begin{figure}
   \begin{center}
       \epsfxsize=6in
       \epsffile{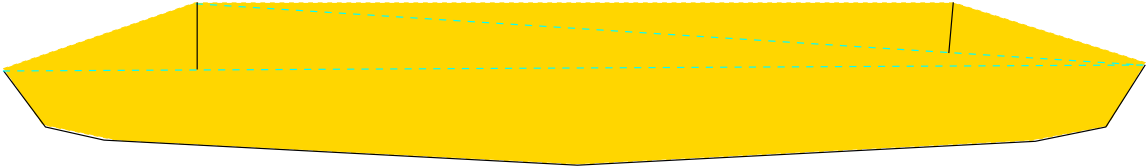}
   \end{center}
   \caption{Conjectured minimum  OC for Figure~\ref{connectedOC}: length = 4089.27}
   \label{disconnectedOC}
\end{figure}

In \cite{Dumitrescu}, Theorem 7, it is shown how to efficiently recognize a graphical set as being an OC for $S$. By observing the configurations obtained from finding violating lines we can give a more precise description of critical points and lines in the multi-component case.

\medskip

\noindent \textbf{Notation:} For a graphical OC $F$ made up of multiple components, let $\cal F$ be the set of extreme points of the convex hulls of the components of $F$.

\begin{Prop}\label{OC_component_lemma}
Let $S$ be a convex polygonally-bounded set, and let $F$ be a finite set of line segments consisting of at least two connected components.  Then $F$ is an OC for $S$ if and only if there do not exist points $u\in \cal F$ and $v\in V_S\cup \cal F$, and a line $L$ intersecting $S$ and containing $u$ and $v$, having the following properties (see Figure~\ref{OC_component_figure}):
\tr 2
[$(a)$] Every component of $F$ lies entirely in one of the half-planes determined by $L$; and
\tn
[$(b)$] In a left-right traversal of $L$, the components of $F$ that intersect $L$ to the left of $v$, and those that intersect $L$ to the right of $v$, lie in opposite half-planes as determined by $L$; and
\tn
[$(c)$] Either $L$ intersects the interior of $S$, or $(i)$ $v\in V_S$, $(ii)$ $S$ and the component of $F$ containing $u$ are both tangent to $L$ from the same side, and $(iii)$ all components tangent to $L$ on the $v$-side of $u$ lie on the opposite side of $L$ to $S$.
\tl
\end{Prop}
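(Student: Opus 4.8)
The plan is to characterise the \emph{violating lines} of $F$ — the lines meeting $S$ but missing $F$ — and to show that one exists if and only if one can be moved into a \emph{pinned} position, supported by two feature points, whose combinatorics are precisely (a)--(c). Two elementary facts will be used throughout. Since $F$ is compact, a line missing $F$ misses it by a positive margin, so the set of violating lines is open (this is the strip argument already used in the proof of the theorem of Section~3); and since each component $C$ of $F$ is connected, a line disjoint from $C$ leaves $C$ in one of its open half-planes, so a line missing $F$ in fact misses every $\mathrm{conv}(C)$ — equivalently, any line meeting $\mathrm{int}(\mathrm{conv}(C))$ meets $C$. I will also note that every point of $\cal F$ actually lies on $F$, since the extreme points of the convex hull of a compact set belong to the set.

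For the direction ``configuration $\Rightarrow$ $F$ is not an OC'' I would, given $u,v,L$ satisfying (a)--(c), construct violating lines converging to $L$. Put $L$ horizontal with $v$ at the origin. By (a) no component crosses $L$, and by (b) the components touching $L$ strictly left of $v$ and those touching it strictly right of $v$ lie on opposite sides of $L$, so exactly one rotation of $L$ about $v$ clears every touching component except possibly the unique component $C_v$ through $v$. In the first clause of (c) I perform a small such rotation and then translate $L$ infinitesimally off the side of $C_v$; the result still meets $\mathrm{int}(S)$ and now misses all of $F$. In the second clause, $v\in V_S$, the component of $u$ is $C_v$ and lies on the same side of $L$ as $S$ by (c)(ii), and by (c)(iii) every component touching $L$ on the $v$-side of $u$ is on the opposite side; one checks that the rotation about the vertex $v$ which drives $L$ into $\mathrm{int}(S)$ is exactly the clearing rotation, after which the same ``rotate, then translate off $C_v$'' step produces violating lines limiting to $L$.

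For the converse I would start from any violating line $L_0$ and canonicalise it. First, since a supporting line of the convex region $S$ can be rotated about a point of $L_0\cap S$ so as to meet $\mathrm{int}(S)$, and since $\mathrm{dist}(L_0,F)>0$, I may assume $L_0$ meets $\mathrm{int}(S)$. Then I translate $L_0$ perpendicular to itself until it first touches some $\mathrm{conv}(C_1)$ — the first contact of a translate with a convex polygon is at a vertex, which is a feature point $u\in\cal F$ — or until it becomes tangent to $\partial S$; the latter cannot block us in all directions of motion, since otherwise every line meeting $\mathrm{int}(S)$ would miss $F$, which is impossible when $F\neq\emptyset$ (use the line through an interior point of $S$ and a point of $F$). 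Having reached $u$, I rotate $L$ about $u$, keeping it tangent to $\mathrm{conv}(C_1)$ and violating as long as possible, and stop at the first new tight constraint: tangency to another $\mathrm{conv}(C_2)$ (giving $v\in\cal F$), or passage through a vertex of, or along an edge of, $\partial S$ (giving $v\in V_S$). The triple $(u,v,L)$ is then the desired configuration: (a) holds because $L$ is a limit of lines missing $F$, so by the connectedness fact no component can cross it; (b) holds because otherwise two components touching $L$ on opposite sides of $v$ but the same side of $L$ could be cleared together by rotating about $v$, contradicting that $L$ is pinned; and (c) is read off the stopping rule, the impossibility of rotating further into $\mathrm{int}(S)$ while avoiding $F$ being exactly what yields (c)(ii)--(iii).

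The hardest part will be the bookkeeping around the degenerate configurations: when a pivot already lies on $F$ (which forces the ``rotate, then translate'' manoeuvre rather than a pure rotation), when a component touches $L$ at more than one point at a pinned line (which must be excluded so that the clearing rotations behave as claimed), and above all the tangent case of (c), where one must show that the pinned line's contacts with $S$ and with $C_v$ are arranged so that some admissible rotation re-enters $\mathrm{int}(S)$ — this is precisely where (c)(ii)--(iii) are extracted. Matching the left/right structure of (b) to the direction of the clearing rotation uniformly across all of these cases, much as in the recognition procedure of \cite{Dumitrescu}, Theorem~7, is the delicate technical core of the argument.
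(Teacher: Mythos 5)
Your overall architecture is the same as the paper's: one direction by a ``small rotation plus small translation'' perturbation of $L$, the other by taking a violating line, normalising it to meet the interior of $S$, and moving it to a pinned position supported by a hull vertex $u$ and a second point $v$, reading (a)--(c) off the stopping conditions. However, two of your concrete steps fail as written. The most serious is your treatment of the tangent clause of (c) in the direction ``configuration $\Rightarrow$ not an OC''. If $L\cap S=\{v\}$ with $v\in V_S$ and no edge of $S$ lies along $L$, then \emph{no} arbitrarily small rotation of $L$ about $v$ meets the interior of $S$: a line through the vertex $v$ meets the interior of $S$ if and only if its direction lies in the open cone of $S$ at $v$, and that cone is at a positive angular distance from the direction of $L$. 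After such a small rotation the line still meets $S$ only at $v$, and your subsequent translation away from the component containing $u$ --- which by (c)(ii) is on the \emph{same} side as $S$ --- pushes the line off $S$ altogether, so it is not a violating line. This is exactly why the paper pivots at $u$ rather than at $v$: rotating about $u$ lifts the line near $v$ into the corner cone of $S$ (so it meets the interior of $S$ for every small angle), while (b) and (c)(iii) guarantee the swing clears the other components, and the final small translation away from $u$'s component preserves the intersection with the interior of $S$.

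Two further steps need repair in your canonicalisation direction. First, translating $L_0$ perpendicular to itself may reach a supporting line of $S$ in \emph{both} directions without ever touching $F$ (take $F$ entirely outside the strip of lines parallel to $L_0$ that meet $S$); your fallback argument is a non sequitur, since it only shows that all lines \emph{parallel to $L_0$} through $S$ miss $F$, which contradicts nothing (indeed $F$ is not an OC by hypothesis), and you are still obliged to exhibit a configuration in that case. The paper's first move --- rotating about an interior point of $S$ --- avoids this, because a half-turn sweeps the whole plane and must hit the nonempty set $F$. Second, rotating about the \emph{fixed} first-contact vertex $u$ keeps the line tangent to $\bar{C}_1$ only within the exterior angle at $u$; if the line comes to contain an edge of $\bar{C}_1$ before any of your listed constraints becomes tight, your stopping rule does not fire and further rotation about $u$ cuts the hull. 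The paper handles this by rolling the line around $\partial\bar{C}_1$ (changing pivot). Note also that multiple contact of a component with the pinned line cannot simply be ``excluded'' as you suggest; it genuinely occurs (e.g.\ a hull edge collinear with $L$) and must be handled --- either by rolling, or by observing that when the line contains an edge of $\bar{C}_1$ while still meeting the interior of $S$, taking $v$ to be the other endpoint of that edge already yields a configuration satisfying (a)--(c).
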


\begin{figure}
\begin{center}
\input{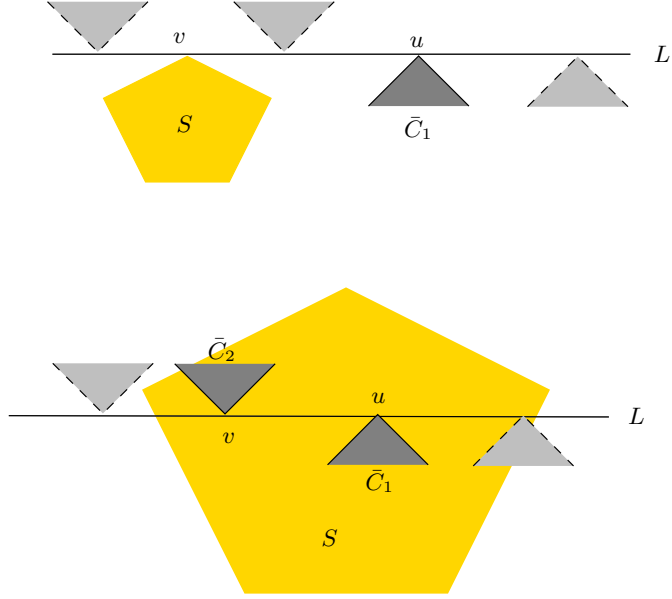}\\[1.5em]
\end{center}
\caption{Position of $S$ and components $\bar C$ for Proposition~\ref{OC_component_lemma}.  }
\label{OC_component_figure}
\end{figure}

\begin{proof}
First suppose $F$ is not an OC for $S$.  Then there exists a line $L$ that intersects $S$ but misses $F$, and without loss of generality we may assume that $L$ intersects the interior of $S$ at some point $x$ in general position.  Now rotate $L$ clockwise about $x$ until it first intersects $F$ at a unique point $y$, and let $C_1$ be the component of $F$ containing $y$.   Now continue the rotation by rotating $L$ clockwise around the boundary of $\bar C_1$ until it either becomes tangent to $S$, or is prevented from further rotation by another component $C_2$ of $F$.  Let $u$ be the vertex of $\partial \bar C_1$ representing the final point of rotation of $L$ around $\bar C_1$, and let $v$ be the point of $V_S\cup\cal F$ on $L$ that has prevented $L$ from further rotation.  Note that since $x$ was in general position, these two rotations must be through positive angles.  From the construction it is clear that $(a)$ and $(b)$ are satisfied, and that $L$ continues to intersect $S$.  Further, if $L$ does not now intersect the interior of $S$, then it would be tangent to $S$ at some vertex $v$ of $S$, with $S$ on the same side of $L$ as $\bar C_1$, so that $(c)$ is satisfied.

Conversely, suppose there is a line $L$ and points $u$ and $v$ satisfying $(a)$--$(c)$.  Then a small rotation of $L$ about $u$, followed by a small translation of $L$ away from the component of $F$ containing $u$,  results in a line that intersects $S$ but misses $F$.  Thus $F$ is not an OC for $S$.
\end{proof}

Proposition~\ref{OC_component_lemma} provides a means of verifying whether a given graphical set is an OC for $S$.

\begin{Cor}
It can be determined whether or not a finite set $F$ of line segments is an OC for $S$ in time $O(|{\cal F}|^2|V_S\cup {\cal F}|)$.
\end{Cor}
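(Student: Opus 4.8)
The plan is to derive the running time directly from the characterization in Proposition~\ref{OC_component_lemma}: $F$ is an OC if and only if no line $L$ through a point $u\in\mathcal{F}$ and a point $v\in V_S\cup\mathcal{F}$ satisfies conditions $(a)$--$(c)$. Since any such witness line is determined by the pair $(u,v)$, there are at most $|\mathcal{F}|\cdot|V_S\cup\mathcal{F}|$ candidate lines to examine, and the whole algorithm consists of iterating over these candidates and checking, for each, whether it is in fact a violating line in the sense of the proposition. If none is, $F$ is an OC; otherwise it is not.

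The key steps, in order, are: (1) Compute the convex hull of each component of $F$, which yields the set $\mathcal{F}$ of extreme points; this is $O(|F|\log|F|)$ and is dominated by the main loop. (2) For each ordered pair $(u,v)$ with $u\in\mathcal{F}$ and $v\in V_S\cup\mathcal{F}$, form the line $L$ through $u$ and $v$, and verify conditions $(a)$--$(c)$ of Proposition~\ref{OC_component_lemma}. Checking $(a)$ requires testing, for each of the $O(|\mathcal{F}|)$ hull vertices of $F$ (equivalently, each component), on which side of $L$ it lies and whether it lies entirely on one side — this is $O(|\mathcal{F}|)$ work per pair since we need only test the extreme points of each component's hull against $L$. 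Checking $(b)$ requires sorting or scanning the components by their intersection point with $L$ relative to $v$ and confirming the left/right components sit in opposite half-planes — again $O(|\mathcal{F}|)$ per pair (one can compute the relevant intersection abscissae and compare signs without a full sort, but even a sort costs only $O(|\mathcal{F}|\log|\mathcal{F}|)$, which does not affect the stated bound if we are slightly careful, or can be absorbed). Checking $(c)$ is $O(|\mathcal{F}|+|V_S|)$ and involves a tangency test of $L$ against $S$ and against the component containing $u$, plus a side-check of the components tangent to $L$ on the $v$-side of $u$. (3) Report "not an OC" at the first pair for which $(a)$--$(c)$ all hold, and "OC" if the loop completes.

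Putting this together: there are $O(|\mathcal{F}|\cdot|V_S\cup\mathcal{F}|)$ pairs, and each is processed in $O(|\mathcal{F}|)$ time (the $S$-dependent checks in $(c)$ being $O(|V_S|)$, which is absorbed because $|V_S\cup\mathcal{F}|\geq|V_S|$, so $|\mathcal{F}|\cdot|V_S\cup\mathcal{F}|\cdot(\text{const}) + |\mathcal{F}|\cdot|V_S\cup\mathcal{F}|\cdot|\mathcal{F}|$ collapses to $O(|\mathcal{F}|^2\,|V_S\cup\mathcal{F}|)$). Hence the total is $O(|\mathcal{F}|^2\,|V_S\cup\mathcal{F}|)$, as claimed. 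The correctness is immediate from Proposition~\ref{OC_component_lemma}, which already establishes that examining exactly these candidate lines suffices.

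The main obstacle is the per-pair test for condition $(b)$: one must be careful that determining the left-right ordering of the components along $L$ and verifying the opposite-half-plane requirement is genuinely doable in $O(|\mathcal{F}|)$ rather than $O(|\mathcal{F}|\log|\mathcal{F}|)$. This can be handled by noting that we do not need the full sorted order — we need only partition the components into "left of $v$" and "right of $v$" (a linear scan computing each component's intersection interval with $L$) and then check that all left-components lie in one half-plane and all right-components in the other (another linear scan), with the degenerate cases (components meeting $L$ at $v$ itself, or components not crossing $L$) folded in. A secondary subtlety is the careful treatment of the tangency/general-position conditions in $(c)$, but these are all local predicates on $L$, $S$, and a single component, computable in $O(|V_S|+|\mathcal{F}|)$ time, so they do not dominate.
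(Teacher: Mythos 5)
Your overall strategy is the same as the paper's: enumerate the candidate lines $L$ through pairs $(u,v)$ with $u\in{\cal F}$, $v\in V_S\cup{\cal F}$, and test conditions $(a)$--$(c)$ of Proposition~\ref{OC_component_lemma} for each. However, your time accounting has a genuine gap. You charge each pair a cost of $O(|{\cal F}|)$ for the component checks plus $O(|V_S|)$ for the $S$-dependent checks in $(c)$ (tangency of $L$ to $S$, or intersection with the interior of $S$), and then claim the $O(|V_S|)$ term is ``absorbed'' because $|V_S\cup{\cal F}|\geq |V_S|$. That absorption is not valid: summed over all $O(|{\cal F}|\,|V_S\cup{\cal F}|)$ pairs, the $S$-dependent term contributes $O(|{\cal F}|\,|V_S\cup{\cal F}|\,|V_S|)$, which is $O(|{\cal F}|\,|V_S\cup{\cal F}|^2)$, not $O(|{\cal F}|^2|V_S\cup{\cal F}|)$. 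When $|V_S|$ is much larger than $|{\cal F}|$ (for example, $S$ a polygon with many vertices and $F$ having a handful of extreme points), your bound degrades to roughly $|{\cal F}|\,|V_S|^2$, which exceeds the claimed $|{\cal F}|^2|V_S|$; since the statement to be proved \emph{is} the time bound, this is a real failure, not a cosmetic one.

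The missing idea is the case split the paper makes. For the $O(|{\cal F}|^2)$ pairs with both $u,v\in{\cal F}$, one can afford a full $O(|V_S\cup{\cal F}|)$ scan per pair, giving $O(|{\cal F}|^2|V_S\cup{\cal F}|)$. For the $O(|{\cal F}|\,|V_S|)$ pairs with $v\in V_S$, the line $L$ passes through a vertex of the convex polygon $\partial S$, so the position of $S$ relative to $L$ (whether $L$ enters the interior of $S$, is tangent at $v$, and on which side $S$ lies) is determined in $O(1)$ time by examining only the two vertices of $\partial S$ adjacent to $v$; after that, conditions $(a)$--$(c)$ involve only the elements of ${\cal F}$, so each such pair costs $O(|{\cal F}|)$ and this case totals $O(|{\cal F}|^2|V_S|)$. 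With this convexity observation your argument goes through and the two cases together yield the stated $O(|{\cal F}|^2|V_S\cup{\cal F}|)$ bound; without it, the algorithm as you have analysed it does not meet the claim.
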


\begin{proof}
 Using Proposition~\ref{OC_component_lemma}, we can check if a finite set $F$ of line segments is an OC for $S$, by considering every suitable pair $u$ and $v$ and checking whether they satisfy conditions $(a)$--$(c)$ with respect to the line $L$ going through them.  The three conditions simply require checking the location of the elements of $V_S\cup \cal F$ with respect to the line $L$. It is clear that all pairs of points $u, v \in \cal F$ can be checked in time $O(|{\cal F}|^2|V_S\cup {\cal F}|)$. For each pair of the form $u \in \cal F$ and $v \in V_S$, the relative positions of  $L$ and  $S$ can be determined by considering the locations of the two  vertices of $\delta S$ adjacent to  $v$ with respect to  $L$ (due to the convexity of $S$); hence conditions $(a)$--$(c)$ can now be checked by only considering the locations of the elements of $\cal F$. It follows that in this second case all pairs can be checked in time $O(|{\cal F}|^2|V_S|)$. The result follows.
\end{proof}

\subsection{Critical points and critical lines}

For this subsection we assume $F$ is a minimum graphical OC for $S$. From Lemma~\ref{steiner} we know $F$ is a forest for which each component $C$ of $F$ is an MST  on the vertices  of the boundary of its convex hull $\partial V_{\bar C}$.  Further, these components satisfy the conditions of Lemma~\ref{OC_component_lemma}.  Thus an important step in solving the OCP is determining geometric relationships between the polygonal regions determined by these convex hulls.  The key structures for doing this are the critical points, critical lines, and FCPs as given in Definitions \ref{critical_definition} and \ref{FCP_definition}.

The following lemma is obvious.
\begin{Lem}\label{critical_existence_lemma}
The critical points for $F$ are precisely the elements of $\cal F \setminus V_S$.
\end{Lem}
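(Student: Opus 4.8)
The plan is to prove the two inclusions separately, in each case reading off what it means for a vertex to be (or fail to be) a critical point directly from Definition~\ref{critical_definition} together with the structural description of $F$ in Lemma~\ref{steiner} and Proposition~\ref{OC_component_lemma}. Recall that $\mathcal{F}$ denotes the set of extreme points of the convex hulls $\bar C$ of the components $C$ of $F$, and that a critical point of $F$ is by definition a vertex of $F$ not in $V_S$ whose perturbation (together with its incident edges) strictly decreases the length of $F$.

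First I would show that every element $v$ of $\mathcal{F}\setminus V_S$ is a critical point. Since $v$ is an extreme point of $\partial\bar C$ for the component $C$ containing it, Lemma~\ref{steiner} tells us $C$ is an MST on $V_{\bar C}$, so $v$ has degree $1$ or $2$ in $C$ (a Steiner vertex has degree $3$ with $2\pi/3$ angles and hence cannot be extreme on $\partial\bar C$). Exactly as in the proof of Lemma~\ref{critical_points_connected}(a) and Theorem~\ref{boundary-critical_connected}(a), such a vertex can be pushed inward — along its incident edge in the degree-$1$ case, or along the internal angle bisector of its two incident edges in the degree-$2$ case — and this perturbation strictly decreases $\lambda_1(F)$ since the angle at $v$ is strictly less than $\pi$. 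Hence $v$ satisfies the definition of a critical point. (The accompanying remark in Definition~\ref{critical_definition} then guarantees that such a perturbation must expose a violating line, but we do not need that for the inclusion.)

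Conversely I would show that any critical point must lie in $\mathcal{F}$. Suppose $v$ is a vertex of $F$ with $v\notin\mathcal{F}$; then $v\in V_S$ or $v$ lies in the interior of $\bar C$ for its component $C$. If $v\in V_S$ it is excluded from being a critical point by definition. If $v$ lies strictly inside $\bar C$ — equivalently $v$ is a Steiner vertex or a non-extreme degree-$2$ vertex of $C$ — then for a sufficiently small perturbation of $v$ and its incident edges the convex hull $\bar C$ is unchanged, and therefore by Proposition~\ref{OC_component_lemma} (whose conditions depend only on the hulls $\bar C_i$ and on $V_S$, not on the internal structure of the components) the perturbed set is still an OC for $S$; since $F$ restricted to $C$ is an MST on $V_{\bar C}$ by Lemma~\ref{steiner}, no such perturbation can decrease the length. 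So $v$ is not a critical point, completing the reverse inclusion and the proof.

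I do not expect any serious obstacle here — the lemma is genuinely "obvious" as the paper says — but the one point that needs a little care is the reverse inclusion for internal degree-$2$ vertices: one must observe that such a vertex, being a non-extreme vertex on the relative interior of an edge of $F$ (by MST property (3), a degree-$2$ vertex on the hull boundary would have to be a genuine corner), can be perturbed without altering $\bar C$, and then invoke the fact that the OC property in Proposition~\ref{OC_component_lemma} is a condition on $\{\bar C_i\}$ and $S$ alone. Once that is noted, both inclusions follow immediately from the definitions and the earlier structural results.
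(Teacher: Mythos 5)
Your proposal is correct, and it takes essentially the approach the paper intends: the paper states this lemma without proof (declaring it obvious), and your two inclusions simply transplant the perturbation argument used for the connected case in Lemma~\ref{critical_points_connected}(a) to each component, using Lemma~\ref{steiner} to identify the elements of $\cal F$ with degree-1 or degree-2 terminals that can be pushed inward to shorten $F$, and MST-minimality (equivalently, preservation of the OC property, which depends only on the hulls) to rule out interior vertices. No gaps; the only minor remark is that the MST-minimality argument alone already settles the reverse inclusion, so the appeal to Proposition~\ref{OC_component_lemma} is a harmless redundancy.
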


The next result characterizes the orientation of critical points and their adjacent components in relation to their associated critical lines. The statement of the lemma defines three types of critical lines which we will use in later parts of this paper.

\begin{Lem}\label{CL_lemma}  Let $c$ be a critical point for minimum OC $F$. Then up to left/right symmetry, there exists an associated critical line $L$ with points $v_1$, $v_2$, and $v_3$ lying consecutively left-to-right on $L$ and satisfying one of the following three criteria, (see Figure~\ref{CL_lemma_figure}):
\tr 3
[\bf Type I critical line:] $L$ is tangent to $\bar F$.  Then $v_1$ and $v_3$ are distinct elements of $\cal F$, $v_2$ is in $V_S$ (possibly coincident with $v_1$ or $v_3$), $c=v_1$, and no other point of $F$ lies to the left of $v_2$ on $L$.

\tn
[\bf Type II critical line:] $L$ is tangent to $S$ but not tangent to $\bar F$.  Then $v_1$ is in $V_S$ and $v_2$ and $v_3$ are in $\cal F$ with $v_2$ tangent from the same side as $S$ and $v_3$ tangent from the opposite side as $S$, and either
\tr 2
[$(i)$] $c=v_2$ and no other point of $F$ is tangent before $v_3$ on $L$ from the same side as $S$ nor after $v_3$ on $L$ from the opposite side as $S$, or
\tn
[$(ii)$] $c=v_3$ and no point of $F$ is tangent before $v_2$ on $L$ from the same side as $S$ nor after $v_2$ on $L$ from the opposite side as $S$.
\tl
\tn
[\bf Type III critical line:] $L$ intersects the interior of $S$.  Then $v_1$, $v_2$ and $v_3$ are all in $\cal F$, with either
\tr 2
[$(i)$] $c=v_1$ and no other point of $F$ is tangent before $v_3$ on $L$ from the same side as $v_1$ nor after $v_2$ on $L$ from the opposite side as $v_1$,
\tn
[$(ii)$] $c=v_2$ and on at least one side of $v_2$ there are no points of $F$ tangent to $L$ from the same side as $v_2$.
\tl
\tl

\end{Lem}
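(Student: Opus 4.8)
The plan is to unwind the definition of a critical line as a limit of violating lines, couple it with the characterization of non-opaque covers in Proposition~\ref{OC_component_lemma}, and then read off the three configurations according to how the limiting line sits relative to $S$. Fix the critical point $c$. By Lemma~\ref{critical_existence_lemma} it is an extreme point of the hull $\bar C$ of exactly one component $C$ of $F$, and $c\notin V_S$. By Definition~\ref{critical_definition} there is an associated critical line $L=\lim_k L_k$, where each $L_k$ meets $S$ but misses a set $F_k$ obtained from $F$ by a length-decreasing perturbation of $c$ together with its incident edges, the perturbation size $\delta_k$ tending to $0$. By the proof of Lemma~\ref{critical_points_connected}(a), this perturbation may be taken to slide $c$ along its incident edge (if $\deg c=1$) or along the bisector of its incident angle (if $\deg c=2$); either way $c$ moves strictly into $\bar C$, so $F_k$ coincides with $F$ outside an $O(\delta_k)$-neighbourhood of $c$ and there $\bar C$ is replaced by a strictly smaller hull. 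Since $F$ is a minimum OC, no $F_k$ is an OC, so the $L_k$ exist. If $F$ is connected the claimed structure is already contained in Lemma~\ref{critical_points_connected}(c)--(d) --- a critical edge of $\partial\bar F=\partial\bar C$, tangent to $\bar C$ and with a vertex of $S$ in its relative interior --- which is a Type I critical line; so from now on assume $F$ has at least two components.

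Next I extract the basic geometry of $L$, passing to a subsequence (legitimate, since every $\bar F_k$ lies in a fixed ball). As $S$ is compact, $L$ meets $S$. Because $L_k$ threads a gap of width $O(\delta_k)$ opened near $c$, an $O(\delta_k)$ rotation of $L_k$ about a point of $L_k\cap S$ brings it into contact with $\bar C_k$ near $c$, so in the limit $L$ supports $\bar C$ at $c$. Moreover $L$ cannot cross the interior of the hull $\bar C'$ of any component: for $C'\ne C$ this property would persist for $L_k$ and force $L_k$ to cross $C'$ (which, being connected and spanning the extreme points of $\bar C'$ that $L$ separates, crosses every line near $L$), contradicting $L_k\cap F_k=\emptyset$; and a crossing of $\bar C$ through $c$ is incompatible with $L$ supporting $\bar C$ there. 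Hence every component of $F$ lies in a closed half-plane bounded by $L$; and since $F_k$ differs from $F$ only near $c$, the escaping line $L_k$ passes within $O(\delta_k)$ of $c$, so $c\in L$ is a contact point of $L$ with $F$. Now run the argument of the proof of Proposition~\ref{OC_component_lemma} on each pair $(F_k,L_k)$: rotating $L_k$ about a point of $\mbox{int}\,S$ until it touches $F_k$, then rolling it around the boundary of that component until it is stopped by $S$ or by another component, yields a triple $(u_k,v_k,L_k')$ with $u_k\in{\cal F}_k$, $v_k\in V_S\cup{\cal F}_k$, and $L_k'$ a line through $u_k,v_k$ meeting $S$ and satisfying conditions (a)--(c) for $F_k$; the initial rotation may be taken of size $O(\delta_k)$ (in the direction that closes the gap at $c$) and the rolling runs along fixed hulls, so $L_k'\to L$. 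Passing to a further subsequence, $u_k\to u$ and $v_k\to v$; both limits lie in $V_S\cup{\cal F}$ (the set $V_S$ is fixed, and, since each $\bar C_k$ is the hull of $C$ with $c$ replaced by a point moved strictly inward, every element of ${\cal F}_k$ is either a fixed element of ${\cal F}$ or the moved copy of $c$, which converges to $c\in{\cal F}$), and in the limit conditions (a)--(c) hold in the closed sense.

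It remains to organize this limiting data according to the position of $L$ relative to $S$, which is precisely the stated trichotomy. If $L$ supports the full hull $\bar F$, then, since $S\subseteq\bar F$ by Lemma~\ref{SinFbar}, $L$ supports $S$ and touches it at a vertex $v_2\in V_S$; the pinch-point $c$ must be an endpoint of the contact interval $L\cap\bar F$ (perturbing $c$ off $L$ cannot open a gap in the middle of that interval, which stays pinned), so, up to the left/right symmetry, $c=v_1$ is the near end, $v_3\in{\cal F}$ the far end, and condition (c) forbids any point of $F$ on $L$ strictly beyond $v_2$ on the $c$-side: this is Type I. If $L$ is tangent to $\partial S$ but not to $\bar F$, the contact with $S$ is a vertex $v_1\in V_S$, and since $L$ crosses $\mbox{int}\,\bar F$ but no component interior, some component is tangent to $L$ from the $S$-side (with contact $v_2\in{\cal F}$) and some from the opposite side ($v_3\in{\cal F}$), with $c\in\{v_2,v_3\}$; the ``no other point of $F$ tangent before/after $v_j$'' clauses are exactly what condition (c) contributes in the limit: this is Type II. If $L$ crosses $\mbox{int}\,S$, no contact with $S$ is needed to witness $L\cap S\ne\emptyset$, so all of $v_1,v_2,v_3$ lie in ${\cal F}$ with $c\in\{v_1,v_2\}$, and the side conditions are once more the limiting form of (a)--(c): this is Type III. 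In each case, which side each component lies on and the absence of extra tangencies translate the limiting Proposition~\ref{OC_component_lemma} data into the stated left-to-right description.

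I expect the main obstacle to be the bookkeeping in the final step: verifying in each type that the limiting configuration admits no contact between $L$ and $F$ beyond the permitted ones, equivalently that any additional tangency would re-block every line $L_k'$ near $L$ and contradict its escaping $F_k$. Concretely, for each hypothetical extra contact one must, as in the converse direction of Proposition~\ref{OC_component_lemma}, produce a small rotation-and-translation of $L$ that --- once $c$ has been perturbed --- exposes a line through $S$, and show that the extra contact obstructs precisely that move; arranging the rotation direction and translation magnitude so that all the genuine contacts are cleared simultaneously is the delicate point. The residual issues --- coincidences among $v_1,v_2,v_3$ (for instance $v_2$ coinciding with $v_3$ in Type I) and distinguishing the sub-cases (i) and (ii) within Types II and III according to which of the two tangent components contains $c$ --- are routine once the contact analysis is in place.
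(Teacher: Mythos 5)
Your proposal follows essentially the same route as the paper's proof: perturb $c$ inward so the perturbed set is no longer an OC, apply Proposition~\ref{OC_component_lemma} to obtain the violating line and its witness points, transfer these back to $F$ to get the critical line $L$ through points of $V_S\cup{\cal F}$ missing all component hull interiors, and then classify the blocking tangencies according to whether $L$ is tangent to $\bar F$, tangent to $S$ only, or crosses the interior of $S$. Your limit/subsequence bookkeeping and the explicit treatment of the one-component case are slightly more careful renderings of the same argument, and your concluding case analysis matches the paper's (equally terse) ``placement of a single disqualifying tangent component'' step.
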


\begin{figure}
\begin{center}
\input{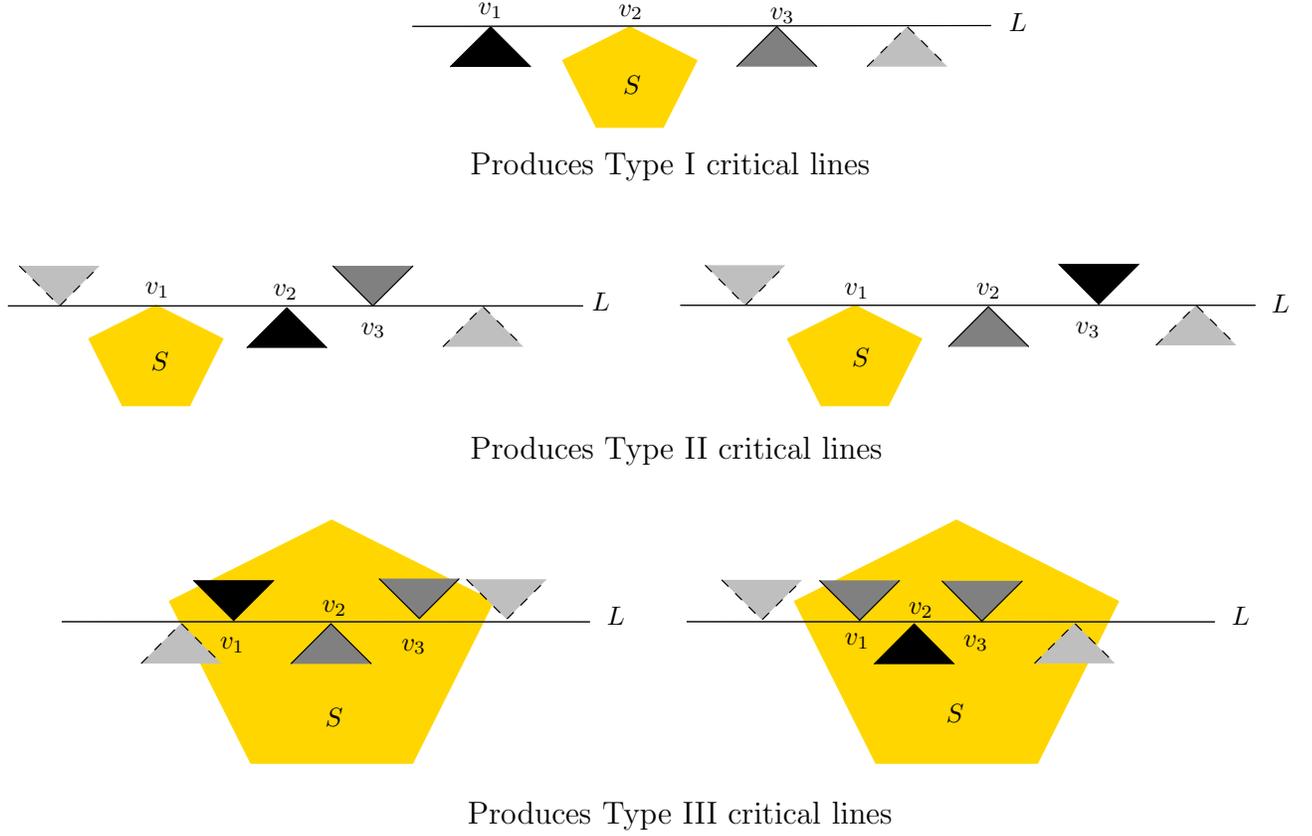}\\[1.5em]
\end{center}
\caption{Positions of tangent components for Lemma~\ref{CL_lemma}, up to symmetry. Black components are tangent at critical points, grey components are additional required tangent components, and dotted components are optional.}
\label{CL_lemma_figure}
\end{figure}

\begin{proof}
Let $C$ be the component containing $c$.  First perturb $F$ by moving $c$ inward with respect to $\bar C$ in such a way as to decrease the length of $F$, so that the perturbed set $F'$ is no longer an OC. This implies  that there must be a line $L'$ and associated points $u'$ and $v'$ satisfying the three conditions of Proposition~\ref{OC_component_lemma} with respect to  $F'$.  Now consider the nonperturbed points $u$ and $v$ on $F$ corresponding to these, and the line $L_{uv}$ that passes through $u$ and $v$.  Clearly $L_{uv}$ intersects $S$ and does not intersect the interior of the convex hull of any component of $F$, since this is true of the perturbed line $L'$ with respect to $F'$.  Thus, by definition, $L_{uv}$ is a critical line for $c$. Let $L_{uv}$ be the critical line $L$ in the statement of the lemma. We note that $c$ and $\bar C$ and the components of $F$ must prevent the conditions of Proposition~\ref{OC_component_lemma} from occurring. In other words, the components of $F$ tangent to $L$ must prevent any rotation of $L$ that allows $L$ to enter $S$. By considering placement of a single disqualifying tangent component to Figure~\ref{OC_component_figure}, we get the placements given in this lemma.
\end{proof}

\begin{Cor}\label{extra-critical-points}
Let $F$ be a  minimum OC, with a  critical line $L$. Let $v_1$, $v_2$, and $v_3$ be points of $F$ on $L$ satisfying the conditions of Lemma~\ref{CL_lemma}. Then any other vertex of $F$ lying on $L$ must lie on at least two critical lines, and hence is not an FCP.
\end{Cor}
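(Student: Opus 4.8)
The plan is to show that a vertex $w$ of $F$ on $L$ other than $v_1,v_2,v_3$ is itself a critical point lying on $L$, and then to produce a \emph{second} critical line through $w$ by perturbing $w$; since the two critical lines will be distinct, Definition~\ref{FCP_definition} gives that $w$ is not an FCP.

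First I would establish that $w\in{\cal F}\setminus V_S$, so that $w$ is a critical point by Lemma~\ref{critical_existence_lemma} and $L$ is (trivially) one critical line through it. Since $L$ is a critical line it is a limit of lines missing $F$, so it cannot cross the interior of the convex hull of any component; as $w$ lies on its component $C$ and hence on $\partial\bar C$, and $C$ is an MST on $V_{\bar C}$ (Lemma~\ref{steiner}), $w$ must be an extreme point of $\bar C$, i.e.\ $w\in{\cal F}$. The possibility $w\in V_S$ is handled separately: by convexity of $\bar S$ this forces $L\cap\bar S$ to be an edge of $\partial\bar S$ with $w$ an endpoint, and then $w$ is not a critical point at all, so it is automatically not an FCP (one still checks directly, by the same perturbation argument applied to the two edges of $\bar S$ meeting $L$, that it lies on two critical lines).

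Next I would construct the second critical line. Perturb $w$, together with its incident edges, inward with respect to $\bar C$ so that $\lambda_1(F)$ strictly decreases; because $F$ is a minimum OC the perturbed set is no longer an OC, so some violating line appears, and its limit $L^*$ as the perturbation shrinks to zero is a critical line through $w$ by Definition~\ref{critical_definition}.

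The hard part will be verifying $L^*\neq L$. Moving only $w$ changes $F$ only along the short arc of $\partial\bar C$ at $w$, so the only lines that can become exposed are those that were blocked solely by that arc, all of which lie close to $L$. But the conditions in Lemma~\ref{CL_lemma} state precisely that $v_1,v_2,v_3$, together with the additional components forced to be tangent to $L$, already block $L$ and every sufficiently near line on the side from which $w$ is tangent; and since $w$ is not among $v_1,v_2,v_3$, its contact with $L$ lies strictly between those ``pinning'' contacts rather than at an extreme of the blocked portion. Hence no violating line can occur arbitrarily near $L$ when $w$ alone is moved, so $L^*$ is bounded away from $L$. This has to be checked type by type against the three cases of critical line in Lemma~\ref{CL_lemma}, and this case analysis (together with the exclusion of the degenerate situation $w\in V_S$) is where essentially all the work lies. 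Once $L^*\neq L$ is known, $w$ lies on the two distinct critical lines $L$ and $L^*$ and so is not an FCP.
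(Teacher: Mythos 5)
Your proposal is correct in outline and takes essentially the same route as the paper, whose entire proof is the remark that the claim ``follows immediately from the minimality of $F$'' (the extra vertices being exactly the dotted optional tangencies in Figure~\ref{CL_lemma_figure}) --- that is, the same argument you describe: perturb $w$ by minimality to obtain a violating line, and note that the required tangent components at $v_1,v_2,v_3$ already pin $L$, so the resulting critical line is distinct from $L$. Your write-up is in fact more detailed than the paper's one-liner; the only slightly inaccurate point, that $w$ must lie strictly between the pinning contacts (Lemma~\ref{CL_lemma} allows extra tangencies outside $v_1v_3$ as well), is not actually needed, since the required tangencies block every near-$L$ line through $S$ independently of where $w$ sits.
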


\begin{proof}
This follows immediately from the minimality of $F$. Note that these extra vertices correspond to the dotted optional components indicated in Figure~\ref{CL_lemma_figure}.
\end{proof}

\begin{Cor}\label{non-boundary_critical_line}
Any minimum OC $F$ for $S$ that contains more than
one component must contain at least one critical line of Type II or Type III.
\end{Cor}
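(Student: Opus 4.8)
The plan is to argue by contradiction: suppose $F$ is a minimum OC for $S$ with at least two components, and that every critical line of $F$ is of Type~I. By Lemma~\ref{CL_lemma} a Type~I critical line is a supporting line of $\bar F$, so it meets $\bar F$ only in $\partial\bar F$; since $S\subseteq\bar F$ (hence $\mathrm{int}\,S\subseteq\mathrm{int}\,\bar F$) this means no critical line meets $\mathrm{int}\,S$, and --- as every critical point lies on its associated critical line and is a point of $F\subseteq\bar F$ --- every critical point lies on $\partial\bar F$. Note also that a Type~I critical line is tangent to $S$, at the vertex $v_2\in V_S$ named in Lemma~\ref{CL_lemma}.

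The next step is to show that $\partial\bar F$ receives vertices from at least two distinct components. Here the key observation is that, by Lemma~\ref{steiner}, each component $C$ is a tree spanning the vertex set $V_{\bar C}$ of the convex polygon $\bar C$, and any such tree is automatically an OC for $\bar C$: a line meeting $\bar C$ either has vertices of $V_{\bar C}$ strictly on both sides, so the connected tree crosses it, or is a supporting line of $\bar C$, so it meets a vertex (or edge) of $\bar C$, hence meets $C$. Consequently, if $\bar C=\bar F$ for some component $C$, then $C$ alone is an OC for $\bar F\supseteq S$, contradicting the minimality of $F$; so $\bar C\subsetneq\bar F$ for every component. Every vertex of the convex polygon $\partial\bar F$ is an extreme point of exactly one component (distinct components are disjoint), and they cannot all belong to a single component $C$ (else $\partial\bar F\subseteq\bar C$, i.e.\ $\bar C=\bar F$); hence some edge $pq$ of $\partial\bar F$ joins an extreme point $p$ of a component $C_1$ to an extreme point $q$ of a different component $C_2$, and one checks that $pq$ may be chosen with no point of $F$ in its relative interior (re-selecting $pq$, or treating separately the degenerate case in which an edge of some component lies along $\partial\bar F$).

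Applying Lemma~\ref{CL_lemma} to a critical line through $p$ (which is a critical point, assuming $p\notin V_S$; the cases $p\in V_S$ or $q\in V_S$ are handled by the same reasoning) and matching that line against the two edges of $\partial\bar F$ meeting at $p$, one is forced to conclude that the critical line is the line $L$ through $pq$, that the vertex $v_2\in V_S$ on it lies in the relative interior of $pq$, and that $v_3=q$. Thus $C_1$ and $C_2$ are both tangent to $L$ from the side on which $S$ lies, $S$ touches $L$ at the vertex $v_2$ strictly between $p$ and $q$, and no point of $F$ lies on $L$ on the $p$-side of $v_2$.

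It remains to derive a contradiction from this configuration, and this is the step I expect to be the main obstacle. The difficulty is that, since $L$ is tangent to $\bar F$, the local picture near $L$ is indistinguishable from that of a connected OC, so no single vertex can be moved to shorten $F$; the argument must exploit global minimality rather than a local perturbation. The route I would take is: use minimality to produce a line $\ell$ through $\mathrm{int}\,S$ that is blocked by only one component $C_i$ --- for otherwise every line through $\mathrm{int}\,S$ is blocked by at least two components and $F$ contains length that can be trimmed, a sub-claim that itself needs care and is where the bridge-edge structure above and Corollary~\ref{extra-critical-points} are used to locate a trimmable piece while still guarding the supporting lines of $S$ --- and then observe that shrinking $C_i$ near a point where $\ell$ grazes it exposes lines arbitrarily close to $\ell$; the limit of these lines is a critical line meeting $\mathrm{int}\,S$, hence of Type~III (or of Type~II if $\ell$ can only be taken tangent to $S$), contradicting the assumption that every critical line is of Type~I.
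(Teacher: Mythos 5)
Your first two reductions are fine (each component is an OC for its own convex hull, hence no single component can have $\bar C=\bar F$, and therefore the vertices of $\partial\bar F$ come from at least two components), but the proof is not complete: the contradiction you need from the ``all critical lines are Type I'' configuration is exactly the content of the corollary, and you leave it as a sketch whose key sub-claims are unproven and themselves nontrivial. In particular, ``if every line through ${\rm int}\,S$ is blocked by at least two components then $F$ contains trimmable length'' is not obvious (removing part of one component can destroy opacity with respect to lines far from the line $\ell$ you are guarding), and the passage from ``shrinking $C_i$ exposes lines near $\ell$'' to ``the limit is a critical line meeting ${\rm int}\,S$'' is precisely the kind of perturbation argument that Lemma~\ref{CL_lemma} already formalizes, so at that point you have not advanced beyond the statement to be proved. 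There are also smaller unjustified steps: a Type I critical line through your vertex $p$ must contain one of the two edges of $\partial\bar F$ at $p$ (it is a supporting line containing a second point of ${\cal F}$), but nothing forces it to be the bridge edge $pq$ you selected; and if $p\in V_S$ then $p$ is not a critical point and carries no critical line at all, so that case is not ``handled by the same reasoning.''

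The paper closes the argument quite differently, and the missing idea is to work in the interior of $\bar F$ rather than on $\partial\bar F$: one shows that some element of ${\cal F}$ must lie in the \emph{interior} of $\bar F$. Supposing instead that every element of ${\cal F}$ is a vertex of $\partial\bar F$, one walks around $\partial\bar F$ to locate a component $C_i$ whose hull $\bar C_i$ meets the interior of $S$; since $\bar C_i$ cannot contain $S$ (your own observation), some edge $xy$ of $\partial\bar C_i$ with $x,y\in\partial\bar F$ crosses the interior of $S$, and opacity forces another component $C'$ to touch $xy$ from the opposite side; as $C'$ contains neither $x$ nor $y$, a vertex $c$ of $\partial\bar C'$ lies strictly between them, hence in the interior of $\bar F$, a contradiction. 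Such a point $c$ is a critical point (Lemma~\ref{critical_existence_lemma}), and any critical line through an interior point of $\bar F$ cannot be tangent to $\bar F$, so by Lemma~\ref{CL_lemma} it is of Type II or III. That ``interior extreme point'' step is what your attempt lacks, and with it the boundary-edge analysis you set up becomes unnecessary.
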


\begin{proof}
We begin by proving that if $F$ contains more than one component, then there must be at least one element of $\cal F$ lying in the interior of $\bar F$.  Suppose, on the contrary, that every element of $\cal F$ lies on $\partial \bar F$. Then, since $F$ is a minimum OC every element of $\cal F$ is a vertex of $\partial\bar F$.  Now start at any vertex $v_1$ of $\partial\bar F$, set $i:=1$, and perform the following steps iteratively:
\begin{enumerate}
\item Find the component $C_i$ of $F$ containing $v_i$.
\item If $\bar C_i$ does not intersect the interior of $S$, then let ${xy}$ be the edge of $\partial \bar C_i$ between points $x,y\in\partial\bar F$ separating $\bar C_i$ and the interior of $S$ in $\bar F$.

\item Traverse $\partial\bar F$ from $x$ in the direction of a component of $F$ intersecting the interior of $S$ until the next vertex of $\partial\bar F$,   $v_{i+1}$, is reached.

\item Set $i:= i+1$.
\end{enumerate}
At each iteration we are decreasing the area of $\bar F$ in which we are searching for a component $C_i$ of $F$ for  which $\bar C_i$ intersects the interior of $S$; hence this process must end with a suitable $C_i$.  Note that, for this $C_i$,  $\bar C_i$ does not strictly contain $S$, since otherwise $C_i$ itself would be a one-component OC for $S$.  Thus there must be an edge ${xy}$ of $\bar C_i$, with $x,y\in\partial\bar F$, that intersects the interior of $S$.  But this implies that there must be a component $C'$ such that $\bar C'$ touches ${xy}$ from the opposite side of ${xy}$, since otherwise $F$ would not be an OC.  Since $C'$ cannot contain either $x$ or $y$, then there must be some vertex $c$ of $\partial \bar C'$ lying strictly between them.  Then $c$ lies in the interior of $\bar F$, giving the required contradiction.

By Lemma~\ref{critical_existence_lemma}, $c$ is a critical point of $F$, and hence the result follows  by Lemma~\ref{CL_lemma}.
\end{proof}

\smallskip

\subsection{Further properties of critical lines}

In the case where the critical points on a critical line are FCPs we obtain useful additional properties that further characterize OCP solutions.

We  first note that it is possible to construct examples of OCs for polygonal regions where not all critical points are free. For example, the OC for the triangular region $S$ bounded by $\triangle ABC$ shown in Figure~\ref{nonfree}(a) has three critical points (the vertices of the inner triangle formed by the intersections of the critical lines) none of which is free.
\begin{figure}[ht]
\begin{center}\includegraphics[width=4.5in]{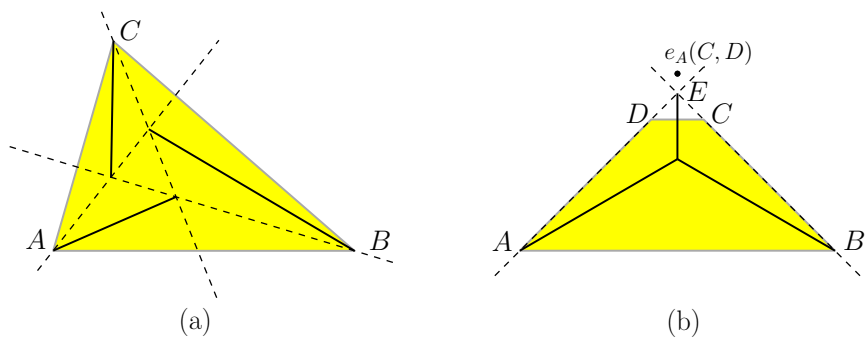}
\end{center}
\caption{Two examples of OC's with critical points that are not free.  Here critical lines are indicated by dashed lines. The example in (a) is not a minimum OC, whereas the example in (b) for the region $ABCD$ is a conjectured minimum OC.}\label{nonfree}
\end{figure}
However, in this example the OC is not minimum; if we move each critical point to the orthocenter of the triangle formed by the three critical points then we obtain a shorter connected OC for $S$ (with no critical points).

Figure~\ref{nonfree}(b) shows an example of a polygonal region $S$ where the conjectured minimum OC contains a critical point that is not free. For this example we assume $ABE$ is a right isosceles triangular region with coordinates $A=(-100, 0), B=(100,0), E=(0,100)$. Given a small positive constant $\varepsilon$, let $S$ be the region $ABCD$ where $C= (\varepsilon, 100-\varepsilon)$ and  $D= (-\varepsilon, 100-\varepsilon)$; intuitively, $S$ is the region $ABE$ with the vertex at $E$ cut off. For $\varepsilon$ sufficiently small the conjectured minimum graphical OC for $S$ is the MST on $\{ A, B, E\}$, as shown in Figure~\ref{nonfree}(b). Using the properties of MSTs outlined in Section~4, it is straightforward to see that this is shorter than the MST on $\{ A, B, C, D\}$ because the equilateral point $e_A(C,D)$ lies outside $\triangle ABE$. For example, if $\varepsilon=1$ then  the length of the MST on $\{ A, B, E\}$ is $273.205$, while the MST on $\{ A, B, C, D\}$  has length $273.937$ (and all other possible OCs appear to be substantially longer). In the conjectured minimum OC, $E$ is a critical point, but is not free, since it lies on two critical lines (as shown in the figure). We note that these two critical lines are both Type~I and are extensions of edges of $S$.

All known examples of polygonal regions where the conjectured minimum OC contains critical points that are not free are similar to the example in Figure~\ref{nonfree}(b), in that each such critical point is determined by two Type~I critical lines that are extensions of edges of $S$. It seems likely (but unproven) that all non-free critical points satisfy this condition. This motivates the study of critical lines where all critical points are FCPs, as it is these points that appear to present the main challenge in constructing minimum OCs.

\begin{Lem}\label{free-pt-properties}  Let $F$ be a minimum  OC for $S$, and let $v$ be an FCP with respect to its associated critical line $L$.  Then
\begin{enumerate}\renewcommand{\labelenumi}{$(\alph{enumi})$}
\item $v$ has
degree 1 or 2 in $F$;
\item if $v$ is of degree 1  then the edge of $F$ incident to $v$ is perpendicular
to $L$;
\item if $v$ is of degree 2 then the two edges
incident to $v$ form an angle with each other of at least $2\pi/3$, and  form equal angles $\alpha \leq \pi/6$ with $L$.
\end{enumerate}
\end{Lem}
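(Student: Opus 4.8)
The plan is to obtain (a) directly from the structure theorems already in hand, and to prove (b) and (c) by a first-order variational argument in which $v$ is slid along its critical line $L$. For (a): by Lemma~\ref{critical_existence_lemma}, $v\in{\cal F}\setminus V_S$, so $v$ is an extreme point of $\bar C$, where $C$ is the component of $F$ containing $v$; and by Lemma~\ref{steiner}, $C$ is an MST on $V_{\bar C}$. A Steiner vertex of an MST has its three incident edges meeting at angles $2\pi/3$ and hence lies in the interior of the triangle spanned by its three neighbours, so it cannot be extreme; thus $v$ is a terminal. A terminal of an MST has degree $1$, $2$, or $3$, and a degree-$3$ terminal again has all three incident gaps equal to $2\pi/3$ (a smaller gap would admit a length-decreasing Steiner insertion), so it too lies strictly inside the triangle of its neighbours and cannot be extreme in $\bar C$. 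Hence $v$ has degree $1$ or $2$ in $F$.

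The heart of (b) and (c) is the claim that, for all sufficiently small displacements, moving $v$ along $L$ keeps $F$ an opaque cover. Any such displacement keeps $v$ on $L$, so $L$ stays blocked at $v$; and since $v$ is free it lies on no critical line other than $L$. Running the limiting argument from the proof of Lemma~\ref{CL_lemma} in reverse, a violating line for the perturbed cover would, in the limit as $v$ returns to its original position, be a critical line of $F$ through $v$, and hence must coincide with $L$ --- a contradiction, since $L$ remains blocked. Since $F$ is a minimum OC and the only edges of $F$ that move are those incident to $v$, the sum $\sigma$ of the lengths of these incident edges is minimized, over all positions of $v$ on $L$, at the current position; hence its one-sided derivatives in both directions along $L$ vanish.

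Now carry out the (routine) calculus. Write $\mathbf{e}$ for a unit vector along $L$. If $v$ has degree $1$ with incident edge $vw$ and $\mathbf{u}=(w-v)/|w-v|$, then the derivative of $\sigma=|vw|$ as $v$ moves along $L$ at unit speed is $-\mathbf{u}\cdot\mathbf{e}$, so stationarity forces $\mathbf{u}\perp\mathbf{e}$, i.e. $vw\perp L$; this is (b). If $v$ has degree $2$ with incident edges $vw_1,vw_2$ and unit vectors $\mathbf{u}_1,\mathbf{u}_2$, stationarity gives $(\mathbf{u}_1+\mathbf{u}_2)\cdot\mathbf{e}=0$, i.e. $\mathbf{u}_1+\mathbf{u}_2$ is normal to $L$. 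Because $L$ is a critical line it meets no component's convex-hull interior (Proposition~\ref{OC_component_lemma}(a)), so $L$ supports $\bar C$ at $v$ and $w_1,w_2$ lie in one closed half-plane bounded by $L$; together with $\mathbf{u}_1+\mathbf{u}_2\perp L$, this forces $\mathbf{u}_1$ and $\mathbf{u}_2$ to be reflections of one another in the normal to $L$ at $v$, hence to make equal angles $\alpha$ with $L$. Finally, since $C$ is an MST and $v$ has degree $2$, the angle $\angle w_1vw_2$ is at least $2\pi/3$; as this angle equals $\pi-2\alpha$, we conclude $\alpha\le\pi/6$, which is (c). (The degenerate case $\mathbf{u}_1=-\mathbf{u}_2$ would place both edges on $L$, impossible for two consecutive edges at the vertex $v$ of the convex polygon $\partial\bar C$.)

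The step I expect to be the genuine obstacle is the claim in the second paragraph that sliding $v$ along $L$ preserves the OC property: everything else is bookkeeping with MST angle conditions and a single derivative. This is precisely where the freeness of $v$ enters, and making the limiting argument airtight --- in particular ruling out that a fresh violating line appears away from $L$ --- requires the compactness reasoning underlying Proposition~\ref{OC_component_lemma} and Lemma~\ref{CL_lemma}.
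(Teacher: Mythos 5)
Your proposal is correct and follows essentially the same route as the paper, which simply cites the proofs of Lemma~\ref{critical_points_connected}(a) (the MST/extreme-point argument giving degree 1 or 2) and Theorem~\ref{boundary-critical_connected}(a) (the slide-$v$-along-$L$ variational argument plus the $2\pi/3$ MST angle condition yielding perpendicularity, equal angles, and $\alpha\le\pi/6$), noting these do not require connectedness. Your more explicit justification that freeness allows $v$ to move along $L$ without destroying the OC property is exactly the property the paper attributes to FCPs, so no essential difference in approach.
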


\begin{proof} Statement (a) follows from the proof of Lemma~\ref{critical_points_connected}(a). Statements (b) and (c) follow from the proof of Theorem~\ref{boundary-critical_connected}(a). Note that neither of these earlier proofs rely on $F$ being connected.
\end{proof}

For the remainder of this section we focus on critical lines in which all of the incident critical points are free.

\begin{Lem}
Let $L$ be a critical line in which all of the incident critical points are FCPs.  Then in addition to satisfying Lemma~\ref{free-pt-properties} the points $v_1$, $v_2$ and $v_3$ satisfying the conditions of Lemma~\ref{CL_lemma} are the {\em only}\/ points of intersection of $F$ with $L$.
\end{Lem}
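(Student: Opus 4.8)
The plan is to argue by contradiction, using the minimality of $F$ together with Corollary~\ref{extra-critical-points}. Suppose $L$ is a critical line whose incident critical points $v_1$, $v_2$, $v_3$ (as given by Lemma~\ref{CL_lemma}) are all FCPs, but there is some additional point $w$ of $F$ lying on $L$. By Corollary~\ref{extra-critical-points}, $w$ must lie on at least two critical lines, so $w$ is not an FCP. There are two possibilities: either $w$ is a critical point (and hence not free), or $w$ is a non-critical vertex or interior point of an edge of some component of $F$ lying on $L$.

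First I would dispose of the case where $w$ is a non-free critical point. By Lemma~\ref{CL_lemma}, each of the critical points $v_1$, $v_2$, $v_3$ on $L$ is itself a vertex of $\mathcal{F}\setminus V_S$, and in every one of the three types the lemma explicitly asserts a minimality condition — for instance, in a Type~III line with $c=v_1$, ``no other point of $F$ is tangent before $v_3$ on $L$ from the same side as $v_1$ nor after $v_2$ on $L$ from the opposite side as $v_1$'', and analogously for the other types. The first step is to show that these conditions, combined with the fact that each listed critical point is free (so lies on exactly one critical line, namely $L$), force any further point $w$ of $F\cap L$ to be ``dominated'': the component of $F$ containing $w$ either lies strictly on the wrong side of $L$ near $w$, or can be perturbed. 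Concretely, if $w$ were a critical point lying on $L$, then $L$ is one of its critical lines; since $w$ is non-free it lies on a second critical line $L'\neq L$. I would then argue that we can perturb $w$ inward along the bisector of $L$ and $L'$ (or along whichever critical edge/edges of $\partial\bar F$ meet at $w$), strictly decreasing $\lambda_1(F)$, and that the resulting $F'$ is still an OC: any potentially-exposed line through $S$ would have to be (a limit of) a critical line through $w$, but both $L$ and $L'$ remain blocked because $L$ is blocked by the configuration of $v_1,v_2,v_3$ (which we did not move and whose blocking tangencies are preserved), and $L'$ is blocked by the configuration associated with $L'$ (again untouched). This contradicts minimality.

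Next I would handle the case where $w$ is not a critical point at all — an interior edge-point or a degree-$3$ Steiner vertex of some component $C$ with $\bar C$ tangent to $L$ at $w$. Here the key observation is that a component $C$ can contribute a tangency to $L$ only at a vertex of $\partial\bar C$, and by Lemma~\ref{conn_hull_lemma}(\ref{conn_hull_lemma2}) (or its multi-component analogue via Lemma~\ref{steiner}) the vertices of $\partial\bar C$ are the degree~$1$ or~$2$ vertices of $C$, never Steiner vertices. So $w$ is a degree~$1$ or~$2$ vertex of its component. Since by hypothesis $w$ is not one of $v_1,v_2,v_3$ and is not a critical point, $w\in V_S$ — but Lemma~\ref{CL_lemma} already names a single vertex $v_2\in V_S$ on $L$ in the Type~I and Type~II cases and none in Type~III; by convexity of $S$, $L$ meets $\partial S$ in at most a single point or a single edge, so there is no room for a second vertex of $V_S$ on $L$ unless $w$ coincides with $v_2$ or with an endpoint of the edge $L\cap\bar S$ (which the labelling in Theorem~\ref{boundary-critical_connected}, Figure~\ref{boundary_cl}(b), already accounts for among the $v_i$). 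Ruling this out reduces to a case check against the three types of Lemma~\ref{CL_lemma}.

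The main obstacle, I expect, is the first step: showing that an extra point $w$ on $L$ genuinely gives a length-reducing perturbation while keeping $F$ an opaque cover. The delicacy is that moving $w$ might expose a line through $S$ that was previously blocked jointly by $w$'s component and the $v_i$-configuration, rather than by $w$ alone; I would need Corollary~\ref{extra-critical-points} to pin down that any such newly-exposed line is itself a critical line of some point we are perturbing, and then use the freeness of $v_1,v_2,v_3$ (each on exactly the one critical line $L$) to conclude that $L$'s blocking data is robust under a perturbation localized at $w$. Making this localization precise — i.e.\ choosing the perturbation of $w$ small enough and in a direction compatible with all critical lines through $w$ simultaneously, as in the proof of Lemma~\ref{CL_lemma} — is where the real work lies, but it follows the same variational template used throughout Section~5.
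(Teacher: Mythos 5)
The paper's own proof is a single line: the statement follows immediately from Corollary~\ref{extra-critical-points}, because any additional vertex of $F$ on $L$ would lie on at least two critical lines, hence be a critical point that is not free, which contradicts the hypothesis once that hypothesis is read correctly --- namely that \emph{every} critical point lying on $L$ (not merely the three labelled points $v_1,v_2,v_3$) is an FCP; compare the phrasing ``all incident points of $\cal F$ are FCPs'' in Theorems~\ref{TypeI_CL}--\ref{TypeIII_lemma}. Your proposal starts from the same corollary, so the skeleton agrees with the paper, but you read the hypothesis as applying only to $v_1,v_2,v_3$, and this forces you to open the case of a non-free critical point $w\in L$ and to dispose of it by a perturbation argument. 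That step is the genuine gap: for a \emph{minimum} OC, the definition of a critical point (see the note in Definition~\ref{critical_definition}) says precisely that any length-decreasing perturbation of such a point produces a violating line, so you cannot claim that moving $w$ inward ``along the bisector of $L$ and $L'$'' leaves an opaque cover. Your justification also asserts that the blocking configuration of $L'$ is ``untouched,'' but $w$ lies on $L'$ and may itself be one of the tangency points in that configuration, so moving it can expose lines near $L'$. Figure~\ref{nonfree}(b) is exactly an example of a non-free critical point in a conjectured minimum OC that cannot be perturbed away, so no argument of this shape can succeed in general.

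The repair is simple: under the intended reading of the hypothesis, a non-free critical point on $L$ contradicts the assumption outright, with no variational work, which is why the paper's proof is immediate. Your second case (any remaining intersection point must be a tangency of some component's hull with $L$, hence a degree-1 or degree-2 hull vertex rather than a Steiner vertex, hence by Lemma~\ref{critical_existence_lemma} either a critical point or an element of $V_S$ already accounted for among the labelled points) is in the right spirit and essentially corresponds to the configuration check the paper delegates to Figure~\ref{CL_freepoint_figure}; with the first case replaced by the direct appeal to the hypothesis, your argument collapses to the paper's.
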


\begin{proof} This follows immediately from Corollary~\ref{extra-critical-points}. Note that, for each of the three types of critical line, the components of $F$ tangent to $L$ appear as shown in Figure~\ref{CL_freepoint_figure}.
\end{proof}

\begin{figure}
\begin{center}
\input{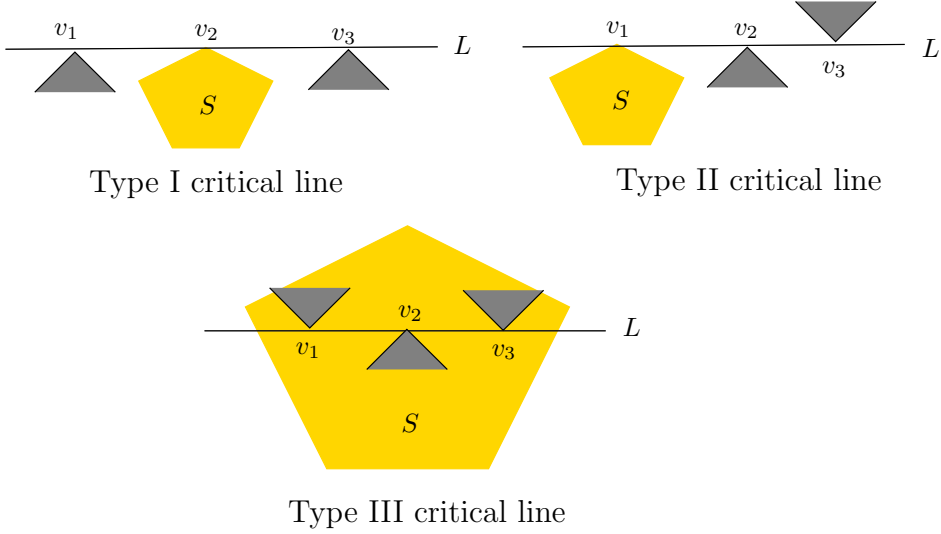}\\[1.5em]
\end{center}
\caption{Positions of tangent components of $F$ on $L$ when all critical points of $L$ are FCPs.}
\label{CL_freepoint_figure}
\end{figure}

We will now consider the three types of critical lines separately.

\begin{Thm}\label{TypeI_CL} Let $L$ be a Type I critical line for which all incident points of $\cal F$ are FCPs.  Then $L$ intersects $\bar F$ in a critical edge, the points $v_1$ and $v_3$ are both of degree 2 in $F$, and using the notation in Figure~\ref{boundary_cl} we have:
\tr 2
[(a)] if $L\cap S$ is a single point $v_2$ then $d_1\sin\alpha_1=d_3\sin\alpha_3$ and $\sin\alpha_1 + \sin\alpha_3\leq 1/2$;
\tn
[(b)] if $L\cap S$ is an edge of $S$ then
\[
\sin\alpha_1\leq\frac{1}{2}\left(\frac{d_2 + d_3}{d_1+ d_2 + d_3}\right),\ \
\sin\alpha_3\leq\frac{1}{2}\left(\frac{d_1+d_2}{d_1+ d_2 + d_3}\right);
\]
and
\[
\frac{d_3}{d_1+d_2}\leq
\frac{\sin\alpha_1}{\sin\alpha_3}\leq\frac{d_2+d_3}{d_1}.
\]
\tl
\end{Thm}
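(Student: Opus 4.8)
The plan is to combine the structural description of Type~I critical lines in Lemma~\ref{CL_lemma} and the immediately preceding lemma (which, when every incident critical point is an FCP, tells us that $v_1,v_2,v_3$ are the only points of $F$ on $L$) with three kinds of local perturbation: (i)~the variational rotations of $L$ already used in Theorem~\ref{boundary-critical_connected} and Lemma~\ref{free-pt-properties}; (ii)~a new ``push-and-patch'' move that slides the critical edge a small distance off $L$; and (iii)~a tilted variant of (ii). Throughout I would fix coordinates in which $L$ is horizontal. Since $L$ supports $\bar F$ and $S\subseteq\bar F$ by Lemma~\ref{SinFbar}, all of $F$ and all of $S$ then lie in the closed half-plane on one side of $L$, touching $L$ exactly at $v_1,\dots,v_3$, and $v_2$ (or the edge $v_2v_2'$) belongs to $V_S$.

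First I would dispatch the structural claim. Because $v_1,v_3\in{\cal F}\subseteq\partial\bar F$ are distinct, lie on the supporting line $L$, and are the only points of $F$ on $L$, the segment $v_1v_3$ is exactly $L\cap\bar F$ and hence an edge of $\partial\bar F$ contained in the critical line $L$; by definition this is a critical edge. For the degree claim I would use the push-and-patch move: translate the critical edge a small distance $\epsilon>0$ off $L$ into $\bar F$ (that is, move $v_1$ and $v_3$, together with their incident edges, a distance $\epsilon$ along the perpendicular to $L$) and add the short perpendicular segment from $v_2$ down to the translated edge. The point to verify is that this restores the opaque-cover property: every line that now crosses $S$ but misses the perturbed $F$ must separate $v_2$ from the translated segment, is therefore $O(\epsilon)$-close to $L$ (hence still meets $S$), and so crosses the added segment; and, because $v_1,v_3$ are free, they lie on no critical line but $L$, so no other part of the OC structure is disturbed. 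The added segment has length exactly $\epsilon$; moving a degree-$1$ FCP (whose edge is perpendicular to $L$ by Lemma~\ref{free-pt-properties}) shortens $F$ at rate $1$, and moving a degree-$2$ FCP at rate $2\sin\alpha_i$. Thus the net first-order change in length is $\epsilon$ minus the two savings: if either $v_i$ had degree $1$ this would be $\epsilon-\epsilon-(\text{a positive term})<0$, contradicting minimality, so both have degree $2$; and then the net change is $\epsilon\bigl(1-2\sin\alpha_1-2\sin\alpha_3\bigr)\ge 0$, which is the bound $\sin\alpha_1+\sin\alpha_3\le 1/2$ in (a).

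The remaining relations I would obtain by reusing or adapting rotation arguments. The equality $d_1\sin\alpha_1=d_3\sin\alpha_3$ in (a) is exactly the computation of Theorem~\ref{boundary-critical_connected}(d): rotating $L$ about $v_2\in V_S$ in either sense keeps $L$ incident to $S$ and keeps $F$ (with $v_1,v_3$ carried along) an opaque cover, so the first-order length change $2\theta(d_1\sin\alpha_1-d_3\sin\alpha_3)$ must vanish; as remarked after Lemma~\ref{free-pt-properties}, these arguments do not use connectedness. In case (b), the ratio inequalities $\frac{d_3}{d_1+d_2}\le\frac{\sin\alpha_1}{\sin\alpha_3}\le\frac{d_2+d_3}{d_1}$ are precisely Theorem~\ref{boundary-critical_connected}(g), proved by rotating $L$ about $v_2$ and about $v_2'$ (each in its single admissible direction). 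The two individual bounds on $\sin\alpha_1$ and $\sin\alpha_3$ I would get from the tilted push-and-patch move: rotate the critical edge about $v_3$ (respectively $v_1$), pushing $v_1$ (respectively $v_3$) a distance $\epsilon$ off $L$, and patch with the perpendicular from $v_2$ (respectively $v_2'$) down to the new critical edge. A short computation gives that this perpendicular has length $\epsilon\,\frac{d_2+d_3}{d_1+d_2+d_3}$ (resp.\ $\epsilon\,\frac{d_1+d_2}{d_1+d_2+d_3}$), that it meets every newly exposed line, and that the net length change is $-2\epsilon\sin\alpha_1+\epsilon\,\frac{d_2+d_3}{d_1+d_2+d_3}\ge 0$ (resp.\ the symmetric expression); minimality then yields the stated bounds.

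The main obstacle, and the only part demanding real care, is the opaque-cover bookkeeping for the push-and-patch perturbations: one must show precisely that the lines exposed by a small translation (or tilt) of the critical edge are exactly those separating $v_2$ (or $v_2'$) from the moved edge, that each such line stays near $L$ and so still meets $S$, and that a single perpendicular segment of the asserted length meets all of them — while checking that the perturbation leaves every other critical line, and every other component of $F$, undisturbed to first order. Everything else is either immediate from Lemma~\ref{CL_lemma} and Lemma~\ref{SinFbar}, or a transcription of the variational calculus already carried out for Theorem~\ref{boundary-critical_connected}.
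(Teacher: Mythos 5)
Most of your argument coincides with the paper's: the equality $d_1\sin\alpha_1=d_3\sin\alpha_3$ via rotation about $v_2$, the ratio bounds in (b) via rotations about $v_2$ and $v_2'$ (indeed the reciprocal of Theorem~\ref{boundary-critical_connected}(g)), and your ``tilted push-and-patch'' for the two individual bounds in (b) is exactly the paper's move (rotate about $v_3$, resp.\ $v_1$, and add a perpendicular of length $(d_2+d_3)\theta$, resp.\ $(d_1+d_2)\theta$, at $v_2$, resp.\ $v_2'$). Your genuinely new ingredient is the parallel translation of the critical edge with a single perpendicular patch at $v_2$, used to force $\deg v_1=\deg v_3=2$ and to get $\sin\alpha_1+\sin\alpha_3\le 1/2$; in case (a) this works and gives a cleaner, symmetric derivation of that bound than the paper's (which rotates about $v_1$ and then invokes the equality).

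However, the translation move breaks down precisely in case (b), and that is where your proof of the degree claim has a genuine gap. When $L\cap S$ is the edge $v_2v_2'$, after pushing both $v_1$ and $v_3$ a distance $\epsilon$ off $L$ and erecting only the segment of length $\epsilon$ at $v_2$, consider a line of very small slope that crosses $L$ at a point strictly between $v_2$ and $v_2'$ and dips below $L$ on the $v_3$ side: it meets $S$ just beneath the edge $v_2v_2'$, passes above $v_2$ (so it misses the patch), stays above the displaced apex $v_3'$ and both of its incident edges (its depth at $x_{v_3}$ can be made $<\epsilon$ and its slope $<\tan\alpha_3$), and lies above $L$ on the $v_1$ side. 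These lines converge to $L$ itself as $\epsilon\to 0$, so the freeness of $v_1$ and $v_3$ does not exclude them; your perturbed set is simply not an OC. (A telltale sign: if the move were valid in case (b) it would yield $\sin\alpha_1+\sin\alpha_3\le 1/2$ there as well, which is strictly stronger than the bounds the theorem actually asserts.) Patching at both $v_2$ and $v_2'$ does not rescue the degree argument either, since the cost $2\epsilon$ then matches the maximal savings. The degree claim in case (b) needs an asymmetric perturbation that keeps one of the two components pinned on $L$ --- exactly the paper's rotation about $v_3$ (resp.\ $v_1$) with a perpendicular at $v_2$, whose rate is $(d_2+d_3)-(d_1+d_2+d_3)=-d_1<0$ when $v_1$ has degree 1; this is available within your own ``tilted'' toolkit, but as written your proposal does not establish that $v_1,v_3$ have degree 2 when $L\cap S$ is an edge.
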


\begin{proof}
First suppose $v_1$ has degree 1 in $F$.  Referring to Figure~\ref{boundary_cl}, if $S$ intersects $L$ in one point $v_2$ or  in an edge $v_2v_2'$, then we can rotate $L$ about $v_3$ inwards (with respect to $S$), decreasing the length of the edge adjacent to $v_1$ while growing an edge from $v_2$ perpendicular to $L$. In either case this will decrease the length of $F$ without violating the conditions of an OC, giving a contradiction. A symmetric argument shows that $v_3$ also has degree 2 in $F$.

The equality and inequalities in the statement of the theorem can all be established by showing that they are necessary in order to prevent the existence of rotations of $L$ which decrease the length of $F$.

For $(a)$, we first consider rotations of $L$  about $v_2$ that could potentially decrease the length of $F$.  The infinitesimal rate of change in the length of $F$ from such a rotation  in the direction towards $v_1$'s component is $2d_3\sin\alpha_3-2d_1\sin\alpha_1$, and for the opposite rotation is $2d_1\sin\alpha_1-2d_3\sin\alpha_3$. Hence, both expressions must equal zero, and the equality $d_1\sin\alpha_1=d_3\sin\alpha_3$ follows.  For the inequality, we note that rotating $L$ around, say, $v_1$ towards $v_3$'s component and adding a perpendicular from $L$ to $v_2$, results in an infinitesimal rate of change in the length of $F$ of  $d_1-2(d_1+d_3)\sin\alpha_3$.  Noting that this must be nonnegative, and using the equality above, we obtain the required inequality $\sin\alpha_1 + \sin\alpha_3\leq 1/2$.  Note that the same inequality also results from rotating $L$ about $v_3$.

For $(b)$, we note that a rotation of $L$ about $v_3$ toward $v_1$'s component, adding a perpendicular from $L$ to $v_2$, results in an OC where the infinitesimal rate of change in the length of $F$ is $(d_2+d_3)-2(d_1+d_2+d_3)\sin\alpha_1$.  Since this must be nonnegative, the first inequality follows.  By rotating $L$ about $v_1$ toward $v_3$'s component, we get the second inequality. The last pair of inequalities follows, by a similar argument, from considering rotations of $L$ about $v_2$ and $v_2'$ away from $S$ in each case.
\end{proof}
\begin{Thm}\label{TypeII_CL} Let $L$ be Type II critical line for which all incident points of $\cal F$ are FCPs.  Then, referring to the notation in Figure~\ref{CL_TypeII_figure}, $v_3$ has degree 2 in $F$, and  we have:
\tr 2
[(a)] if $L\cap S$ is a single point and $v_2$ has degree 1, then
\[
\sin\alpha_3=\frac{1}{2}\left(\frac{d_2}{d_2+d_3}\right);
\]

\tn
[(b)] if $L\cap S$ is a single point and $v_2$ has degree 2, then
\[
\frac{\sin\alpha_3}{\sin\alpha_2}=\frac{d_2}{d_2+d_3};
\]
\tn
[(c)] if $L\cap S$ is an edge and $v_2$ has degree 1, then
\[
\frac{1}{2}\left(\frac{d_2}{d_2+d_3}\right)\leq
\sin\alpha_3\leq\frac{1}{2}\left(\frac{d_1+d_2}{d_1+d_2+d_3}\right);
\]

\tn
[(d)] if $L\cap S$ is an edge and $v_2$ has degree 2, then
\[
\frac{d_2}{d_2+d_3}\leq
\frac{\sin\alpha_3}{\sin\alpha_2}\leq\frac{d_1+d_2}{d_1+d_2+d_3}.
\]

\tl
\end{Thm}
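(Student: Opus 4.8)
The plan is to follow the variational template of the proof of Theorem~\ref{TypeI_CL}, but with one extra restriction on the admissible perturbations that is peculiar to Type~II critical lines and is what forces equalities in the single‑point case and two‑sided inequalities in the edge case. First I would record the structural data. By Lemma~\ref{CL_lemma}, on $L$ the point $v_1$ lies in $V_S$, the component $C_2$ containing $v_2$ is tangent to $L$ from the same side as $S$, and the component $C_3$ containing $v_3$ is tangent to $L$ from the opposite side, with $v_1,v_2,v_3$ in this order along $L$; and $L\cap S$ is either the single vertex $v_1$ or an edge of $\partial\bar S$ having $v_1$ as its left endpoint. Since $v_2$ and $v_3$ are FCPs, Lemma~\ref{free-pt-properties} tells us each has degree $1$ or $2$ in $F$: if degree $1$, the incident edge is perpendicular to $L$; if degree $2$, the two incident edges make equal angles $\alpha_i\le\pi/6$ with $L$, lying on the side of their own component.

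The crux is identifying the admissible length‑changing perturbations. I would argue that the only useful ones are: rotate $L$ about a point of $\partial S\cap L$ in a direction that keeps $L$ from entering the interior of $S$, while sliding $v_2$ and $v_3$ perpendicular to $L$ so that $C_2$ and $C_3$ stay tangent to the rotated line on their respective sides, dragging their incident edges accordingly; by Proposition~\ref{OC_component_lemma} (applied to the perturbed configuration) and continuity, such a perturbation keeps $F$ an OC. The key point — the main obstacle in the proof — is to show that a rotation carrying $L$ into the interior of $S$ is \emph{not} admissible: near such a rotated line there are dangerous lines tilting to both sides, and a pair of components tangent to a line on \emph{opposite} sides (as $C_2,C_3$ are) cannot block all of them, so the perturbed set would fail to be an OC. Hence in the single‑point case $L\cap S=\{v_1\}$ one may rotate about $v_1$ in either direction, whereas in the edge case $L\cap S$ is an edge and about each of its two endpoints exactly one rotation direction is admissible.

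With this in hand the rest is bookkeeping with first‑order rates, exactly as in the proof of Theorem~\ref{boundary-critical_connected}: a degree‑$1$ vertex at distance $d$ from the pivot changes its edge length at rate $\pm d$, and a degree‑$2$ vertex at distance $d$ with half‑angle $\alpha$ at rate $\pm 2d\sin\alpha$, the sign depending on whether the rotation lengthens or shortens the incident edges. To see $v_3$ has degree $2$: if it were degree $1$, rotate $L$ about $v_1$ (resp.\ the far endpoint of $L\cap S$) so as to shorten the perpendicular edge at $v_3$; then the edge(s) at $v_2$ lengthen at rate at most $d_{v_2}$ (using $2\sin\alpha_2\le 1$ if $v_2$ has degree $2$), while the edge at $v_3$ shortens at rate $d_{v_3}>d_{v_2}$, so $\lambda_1(F)$ strictly decreases to first order, contradicting minimality. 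Finally, with $v_3$ of degree $2$, impose nonnegativity of the first‑order length change for each admissible (pivot, direction) pair, reading off the distances $d_1,d_2,d_3$ and angles $\alpha_2,\alpha_3$ from Figure~\ref{CL_TypeII_figure}: in the single‑point case the two admissible directions about $v_1$ give matching upper and lower bounds, producing the equalities (a) and (b); in the edge case the admissible rotation about the far endpoint $v_1$ gives the upper bound and that about the near endpoint gives the lower bound, producing the two‑sided inequalities (c) and (d). The only genuinely delicate point throughout is the admissibility claim of the second paragraph, plus checking in each sub‑case that the perturbed set still satisfies the tangency pattern of Lemma~\ref{CL_lemma} and hence remains an OC.
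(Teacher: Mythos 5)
Your proposal is correct and follows essentially the same route as the paper's proof: establish that $v_3$ has degree $2$ by exhibiting a length-decreasing rotation about $v_1$ if it had degree $1$, then rotate $L$ about the point(s) of $L\cap\partial S$ (both directions about $v_1$ in the single-point case, one direction about each endpoint of the edge in the edge case) and impose nonnegativity of the first-order rates $d$ and $2d\sin\alpha$, which yields exactly the paper's equalities in (a)--(b) and two-sided inequalities in (c)--(d), with the same pivot-to-bound correspondence. Your extra discussion of why rotations carrying $L$ into the interior of $S$ are inadmissible is not needed for the stated conclusions, and your admissibility claims for the rotations you do use are at the same (asserted) level of detail as in the paper.
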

\begin{figure}
\begin{center}
\input{}\\[1.5em]
(a) $L\cap S$ is a single point\\[1.5em]
\input{}\\[1.5em]
(b) $L\cap S$ is an edge\\[1.5em]
\end{center}
\caption{Figure for Theorem~\ref{TypeII_CL}}
\label{CL_TypeII_figure}
\end{figure}

\begin{proof}
First, if $v_3$ has degree 1 in $F$ then we can rotate $L$ about $v_1$ toward $v_3$'s component, decreasing the length of the edge adjacent to $v_3$ while adding an edge from $v_2$ perpendicular to $L$. This results in an OC of smaller length, giving a contradiction.

For $(a)$ and $(b)$, if we rotate $L$ about $v_1$ either way, moving $v_2$ and $v_3$ and adjusting $F$ accordingly, it follows from minimality that the length of $F$ does not decrease.  The rate of change for this rotation is $\pm(d_2 - 2(d_2+d_3)\sin\alpha_3)$  for $(a)$ and $\pm(2d_2\sin\alpha_2 - 2(d_2+d_3)\sin\alpha_3)$ for $(b)$.  The two equalities follow.

For $(c)$, if we rotate $L$ about $v_1'$ towards $v_2$'s component, moving $v_2$ and $v_3$ and adjusting $F$ accordingly, again the length of $F$ does not decrease.  The rate of change for this rotation is $2(d_2+d_3)\sin\alpha_3-d_2$, which must be nonnegative.
Similarly , if we rotate $L$ about $v_1$ towards $v_3$'s component, the rate of change is $(d_1+d_2)-2(d_1+d_2+d_3)\sin\alpha_3$, which must also be nonnegative.  The inequalities follow.

For $(d)$, if we rotate $L$ the same way as for $(c)$ we get rates of  change of $2(d_2+d_3)\sin\alpha_3 - 2d_2\sin\alpha_2$ and  $2(d_1+d_2)\sin\alpha_2 - 2(d_1+d_2+d_3)\sin\alpha_3$ respectively, which must both also be nonnegative.  The inequalities follow.
\end{proof}

\begin{Thm}\label{TypeIII_lemma}  Let $L$ be a Type III critical line for which all incident points of $\cal F$ are FCPs.  Then referring to Figure~\ref{TypeIII_figure}, both $v_1$ and $v_3$ have degree 2 in $L$ and:
\begin{enumerate}\renewcommand{\labelenumi}{$(\alph{enumi})$}
\item
if $v_2$ has degree 1 in $F$, then
\[
\sin\alpha_1=\frac{1}{2}\left(\frac{d_3}{d_1+d_3}\right),\ \
\sin\alpha_3=\frac{1}{2}\left(\frac{d_1}{d_1+d_3}\right);
\]
 \item
if $v_2$ has degree 2 in $F$, then
\[
\sin\alpha_1=\frac{d_3}{d_1+d_3}\sin\alpha_2,\ \
\sin\alpha_3=\frac{d_1}{d_1+d_3}\sin\alpha_2.
\]
\end{enumerate}
\end{Thm}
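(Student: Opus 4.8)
The plan is to run the same variational argument used in the proofs of Theorems~\ref{TypeI_CL} and~\ref{TypeII_CL}. What makes the Type~III case simpler is that $L$ meets the \emph{interior} of $S$, so every sufficiently small rotation $L(\epsilon)$ of $L$ still meets the interior of $S$; there is no boundary edge of $\partial\bar S$ to be accommodated, and each inequality that appeared in the Type~II analysis now tightens to an equality. Throughout I take $d_1=|v_1v_2|$, $d_3=|v_2v_3|$, and I use the Type~III/FCP configuration of Figure~\ref{CL_freepoint_figure}: the components of $F$ touching $L$ at $v_1$ and $v_3$ lie on one side of $L$, and the component touching $L$ at $v_2$ lies on the other.

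First I would prove the degree claim. Suppose $v_1$ has degree~$1$ in $F$; by Lemma~\ref{free-pt-properties}(b) its incident edge is orthogonal to $L$. Rotate $L$ about $v_3$ in the sense that drives $v_1$ (kept on $L(\epsilon)$ by its free tangential motion) into the body of its component, so that this edge shrinks at first-order rate $|v_1v_3|=d_1+d_3$. Since $L(\epsilon)$ still meets the interior of $S$, the opaque-cover property can fail only near $v_2$; if $L(\epsilon)$ recedes from the component at $v_2$, re-block the newly exposed pencil of lines by a short segment perpendicular to $L(\epsilon)$, whose first-order cost is at most $|v_2v_3|=d_3$, and otherwise no repair is needed. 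Because $d_1+d_3>d_3$, the net first-order change of $\lambda_1(F)$ is strictly negative, contradicting minimality; hence $v_1$ has degree~$2$, and by the symmetric argument so does $v_3$.

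For the equalities, fix a point $p$ strictly between $v_1$ and $v_2$ on $L$ and rotate $L$ about $p$, re-seating each of $v_1,v_2,v_3$ on $L(\epsilon)$ (the tangential part of the motion being free, as each is an FCP) and re-optimising the three incident components. By the variational computation used in the proof of Theorem~\ref{boundary-critical_connected}, the first-order change of $\lambda_1(F)$ equals the sum of the first-order changes of the edge(s) of $F$ at $v_1,v_2,v_3$; a degree-$2$ terminal $v_i$ contributes $\pm 2|pv_i|\sin\alpha_i$ and a degree-$1$ terminal $\pm|pv_i|$, the $\pm$ recording the side of $L$ on which the component of $v_i$ lies, so that it agrees for $v_1$ and $v_3$ and is opposite at $v_2$. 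Writing $s=|pv_2|\in(0,d_1)$, hence $|pv_1|=d_1-s$ and $|pv_3|=s+d_3$, this is an affine function of $s$, and since both senses of rotation are admissible, minimality forces it to vanish identically on $(0,d_1)$. Equating the coefficient of $s$ and the constant term to zero gives, when $v_2$ has degree~$1$, the system $\sin\alpha_1+\sin\alpha_3=\frac12$ and $d_1\sin\alpha_1=d_3\sin\alpha_3$, and, when $v_2$ has degree~$2$, the system $\sin\alpha_1+\sin\alpha_3=\sin\alpha_2$ and $d_1\sin\alpha_1=d_3\sin\alpha_3$; solving each pair yields precisely the formulas in the statement.

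The step I expect to be the genuine obstacle is not the trigonometry but the verification that the perturbed set remains an opaque cover: that re-seating the FCPs and re-optimising exposes no line away from $L$, and that each re-seated component stays strictly on its prescribed side of $L(\epsilon)$. For small $\epsilon$ this follows from the bound $\alpha_i\le\pi/6$ of Lemma~\ref{free-pt-properties}(c), from the fact that $v_1,v_2,v_3$ are the only points of $F$ on $L$, and from the observation that a sufficiently small rotation keeps the configuration tight along $L$ and alters nothing elsewhere --- the same point that is passed over quickly in the proofs of Theorems~\ref{TypeI_CL} and~\ref{TypeII_CL}, and that I would handle the same way here.
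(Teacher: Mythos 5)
Your proposal is correct and is essentially the paper's own proof: the degree-2 claim is established by the same rotate-about-a-tangent-point-and-repair argument used for Type~II lines, and the two equalities come from the same one-parameter family of rotations of $L$ about a point between $v_1$ and $v_2$, with the affine-in-position rate of change forced to vanish identically by minimality (the paper evaluates this rate at the two endpoints $x=0$ and $x=d_1$ rather than equating coefficients, which is the same computation). One slip in your bookkeeping, worth fixing: the sign of each contribution depends on which side of the rotation centre $p$ the point lies as well as on which side of $L$ its component lies, so in fact the contributions of $v_1$ and $v_2$ carry the same sign and $v_3$ the opposite one (with your literal rule the constant term would be $2d_1\sin\alpha_1+2d_3\sin\alpha_3$, which cannot vanish); the systems you actually wrote down and solved, $d_1\sin\alpha_1=d_3\sin\alpha_3$ together with $\sin\alpha_1+\sin\alpha_3=\frac{1}{2}$ (respectively $=\sin\alpha_2$), are the correct ones and agree with the paper.
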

\begin{figure}
\begin{center}
\input{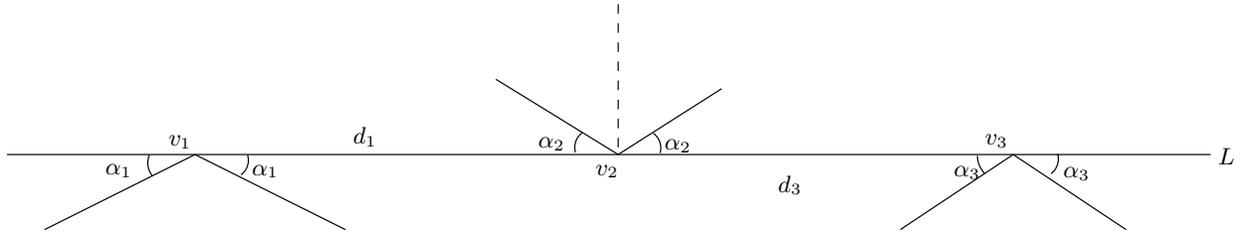}\\[1.5em]
\end{center}
\caption{Figure for Theorem~\ref{TypeIII_lemma}}
\label{TypeIII_figure}
\end{figure}

\begin{proof} That $v_1$ and $v_3$ both have degree 2 follows from the same argument as given in the proof of Theorem~\ref{TypeII_CL}.
\\[1em]
For $(a)$, consider the perturbation of $F$ consisting of rotating $L$ clockwise about a point on $L$ distance $x$  to the right from $v_1$, moving $v_1$, $v_2$, and $v_3$ along the associated arcs, and moving the endpoints of the adjacent edges correspondingly.  This causes a total rate of change of length for $F$ of
\[
2 x\sin \alpha_1 +(d_1-x) -2(d_1+d_3-x)\sin \alpha_3.
\]
Since $F$ has minimum length,  this perturbation cannot decrease the length of $F$, and so we must have this value equal to 0 for {\em every}\/ value of $x$.  Evaluating the perturbation at $x=d_1$ and $x=0$, we obtain
\[
d_1\sin \alpha_1-d_3\sin \alpha_3=0,\ \
d_1 -2(d_1+d_3)\sin\alpha_3=0
\]
from which the equations follow.
\\[1em]
For $(b)$, if $v_2$ has degree 2, we rotate $L$ the same way, now moving the edges adjacent to $v_2$ accordingly.  The rate of change of length is now
\[
2x\sin \alpha_1 +2(d_1-x)\sin\alpha_2 -2(d_1+d_3-x)\sin \alpha_3.
\]
Again, since this must have value 0 for every $x$ we have
\[
d_1\sin \alpha_1-d_3\sin \alpha_3=0,
\ \
d_1\sin\alpha_2 -(d_1+d_3)\sin\alpha_3=0
\]
from which the equations follow.
\end{proof}

\medskip

We end this section by considering the OCP for the set $S$ with boundary polygon $ABCD$ given in Figure~\ref{2C_OC_Figure}.
\begin{figure}
   \begin{center}
       \epsfxsize=5in
       \epsffile{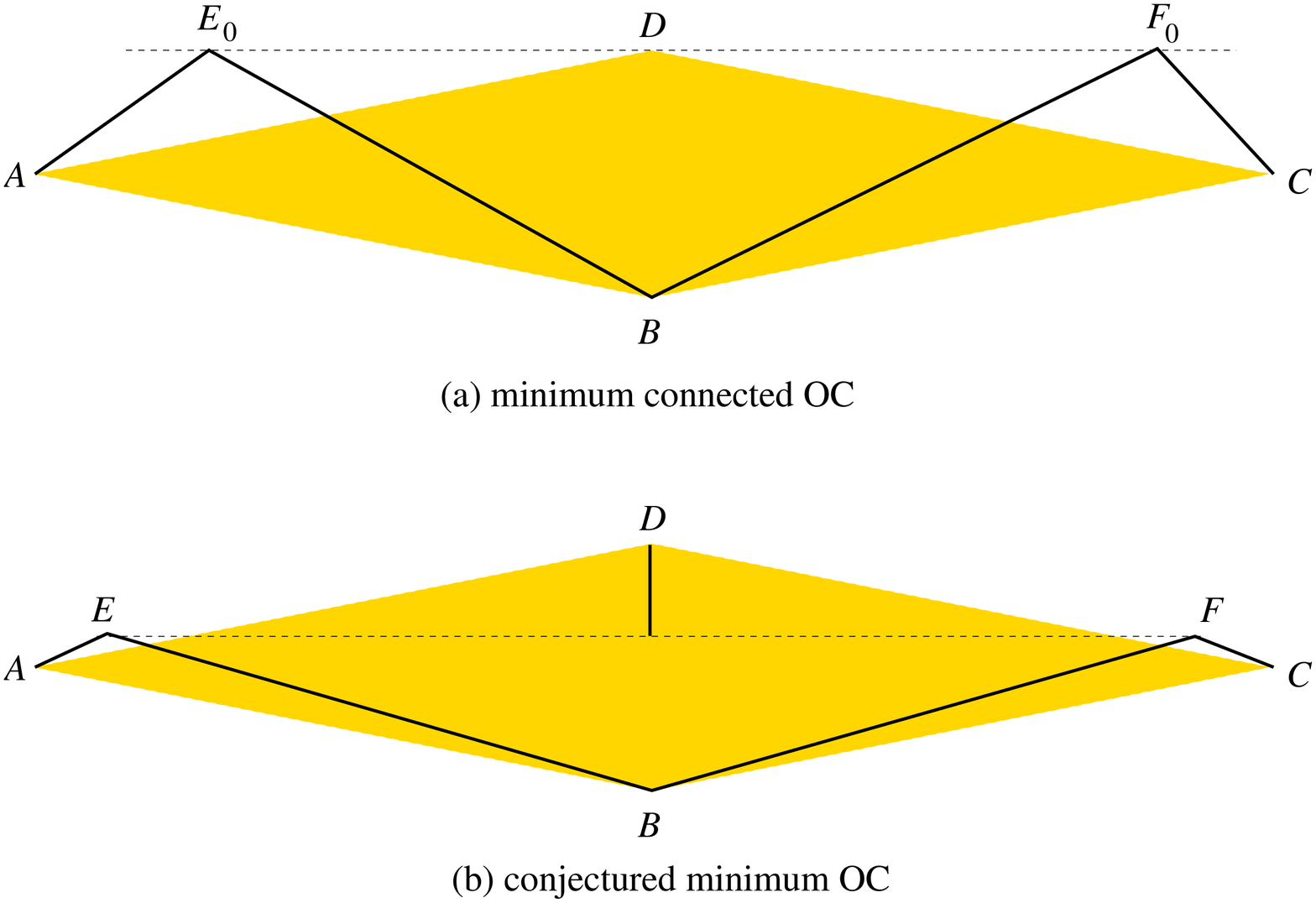}
   \end{center}
\vspace*{0in}
\centerline{Coordinate Values}\ \\[-2em]
\[\begin{array}{|c|rr|}
\hline
A &     0.0000 &     1.0000\\
B &     8.0000 &     0.0000\\
C &    16.0000 &     1.0000\\
D &     8.0000 &     2.0000\\
E &     2.0635 &     1.5328\\
F &    13.9365 &     1.5328\\
E_0 &     2.6667 &     2.0000\\
F_0 &    13.3333 &     2.0000\\
E' &     2.4259 &     1.3032\\
F' &    13.5741 &     1.3032\\
\hline
\end{array}\]
\caption{Two opaque covers}\label{2C_OC_Figure}
\end{figure}
Below are the relevant OCP solutions and conjectured solutions for $S$:
\begin{itemize}
\item
The minimum interior single-path OC is the path $ABDC$, with length 18.1246.
\item
By Corollary~\ref{connected interior OC}, the minimum interior connected OC is the MST on $ABCD$; this is a full MST with length 17.7602.
\item
The conjectured minimum connected OC, which happens to be a single-path OC, is given in Figure~\ref{2C_OC_Figure}(a), and consists of the path $AE_0BF_0C$ with $E_0=v_1$ and $F_0=v_3$ satisfying the conditions of Theorem~\ref{boundary-critical_connected}(d). The length of this path is 17.1245.
\item
The conjectured minimum OC for $S$ is the one shown in Figure~\ref{2C_OC_Figure}(b).  It consists of two components, one of which is the path $AEBFC$ going through the points $E$ and $F$ determined by Theorem~\ref{TypeIII_lemma}(a).  Its length is 16.9919.
\item
The above conjectured minimum OC lies partly outside $S$.  We can compute a conjectured minimum interior OC for $S$ by determining the points $E'$ on $AD$ and $F'$ on $CD$ that minimize the total length of the path $AE'BF'C$ plus the length of the perpendicular from $D$ to $E'F'$. The optimal locations for $E'$ and $F'$ are given in the table in Figure~\ref{2C_OC_Figure}.  The  length of this OC is 17.0352, which is  less than any of the connected OCs.
\item
This example also shows that the construction of Shermer \cite{Shermer}, discussed at the beginning of Section~5, does not always give a solution to the OCP.  Here the Shermer proposed OC turns out to consist of the MST on $ABC$, which is simply the path $ABC$, together with the perpendicular from $D$ to $AC$.  This has length 17.1245, which is greater than either of the 2-component OCs given above.
\item
The line through the points $E$ and $F$ in Figure~\ref{2C_OC_Figure}(b) is a Type III critical line.  We note that if we replace $D$ by the point $D'$ with the same $x$-coordinate as $D$ and with $y$-coordinate  changed to 1.5328 (that of $E$ and $F$), then the conjectured minimum OC for the resulting polygon $ABCD'$ is now the single path $AEBFC$, and the line through $E$ and $F$ is now a Type I critical line.  We have not been able to find a conjectured minimum OC with a Type II critical line that is shorter than ones without such critical lines, and it is an open question whether such critical lines actually occur in minimum OCs.
\end{itemize}

\subsection{Further properties of OCs with multiple connected components}

For the general OCP, we give further structural properties for each connected component, relating to its structure as an MST.   Let $F_0$ be a connected component of an OC, let $e$ be an edge of $F_0$, and let $p$ be an
interior point of $e$. We say that $p$ is \emph{visible} if for
each of the four quadrants determined by the line extending $e$
and the line perpendicular to $e$ through $p$ there exists a ray
from $p$ lying in that quadrant not intersecting $F_0$ (at a point
other than $p$). If there is no such ray in one or more of the
quadrants, then $p$ is said to be \emph{invisible}.  An edge of $F_0$ is said to be \emph{fully visible} if there is
no subinterval of $e$ (of length $>0$) consisting only of
invisible points (with respect to the same quadrant).  For example, in Figure~\ref{splitedge}(a) the edge $e=t_1s_1$ is not fully visible
since the interval $ps_1$ consists only of invisible points.
\begin{figure}[ht]
\begin{center}\includegraphics[width=5in]{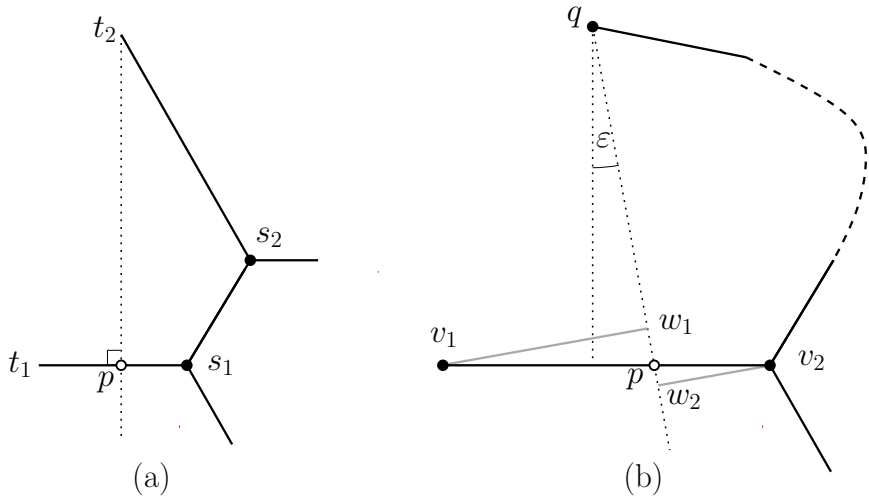}
\end{center}
\caption{Examples of invisible points, $p$, belonging, in each case to an edge of an OC that is not fully visible. In (b) the edge  $e=v_1v_2$ is not minimal because $e$ can be replaced by a  pair of edges of shorter length.}\label{splitedge}
\end{figure}

\begin{Thm}
\label{thm:fullyvisible} If $F$ is a minimum graphical OC, then
every edge of $F$ is fully visible.
\end{Thm}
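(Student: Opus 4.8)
The plan is a proof by contradiction in the variational/surgery spirit of the earlier results. Suppose some edge $e$ of a component $F_0$ of a minimum graphical OC $F$ is not fully visible, so there is a subinterval $I\subseteq e$ of positive length every point of which is invisible with respect to one fixed quadrant $Q$. Fix a local frame in which $e$ lies on the $x$-axis, $I$ is a subsegment of it, and $Q$ is the open upper-left quadrant. Since $F_0$ is a finite union of closed segments, the condition that every ray from $p\in I$ into $Q$ meets $F_0$ singles out a nonempty closed subset $\Gamma_p\subseteq F_0$ — the union of the pieces of $F_0$ that actually block those rays. Crucially $\Gamma_p$ is disjoint from $e$, since a point of $e$ other than $p$ subtends at $p$ only a horizontal direction, which is not interior to $Q$. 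A compactness argument over $I$ (using that $F_0$ has only finitely many vertices and edges) lets me pass to a positive-length subinterval $I'\subseteq I$ and a single closed set $\Gamma\subseteq F_0\setminus e$ that blocks every $Q$-ray from every point of $I'$.

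The surgery I would then carry out is to delete the open interval $I'$ from $e$. The resulting set $F'$ is still a finite union of segments — $e$ breaks into at most two shorter edges, and $F_0$ may break into two tree components, which is allowed for a general graphical OC — and $\lambda_1(F')=\lambda_1(F)-|I'|<\lambda_1(F)$. So it suffices to prove that $F'$ is still an OC for $S$, contradicting the minimality of $F$. By Proposition~\ref{OC_component_lemma} (or directly), the only lines that could now fail to be blocked are those meeting $S$ that previously met $F$ only inside $I'$. Such a line $L$ passes through some $p\in I'$, and either \emph{(a)} one of the two rays of $L$ from $p$ points into $Q$ or into the opposite quadrant $-Q$, or \emph{(b)} both rays of $L$ from $p$ run into the complementary pair of quadrants, i.e.\ $L$ crosses $e$ transversally to $Q$ (up-right/down-left in the local frame). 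In case \emph{(a)} the relevant ray meets $\Gamma\subseteq F'$, so $L$ is still blocked, at no extra effort.

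Case \emph{(b)} is where the work lies, and I expect it to be the main obstacle. Here I would use that $F_0$ is an MST on the vertices of its convex hull (Lemma~\ref{steiner}), so the standard $2\pi/3$ angle conditions constrain how $F_0$ leaves the two endpoints of $e$, together with the convexity of $S$. One expects $\Gamma$ to be organizable into a connected arc of $F_0$ which, together with the surviving parts of $e$, encloses a region having $I'$ on its lower boundary; the geometry should then force any line of $S$ crossing $I'$ transversally to $Q$ either to also meet this arc (hence a surviving part of $F_0$, so $F'$ stays an OC), or else to witness that $e$ could already have been replaced by a strictly shorter pair of edges re-attaching to $\Gamma$ — exactly the reduction illustrated in Figure~\ref{splitedge}(b). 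Either alternative contradicts minimality. Thus removing the hidden interval $I'$ is obviously length-decreasing and obviously preserves blocking of the $Q$-direction lines; the entire difficulty is the bookkeeping for the transversal family of lines through $I'$, and that is precisely where the convex position of $S$ and the MST structure of the component must be invoked.
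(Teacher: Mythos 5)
Your reduction to lines that formerly met $F$ only inside the excised interval $I'$ is fine, and your case (a) is fine (in fact the compactness step is unnecessary: any ray into an open quadrant lies strictly off the line of $e$, so every blocking point is automatically in $F_0\setminus e$). But the proposal has a genuine gap exactly at your case (b), and it is not ``bookkeeping'': with your surgery --- simply deleting the open subinterval $I'$ --- there is no mechanism left to block lines that cross $I'$ with their rays in the complementary (up-right/down-left) pair of quadrants. Invisibility is a one-quadrant hypothesis and says nothing about those rays, and neither the MST structure of the component (Lemma~\ref{steiner}) nor the convexity of $S$ prevents such a line from meeting $S$ while missing $F\setminus e$; already in the situation of Figure~\ref{splitedge}(a), deleting the invisible subinterval $ps_1$ would in general leave transversally sloped lines through the resulting hole completely unblocked. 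Your fallback alternative (``\dots or else witness that $e$ could already have been replaced by a strictly shorter pair of edges re-attaching to $\Gamma$'') is precisely the statement to be proved, so at the crucial step the plan is circular. In short, the deletion surgery discards too much, and the deferred case is where the entire content of the theorem lies.

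The paper avoids this trap with a different surgery, which is the ingredient you are missing: it does not excise the invisible portion, it \emph{tilts the whole edge toward the hidden blocker}. Invisibility of the points of $I$ on one side of $p$ yields a point $q\in F$ on the ray from $p$ making a small angle $\varepsilon>0$ with the perpendicular to $e$, with the projection of $pq$ onto $e$ contained in $I$. One then replaces $e=v_1v_2$ by the two segments $v_1w_1$ and $w_2v_2$, where $w_i$ is the orthogonal projection of $v_i$ onto the line extending $pq$ (Figure~\ref{splitedge}(b)). This shortens $F$ by $(1-\cos\varepsilon)\,|v_1v_2|>0$ while retaining a near-copy of $e$: any line crossing the interior of $e$ must cross the connected path $v_1w_1\cup w_1w_2\cup w_2v_2$, so the transversal family that defeats your construction is still blocked by the two new segments, and only lines through the short middle piece $w_1w_2$ near $p$ (which lies on the line through the blocker $q$) need the invisibility hypothesis at all --- rather than every line through a positive-length hole, as in your version. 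If you rework the argument, the length saving must come from this projection/tilting of $e$, not from removing the invisible interval.
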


\begin{proof} Suppose the edge $e=v_1v_2$  of a graphical OC $F$ is not fully visible. We will
show that $F$ is non-minimal.

Let $F_0$ be the connected component of $F$ containing $e$. Without loss of generality, we can assume that $e$ is
horizontal. Choose a subinterval $I$ of $e$ such that every
point of $I$ is invisible with respect to, say, the top right
quadrant.

Let $p$ be an interior point of $I$. Then there exists an angle
$\varepsilon >0$ such that the ray from $p$ at an angle of
$\varepsilon$ to the left of the upwards vertical ray from $p$
meets $F$ at a point $q$, and the projection of $pq$ onto $e$
lies in $I$. The existence of such an $\varepsilon$ and $q$
follow from the fact that all points in the interval $I$ to the
left of $p$ are also invisible (see Figure~\ref{splitedge}(b)).
The OC
$F$ can be shortened as follows. Replace $v_1v_2$ by two line
segments $v_1w_1$ and $w_2v_2$ where each $w_i$ is the
projection of $v_i$ onto the line extending $pq$. Clearly
$|v_1w_1|+|w_2v_2|<|v_1v_2|$, since $\varepsilon >0$. But $F\cup
v_1w_1 \cup w_2v_2\setminus v_1v_2$ is also an OC
since every line that intersects $w_1w_2$ meets $F$ by
construction.
 \end{proof}

\begin{Cor}
\label{cor:longsteiner} Let $F$ be a minimum OC. Let
$T$ be a full Steiner component of
 $F$ containing two adjacent Steiner vertices $s_1$ and $s_2$. Then no edge of $T$ incident with $s_1$
or $s_2$ is longer than $s_1s_2$.
\end{Cor}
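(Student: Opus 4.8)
The plan is to derive the statement directly from Theorem~\ref{thm:fullyvisible} (full visibility of every edge of a minimum OC): I will show that if an edge incident with $s_1$ or $s_2$ were longer than $s_1s_2$, then some edge of $F$ would fail to be fully visible. Since $s_1$ and $s_2$ are Steiner vertices, each has degree $3$ with its incident edges meeting in angles $2\pi/3$. Normalise coordinates so that $s_1=(0,0)$ and $s_2=(L,0)$, where $L=|s_1s_2|$; then the two non-central edges at $s_1$ emanate in directions $2\pi/3$ (up-left, to a vertex $a$) and $4\pi/3$ (down-left, to $b$), and the two at $s_2$ emanate in directions $\pi/3$ (up-right, to $a'$) and $5\pi/3$ (down-right, to $b'$). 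The central edge $s_1s_2$ has length exactly $L$, so it suffices to bound the four outer edges, and since the argument below will use only the local skeleton $\{s_1s_2,\,s_1a,\,s_2a'\}$ (extra parts of $T$ and $S$ can only help), the same argument with the obvious relabelling of these four edges handles all four cases; so it is enough to treat one, say $e_2^{+}:=s_2a'$.

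Assume for a contradiction that $\ell:=|s_2a'|>L$; I claim the edge $e_1^{+}:=s_1a$ is not fully visible. Take an interior point $p$ of $e_1^{+}$ at distance $\eta$ from $s_1$, with $\eta$ small, and consider the quadrant at $p$ (determined by the line through $e_1^{+}$ and its perpendicular at $p$) that contains $s_2$; in these coordinates this ``forward-up'' quadrant is the open angular wedge at $p$ bounded by the ray toward $s_1$ (which for every such $p$ has direction exactly $-\pi/3$) and the ray in direction $\pi/6$. Seen from $p$, the central edge $s_1s_2$ subtends the angular interval from direction-to-$s_1$ $(=-\pi/3)$ up to direction-to-$s_2$ (a small negative angle), and the edge $s_2a'$ subtends the interval from direction-to-$s_2$ up to direction-to-$a'$; thus the two segments abut, as seen from $p$, exactly along the ray through $s_2$, leaving no angular gap. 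An elementary computation shows that direction-to-$a'$ exceeds $\pi/6$ precisely when $\ell>L$ (at $p=s_1$ one has $\arctan\bigl((\sqrt3\,\ell/2)/(L+\ell/2)\bigr)>\pi/6$ iff $\ell>L$, and the strict inequality persists for $\eta$ small by continuity). Hence $s_1s_2$ together with $s_2a'$ block every ray from $p$ into the forward-up wedge, so $p$ is invisible with respect to that quadrant. Since this holds for all such $p$, the edge $e_1^{+}$ contains a positive-length subinterval of points all invisible with respect to the same quadrant, so it is not fully visible, contradicting Theorem~\ref{thm:fullyvisible}.

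Therefore $|s_2a'|\le L$, and the same argument applied after relabelling gives $|s_2b'|,|s_1a|,|s_1b|\le L$, which is the assertion. The only point requiring care — and hence the main obstacle — is the geometric bookkeeping in the middle step: verifying that, as seen from $p$, the segments $s_1s_2$ and $s_2a'$ meet with no angular gap along the ray through $s_2$, and that the strict inequality direction-to-$a'>\pi/6$ (equivalently $\ell>L$) holds uniformly for all $p$ in a fixed initial subinterval of $e_1^{+}$. Both reduce to elementary trigonometry once the coordinate frame above is fixed.
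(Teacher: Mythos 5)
Your proof is correct and takes essentially the same route as the paper's: assuming an edge at $s_2$ is longer than $s_1s_2$, you show that the edge at $s_1$ lying on the same side of the line through $s_1s_2$ fails to be fully visible at points near $s_1$, contradicting Theorem~\ref{thm:fullyvisible}. The paper leaves the angular bookkeeping to its figure, while you verify it explicitly (the threshold $\ell>L$ forcing the direction to $a'$ past $\pi/6$), but the substance of the argument is identical.
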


\begin{proof} Suppose there exists a vertex $t_2$ adjacent to $s_2$ such that $|t_2s_2| > |s_1s_2|$.
Let $t_1$ be a vertex adjacent to $s_1$ on the same side of the line through $s_1s_2$ as $t_2$.
Since each of the angles between incident edges at $s_1$ and $s_2$ is $2\pi/3$ it follows that
$t_1s_1$ is not fully visible (as in Figure~\ref{splitedge}(a)), contradicting the minimality of
$F$. \end{proof}

\begin{Cor}
\label{cor:2steinerpts} Let $F$ be a minimum OC. If
$T$ is a full Steiner component of
 $F$ then $T$ contains at most two Steiner vertices.
\end{Cor}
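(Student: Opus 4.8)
The plan is to argue by contradiction: suppose $T$ is a full Steiner component of the minimum OC $F$ with at least three Steiner vertices, and show that some edge of $T$ fails to be fully visible, contradicting Theorem~\ref{thm:fullyvisible}. Since the Steiner vertices of a full Steiner tree induce a connected subtree, and any tree on $\geq 3$ vertices has a vertex of degree $\geq 2$, there are Steiner vertices $s_1,s_2,s_3$ with $s_1s_2$ and $s_2s_3$ edges of $T$. Applying Corollary~\ref{cor:longsteiner} to the adjacent pair $\{s_1,s_2\}$ gives $|s_2s_3|\leq|s_1s_2|$, and applying it to $\{s_2,s_3\}$ gives $|s_1s_2|\leq|s_2s_3|$; hence $|s_1s_2|=|s_2s_3|=:\ell$, and moreover every edge of $T$ incident with $s_1$ has length at most $\ell$.

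Next I would coordinatize, using that the three edges at each Steiner vertex meet at $2\pi/3$ angles: put $s_1=(0,0)$, $s_2=(\ell,0)$, and, up to reflection, $s_3=(3\ell/2,\ell\sqrt3/2)$. Let $a$ be the neighbour of $s_1$ in $T$ on the same side of the line $s_1s_2$ as $s_3$ (so the edge $s_1a$ leaves $s_1$ in direction $2\pi/3$), and let $e$ be the neighbour of $s_3$ whose incident edge $s_3e$ also points in direction $2\pi/3$ (the three neighbours of $s_3$ are $s_2$, one reached in direction $0$, and $e$ reached in direction $2\pi/3$). Both $s_1a$ and $s_3e$ are genuine edges of positive length, since $s_1$ and $s_3$ have degree $3$ in $T$. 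The main claim is that for every interior point $p$ of $s_1a$ sufficiently close to $s_1$, every ray from $p$ lying in the open quadrant (determined by $s_1a$ and its perpendicular at $p$) that contains the direction from $p$ to $s_2$ meets $s_1s_2\cup s_2s_3\cup s_3e\subseteq T$; hence $p$ is invisible with respect to that one quadrant. This is verified by a direct angle computation: the segment $s_1s_2$ covers the directions at $p$ from that toward $s_1$ up to that toward $s_2$; the segment $s_2s_3$ continues the covered arc up to the direction toward $s_3$, which (using $|s_2s_3|=\ell$) lies strictly below the perpendicular direction once $p\neq s_1$; and the remaining sliver up to the perpendicular is covered by $s_3e$, whose endpoint $e$ subtends at $p$ an angle exceeding the perpendicular direction precisely while $p$ is within arc‑length $|s_3e|$ of $s_1$ (here one also uses $|s_1a|\leq\ell$ to keep $p$ in the valid range). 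This holds on a subinterval of $s_1a$ of positive length, so $s_1a$ is not fully visible, contradicting Theorem~\ref{thm:fullyvisible}; therefore $T$ has at most two Steiner vertices.

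The main obstacle is exactly the forced equality $|s_1s_2|=|s_2s_3|$ coming from the two‑sided use of Corollary~\ref{cor:longsteiner}: a strict inequality would already let the single edge $s_2s_3$ block the quadrant (this is essentially the content of the proof of Corollary~\ref{cor:longsteiner}), but in the equality case $s_2s_3$ only reaches the perpendicular direction in the limit $p\to s_1$, so one must invoke the extra edge $s_3e$ and check carefully that the invisible region genuinely has positive length rather than collapsing to the single point $s_1$. One must also confirm that all the invisible points produced are invisible with respect to the \emph{same} quadrant (they are, by construction), and dispatch the routine non‑degeneracy observations — for instance that $p$ does not lie on the line through $s_3e$, which holds because the edges $s_1a$ and $s_3e$ are parallel but distinct.
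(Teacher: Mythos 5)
Your proof is correct and takes essentially the same route as the paper: reduce to a path $s_1s_2s_3$ of adjacent Steiner vertices, force $|s_1s_2|=|s_2s_3|$ via Corollary~\ref{cor:longsteiner}, and then contradict Theorem~\ref{thm:fullyvisible} by showing an edge incident with $s_1$ is not fully visible. Your explicit coordinate check that, in the equality case, the sliver of directions left uncovered by $s_1s_2\cup s_2s_3$ is blocked by the third edge at $s_3$ (for a positive-length interval of points on $s_1a$) correctly fills in the detail that the paper's one-line visibility claim leaves implicit.
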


\begin{proof} Suppose $T$ contains three Steiner vertices $s_1$, $s_2$,
$s_3$, such that $s_1$ and $s_3$ are both adjacent to $s_2$. By
Corollary~\ref{cor:longsteiner}, $|s_1s_2|=|s_2s_3|$. But then
one of the other edges incident with $s_1$ is not fully visible.
\end{proof}

The following corollary is also an immediate consequence of Theorem~\ref{thm:fullyvisible} and is a potentially useful tool for
establishing the non-minimality of certain OCs. For example, in the OC in Figure~\ref{connectedOC} the critical lines through $AL$ and $IK$ both fail
the condition of the corollary, implying that the OC is not a minimum graphical OC for the associated region $S$. This is verified by the shorter
OC illustrated in Figure~\ref{disconnectedOC}.

\begin{Cor}
\label{cor:1critical} If an edge of a component of  a minimum OC
meets a critical line $L$ at an angle of $\pi/2$, then no other edge of that component meets $L$.
\end{Cor}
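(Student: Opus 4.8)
The plan is to place the perpendicular edge in convenient coordinates and then contradict Theorem~\ref{thm:fullyvisible}. Let $e=uv$ be the edge of the component $C$ that meets the critical line $L$ at an angle of $\pi/2$, and choose coordinates so that $L$ is the $x$-axis and $v$ is the origin, so $u=(0,h)$ with $h>0$. Since $L$ is a critical line it is a limit of violating lines, so it inherits the half-plane property of Proposition~\ref{OC_component_lemma}: every component of $F$, in particular $C$, lies in one of the two closed half-planes bounded by $L$; say $C\subseteq H:=\{y\ge 0\}$. I would first record that $v$ has degree $1$ in $C$. Indeed $C$ is an MST on $V_{\bar C}$ by Lemma~\ref{steiner}, so a second edge of $C$ at $v$ would make an angle of at least $2\pi/3$ with $e$; but both edges lie in $H$ and $e$ points straight up, so any edge of $C$ at $v$ makes an angle of at most $\pi/2$ with $e$. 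Hence $e$ is the only edge of $C$ incident with $v$.

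Now suppose, for contradiction, that some other edge $e'$ of $C$ also meets $L$, at a point $q$; by the previous paragraph $q\ne v$, and after a reflection we may take $q=(a,0)$ with $a>0$. Then $\bar C$ contains $u,v,q$, hence the triangle $\triangle uvq$, while $\bar C\subseteq H$, so the face $\partial\bar C\cap L$ is a single edge $M$ of the convex polygon $\partial\bar C$ containing both $v$ and $q$. Applying the argument of Lemma~\ref{conn_hull_lemma}(c) to the MST $C$, the edge $M$ is either an edge of $C$ or has only its endpoints in $C$. But $v$ carries the perpendicular edge $e$: the MST angle condition forbids $v$ from lying in the relative interior of $M$ (that would give $v$ degree $3$ with two collinear incident edges) and forbids $v$ from being an endpoint of an $M$ that is itself an edge of $C$ (that would give an angle of $\pi/2<2\pi/3$ at $v$); the same computation at $q$ forbids $q$ from lying in the relative interior of $M$. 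Hence $M=[v,q]$, with $v$ and $q$ as its endpoints, and the open segment $(v,q)$ contains no point of $C$ — so $M$ is a \emph{free} edge of $\partial\bar C$ lying on the critical line $L$.

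It then remains to contradict full visibility. Let $\Pi$ be the path in the tree $C$ joining $v$ to $q$; it begins with $e$, ends with an edge incident with $q$, and together with $M$ bounds a closed region $R\subseteq\bar C\subseteq H$ whose boundary is $\Pi\cup M$. The plan is to produce an edge of $\Pi$ that is not fully visible: for a point $p$ on an edge of $\Pi$ sufficiently close to an endpoint, consider the quadrant of $p$ that points into $R$ and toward $M$; every ray from $p$ in that quadrant enters $R$ and must leave $\partial R=\Pi\cup M$, and a ray leaving through the relative interior of $M$ continues into $\{y<0\}$, which contains no point of $C$ — so $p$ can fail to be invisible only if some ray escapes through the interior of $M$ before meeting $\Pi$, and for $p$ close enough to the endpoint the part of $\Pi$ separating $p$ from that stretch of $M$ blocks every such ray, giving an invisible sub-interval. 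The main obstacle is making exactly this last step robust: a naive choice of $p$ on $e$ near $v$ does \emph{not} work (a steep ray escapes through the interior of the free edge $M$), so one must choose the endpoint and the quadrant correctly and control how $\Pi$ can ``wiggle'' inside $\bar C$. I expect the cleanest way to handle this is to feed in the classification of critical lines in Lemma~\ref{CL_lemma}: since $M$ is a free edge of $\partial\bar F$ lying on a critical line, it is a critical edge of a specific type, which constrains the positions of the components of $F$ on and near $L$ and hence how $\Pi$ is situated relative to $M$, after which the remaining argument is bookkeeping rather than a new geometric idea.
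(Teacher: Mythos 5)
Your reduction to Theorem~\ref{thm:fullyvisible} is exactly the intended route (the paper offers no separate argument, presenting the corollary as an immediate consequence of that theorem), and your preliminary steps are sound: $v$ has degree $1$ in $C$, the component lies in a closed half-plane $H$ bounded by $L$, and the segment $M=[v,q]$ of $L$ meets $C$ only at $v$ and $q$, so $\Pi\cup M$ is a Jordan curve bounding a region $R\subseteq H$. But the decisive step --- exhibiting a positive-length interval of invisible points --- is left unproved. You choose the quadrant at $p$ that points ``into $R$ and toward $M$'' and then concede yourself that steep rays escape through the relative interior of $M$; and the proposed repair, feeding in Lemma~\ref{CL_lemma}, cannot close this gap: that lemma constrains where the various components of $F$ are tangent to $L$, but says nothing about how the path $\Pi$ winds around \emph{inside} the single component $C$, which is what your escape argument would need to control. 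As it stands the proof is incomplete at its only nontrivial point.

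The gap closes if you simply choose the other quadrant: at any interior point $p=(0,y_p)$ of $e$, take the quadrant determined by the line extending $e$ and the perpendicular to $e$ through $p$ that points toward $q$ and away from $L$, i.e.\ $\{x>0,\ y>y_p\}$ in your coordinates. A ray from $p$ in this quadrant has strictly increasing distance from $L$, so it can never reach $M\subset L$. Moreover, since no edge of the tree $C$ meets the relative interior of $e$, the interior of $R$ lies locally on the $q$-side of $e$ along the whole of $e$ (it does so near $v$, by a parity count along a vertical line through a point just above the open segment $(v,q)$, and the side cannot switch along $e$), so such a ray enters the interior of $R$ immediately; being unbounded it must leave $R$, and the only available exit is through $\Pi\subseteq C=F_0$. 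Hence every interior point of $e$ is invisible with respect to this single quadrant, so $e$ is not fully visible, contradicting Theorem~\ref{thm:fullyvisible}. With this choice no appeal to Lemma~\ref{CL_lemma} is needed, and there is no need to restrict $p$ to a neighbourhood of an endpoint.
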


We finish the section by revisiting the OCP for a triangle.  Theorem~\ref{contriangle} establishes the structure of the minimum connected OC.  For the graphical OCP it is not even known if there is a bound on the number of components a solution can have, let alone the structure of such a solution.   We end the paper by presenting one result regarding minimum graphical OCPs on a triangle.

\begin{Thm}\label{bd2triangle}
For any  triangular region $S=\triangle abc$, the minimum OC on $S$ having at most two components is the MST on $\{a,b,c\}$.
\end{Thm}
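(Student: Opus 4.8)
It suffices to show that every opaque cover $F$ for $S=\triangle abc$ with exactly two components satisfies $\lambda_1(F)\ge \mathrm{st}(\{a,b,c\})$: the MST on $\{a,b,c\}$ is a one-component OC of precisely this length by Theorem~\ref{contriangle}, so this establishes the claim. Let $F=C_1\cup C_2$ be a minimum two-component OC. By Lemma~\ref{steiner} each $C_i$ is an MST on the vertices of $\partial\bar C_i$, so $\lambda_1(C_i)=\mathrm{st}(V_{\bar C_i})$ and the union $\bar C_1\cup\bar C_2$ already blocks every line through $S$. By Corollary~\ref{non-boundary_critical_line}, $F$ has a critical line $L$ of Type~II or Type~III, and by Proposition~\ref{OC_component_lemma}(a) each $C_i$ lies in one of the closed half-planes bounded by $L$. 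Since a set confined to a closed half-plane cannot meet a line lying strictly in the opposite open half-plane, while all lines parallel to $L$ that meet $S$ must be blocked, $C_1$ and $C_2$ must lie on opposite sides of $L$; in particular $L$ meets $S$, and it partitions $S$ between the two components.

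\textbf{The Type~III case.} Here $L$ crosses the interior of the triangle, hence separates one vertex --- say $a$, served by $C_1$ --- from the opposite side $bc$, served by $C_2$, meeting the triangle in a segment $pq$ with $p\in ab$, $q\in ac$; thus $C_1$ must cover the cap $\triangle apq$ and $C_2$ the trapezoid $pbcq$. Because $C_1$ blocks every line parallel to $L$ through $\triangle apq$, including the supporting line through $a$ and lines arbitrarily close to $L$, the convex set $\bar C_1$ reaches $L$ and spans the full height of the cap; symmetrically $\bar C_2$ reaches $L$ and spans the height of the trapezoid. The points $v_1,v_2,v_3\in\mathcal F$ of Lemma~\ref{CL_lemma} are exactly where $C_1$ and $C_2$ meet $L$, and one expects all critical points on $L$ to be FCPs (a non-free critical point would, following the discussion after Figure~\ref{nonfree}, be cut out by Type~I lines extending edges of $S$ and hence lie at a vertex of $S$, which is excluded); granting this, Theorem~\ref{TypeIII_lemma} applies and pins down the incident angles and the ratio of the distances from $v_2$ to the feet $v_1,v_3$. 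The plan is then to bound $\lambda_1(C_1)=\mathrm{st}(V_{\bar C_1})$ below by the length of a shortest connected set joining $a$ to the segment $\overline{v_1v_3}\subset L$ that blocks the required parallel lines, and $\lambda_1(C_2)$ below by the length of a shortest connected set joining $\{b,c\}$ to that same portion of $L$; summing the two bounds and then sliding the common base segment along $L$ --- the relevant Steiner length being convex in the slide parameter, exactly as in the expansion argument used to prove Theorem~\ref{contriangle} --- should yield $\lambda_1(C_1)+\lambda_1(C_2)\ge\mathrm{st}(\{a,b,c\})$.

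\textbf{The Type~II case.} Now $L$ is tangent to $S$ at a vertex, say $a$, while $\bar F$ crosses $L$; one component lies on the side of $S$ and the other on the far side, the far-side component being needed only to block lines through $S$ that pass near $a$ and continue across $L$. Here I would argue, using Theorem~\ref{TypeII_CL} and Corollary~\ref{cor:1critical}, that the far-side component can be deleted and its obligation absorbed near $a$ at no increase in total length, reducing the problem to a connected OC and hence to Theorem~\ref{contriangle}.

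\textbf{Main obstacle.} The crux is the combining step in the Type~III case. In the connected situation one simply splices the tree together; here $C_1$ and $C_2$ are disjoint and cannot be joined along $L$ for free --- a connector on $L$ would cost an additional segment of length $|v_1v_3|$ (or $|v_1v_2|$). The argument must therefore show that this would-be connector length is already present inside $C_1$ and $C_2$: this is precisely what the balance relations of Theorem~\ref{TypeIII_lemma}, combined with the convexity of Steiner-tree length as the base segment on $L$ is translated, are designed to provide, and it must be verified in each degree configuration of $v_1,v_2,v_3$ (degree $1$ or $2$, cases (a) and (b) of Theorem~\ref{TypeIII_lemma}). Equivalently, one must rule out that a disjoint pair of components leaves a gap small enough to keep all lines through the triangle blocked yet save length over the Steiner tree; making this rigorous in full generality is the technical heart of the proof.
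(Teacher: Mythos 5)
Your write-up is a proof plan rather than a proof: the two steps you yourself flag as the heart of the matter are left unestablished. In the Type~III case the combining step --- showing that the lengths of the two components already account for the ``would-be connector'' along $L$ --- is only described (``should yield'', ``must be verified in each degree configuration''), and it is made to rest on Theorem~\ref{TypeIII_lemma}, which applies only when the critical points on $L$ are FCPs; your justification of freeness (``one expects \dots'') appeals to the unproven belief discussed after Figure~\ref{nonfree}, which the paper explicitly lists as an open question, so invoking Theorem~\ref{TypeIII_lemma} here is not legitimate. The Type~II case is likewise only a sketch (``I would argue \dots''). So there is a genuine gap exactly where the work has to happen, and the variational/convexity route you propose is considerably harder than what is needed.

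The paper's proof avoids all of this with a short replacement argument whose only structural input is the tangency pattern of Lemma~\ref{CL_lemma}. Writing $C_a$ for the component containing $a$ and $C_{bc}$ for the one containing $b$ and $c$, the critical line $L$ separating the two components is of Type~III (so your Type~II case does not arise), and it carries consecutive tangency points $v_1,v_2,v_3$ with $v_1,v_3$ in one component and $v_2$ in the other. If $v_1,v_3$ belong to $C_{bc}$, then $C_{bc}$ is an OC for $\triangle v_2bc$, so by Theorem~\ref{contriangle} the MST $T_2$ on $\{v_2,b,c\}$ satisfies $|T_2|\le\lambda_1(C_{bc})$; since $v_2$ lies in $C_a$, the set $T_2\cup C_a$ is a \emph{connected} OC for $S$ of length at most $\lambda_1(F)$. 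If instead $v_1,v_3$ belong to $C_a$, then $v_2$ lies on the segment $v_1v_3\subset\bar C_a$, so the connected set $C_a$ has length at least $|av_2|$, and the segment $av_2$ together with $C_{bc}$ is again a connected OC no longer than $F$. In either case minimality forces $F$ to be connected, and Theorem~\ref{contriangle} finishes the proof. No angle balance conditions, no convexity-under-sliding of Steiner length, and no freeness of critical points are required; if you want to repair your argument, replace your Type~III analysis by this replacement step and drop the Type~II discussion.
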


\begin{proof}
Let $F$ be a minimum 2-component OC for $S=\triangle abc$, and, without loss of generality, suppose that component $C_a$
contains $a$ and component $C_{bc}$ contains $b$ and $c$.
Let $L$ be the critical line separating the two connected components, with consecutive tangent points $v_1, v_2, v_3$, and note that this must be the only critical line, and it will be of Type III. By  Lemma~\ref{CL_lemma} $v_1$ and $v_3$ both belong to one connected component of $F$, and $v_2$ to the other.

Suppose $v_1$ and $v_3$ both belong to $C_{bc}$. Then  $C_{bc}$ is an OC for $\triangle v_2 bc$ (since $v_2$ lies on the line segment $v_1v_3$). If $T_2$ is a MST for $\triangle v_2 bc$ then, by Theoerem~\ref{contriangle}, $|T_2| \leq |C_{bc}|$, and $T_2 \cup C_a$ is a connected OC for $S$, contradicting the minimality of $F$.

Similarly, if $v_1$ and $v_3$ both belong to $C_{a}$, then $|v_2a|\leq |C_{a}|$, and again $F$ is non-minimal.

Thus $F$ can have only one component, and so by Theorem~\ref{contriangle} must be a MST on $S$.
\end{proof}

It is possible to prove stronger versions of Theorem~\ref{bd2triangle}, for OCs with more than two components (for example, the case where there are up to three components). However the proofs appear to become increasingly technical, and involve eliminating possibilities where some critical points are not free (such as the OC in Figure~\ref{nonfree}(a)). Currently there is no known method for generalising Theorem~\ref{bd2triangle} for an arbitrarily large number of components. Nevertheless, we believe the following conjecture to be true.
\begin{Conj}
The minimum graphical OC for a triangle is the MST on its vertices.
\end{Conj}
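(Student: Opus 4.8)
The plan is to prove, by induction on the number of connected components, that every minimum graphical OC for $S=\triangle abc$ has length at least $\mathrm{st}(\{a,b,c\})$; since the MST on $\{a,b,c\}$ is itself a graphical OC for $S$, this forces equality and the conjecture follows. The base case of one component is Theorem~\ref{contriangle}, and the two-component case is Theorem~\ref{bd2triangle}; the idea for the inductive step is to imitate the proof of Theorem~\ref{bd2triangle}, using a critical line to peel off part of $F$ and re-attach it as an MST, so as to decrease the number of components without increasing the length. The recurring tool is the elementary fact that \emph{if $X\subseteq\Re^2$ is connected and $p_1,\dots,p_m\in X$, then $X$ is an OC for $\mathrm{conv}\{p_1,\dots,p_m\}$}: a line through the interior of $\mathrm{conv}\{p_i\}$ has some $p_i$ strictly on each side, and a connecting arc in $X$ must cross it. In particular a connected set containing $a,b,c$ is an OC for $S$, so it would suffice to transform $F$ into a connected OC of length at most $|F|$ and then invoke Theorem~\ref{contriangle}.

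For the inductive step, among all minimum-length graphical OCs for $S$ let $F$ be one with the fewest components, and suppose it has $k\ge 2$. By Corollary~\ref{non-boundary_critical_line}, $F$ has a critical line $L$ of Type~II or Type~III; by Lemma~\ref{CL_lemma} there are consecutive points $v_1,v_2,v_3$ on $L$ with $v_2$ on the segment $v_1v_3$, where $v_2$ lies in a component $C_0$ on one side of $L$ and $v_1,v_3$ lie in component(s) on the other side. Let $H$ be the closed half-plane on the $\{v_1,v_3\}$ side and $G=F\cap H$, a union of whole components of $F$. When $G$ is a single component the reduction is essentially that of Theorem~\ref{bd2triangle}: one recognises $G$ as a connected OC for the sub-triangle of $S$ spanned by $v_2$ and the two vertices of $S$ in $H$, replaces $G$ by the MST $T$ on those three points (so $|T|\le|G|$ by Theorem~\ref{contriangle}), and observes that $T$ meets $C_0$ at $v_2$, so $T\cup(F\setminus G)$ is a connected set containing $a,b,c$, hence an OC for $S$ of length $\le|F|$ with fewer components --- contradicting the minimality of $k$.

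The main obstacle is the case in which $G$ has \emph{more than one} component. Then the `connected-set-containing-the-vertices' shortcut of Theorem~\ref{bd2triangle} is unavailable: after replacing $G$ by some tree $T$ the set $F'$ is still disconnected, so one must check directly that $T$ blocks every line through $S$ that $G$ alone was responsible for --- and the family of such lines need not consist of the lines through any fixed sub-triangle, which is precisely the difficulty remarked on after Theorem~\ref{bd2triangle}. The natural remedies --- choosing $L$ to be an extreme critical line of $F$, or recursively peeling within $H$ --- each reintroduce the same issue at the next level. Two further points need care: non-free critical points can occur (as in Figure~\ref{nonfree}(a)), so the optimality relations of Theorems~\ref{TypeI_CL}--\ref{TypeIII_lemma} cannot be assumed and the comparison $|T|\le|G|$ must be argued directly; and one should either rule out, or work around, the possibility that some vertex of $S$ fails to lie in $F$. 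Finally, it is worth noting that no soft argument can stand in for the structural one here: the general bound $\lambda_1(F)\ge\rho(S)/2$ of Lemma~\ref{lembounds2} is strictly weaker than $\mathrm{st}(V_S)$ for triangles sufficiently close to equilateral, so an exchange argument of the above type appears unavoidable.
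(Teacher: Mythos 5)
There is a genuine gap --- and it is worth being clear that the statement you were asked about is stated in the paper as a \emph{conjecture}, not a theorem: the paper proves only the connected case (Theorem~\ref{contriangle}) and the case of at most two components (Theorem~\ref{bd2triangle}), and explicitly remarks that no method is known for generalising to an arbitrary number of components, the difficulty being the increasingly technical case analysis and the possible presence of non-free critical points. Your proposal reproduces exactly this state of affairs: the base cases you cite are the paper's two theorems, the peeling idea via Corollary~\ref{non-boundary_critical_line} and Lemma~\ref{CL_lemma} mirrors the paper's two-component argument, and the obstacle you name --- the inductive step when the peeled-off set $G$ has more than one component --- is precisely the open problem. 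Acknowledging an obstacle does not discharge it, so what you have is a strategy outline, not a proof.

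Moreover, even the case you claim to settle in the inductive step (where $G=F\cap H$ is a single component) does not go through as written once $k\ge 3$. Your justification that the modified set is an OC is the ``connected set containing $a,b,c$'' criterion, but $T\cup(F\setminus G)$ is connected only when $F\setminus G$ is a single component, i.e.\ when $k=2$; for $k\ge 3$ the tree $T$ attaches only to $C_0$ at $v_2$, the other components on that side remain separate, and the third vertex of $S$ need not even lie in $F\setminus G$. So you would still have to verify directly that $T$ blocks every line that $G$ alone was responsible for --- which is the same unresolved difficulty as in your disconnected-$G$ case. Two further points you flag (non-free critical points, so the relations of Theorems~\ref{TypeI_CL}--\ref{TypeIII_lemma} cannot be invoked; and the claim that $G$ is an OC for a sub-triangle spanned by $v_2$ and two vertices of $S$, which implicitly assumes those vertices lie in $H$ and that no component across $L$ helps block those lines) also remain unargued. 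In short, the conjecture remains open both in the paper and in your attempt, and the missing idea is exactly the one the authors identify: a way to reduce the number of components without assuming the peeled part is connected or that its critical points are free.
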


\section{Conclusion}

In this paper we have studied the Opaque Cover Problem (OCP) for any polygonal region $S$ in the Euclidean plane. This is the problem of finding a minimum-length set $F$ which intersects every straight line passing through $S$, where length is defined using the 1-dimensional Hausdorff measure. A key conjecture relating to this problem is the following:

\noindent  \textbf{The Graphical Conjecture:} A minimum OC for a polygonal region $S$ in the Euclidean plane is graphical, that is, composed of a finite number of straight-line segments.

In light of this conjecture, we have limited ourselves to the study of minimum graphical OCs. The main part of this study is divided into two sections, first a study of minimum connected graphical OCs (Section~4) and then minimum graphical OCs with multiple components (Section~5). In each of these sections we characterise  structural properties of $F$ for each component, and strong geometric constraints on the locations of the extreme points of each component.

The most important structural property is that each connected component is a Steiner minimum tree on its extreme vertices, and there is a limit on the number of Steiner vertices in each component. Not only does this constrain the topology of each component of $F$, but it also allows us to make use of efficient software packages, such as GeoSteiner, for solving the Steiner tree problem to explicitly construct the components of $F$ once their extreme vertices are known.

The extreme points of each component have been shown to be either vertices of $\partial S$  or critical points. The locations of the critical points and the directions of their incident edges in $F$ are strongly constrained by  properties of the critical lines of $F$. Some of the key results of this type are those in Theorems~\ref{boundary-critical_connected}, \ref{TypeI_CL}, \ref{TypeII_CL} and \ref{TypeIII_lemma}. These theorems can be thought of as providing a mathematical toolbox for attempting to construct minimum OCs. The conjectured minimum connected solution for the example in Figure~\ref{connectedOC}, and both conjectured solutions in Figure~\ref{2C_OC_Figure} were constructed with the aid of the properties in these theorems.

Despite the properties established in this paper, our understanding of minimum OCs remains frustratingly incomplete. One of our long-term aims is to design an exact finite algorithm for constructing a minimum OC for any polygonal region. In the shorter term there are a number of key open questions, including The Graphical Conjecture above, still to be addressed:
\begin{enumerate}
  \item \emph{Is the minimum graphical OC for a triangular region $S$ the MST on $V_S$?} We conjecture that this is the case, and have proved it for OCs containing at most two connected components. It is currently unclear how to generalise this result to an arbitrarily large number of connected components. Answering this question may depend on being able to answer the next open question below.
  \item \emph{In a minimum graphical OC, under what conditions can there exist critical points that are not free?} All of the conjectured minimum OCs that the authors have seen have the property that any critical point that is not free is determined by two Type~I critical lines that are extensions of edges of $S$. It would be useful to resolve whether or not this always holds, since the geometric constraints  in this paper on the locations of the critical points of a minimum OC only apply to FCPs.
  \item \emph{Do critical lines of Type~II exist for any minimum graphical OC?} There are no known examples of conjectured minimum OCs containing a Type~II critical line.
\end{enumerate}

\end{document}